\newtheorem*{theorem*}{Theorem}
\newtheorem{theorem}{Theorem} 
\newtheorem{corollary}{Corollary} 
\newtheorem{proposition}{Proposition} 
\newtheorem{lemma}{Lemma} 
\newtheorem{example}{Example}
\newtheorem{claim}{Claim}
\newcommand{\C}{{\mathcal C}}
\newcommand{\E}{{\mathcal E}}
\newcommand{\cO}{{\mathcal O}}
\newcommand{\R}{{\mathcal R}}
\renewcommand{\S}{{\mathcal S}}
\renewcommand{\P}{{\mathcal P}}
\renewcommand{\O}{{\mathcal O}}
\newcommand{\ba}{{\boldsymbol a}}
\newcommand{\bv}{{\boldsymbol v}}
\newcommand{\bs}{{\boldsymbol s}}
\newcommand{\bw}{{\boldsymbol{w}}}
\newcommand{\bx}{{\boldsymbol{x}}}
\begin{document}
%
\title{Coding for Polymer-Based Data Storage}
%
%
%

\author{Srilakshmi~Pattabiraman,~\IEEEmembership{Student Member,~IEEE,}
        Ryan~Gabrys,~\IEEEmembership{Member,~IEEE,}
        and~Olgica~Milenkovic,~\IEEEmembership{Fellow,~IEEE}
\thanks{S. Pattabiraman, and O. Milenkovic are with the Department
of Electrical and Computer Engineering, University of Illinois, Urbana-Champaign, Urbana, IL, 61801 USA e-mails: sp16@illinois.edu, milenkov@illinois.edu.}
\thanks{R. Gabrys is with the University of California, San Diego, San Diego, CA, 92093, USA e-mail: ryan.gabrys@gmail.com.}
\thanks{The work was funded by the DARPA Molecular Informatics program, the SemiSynBio program of the NSF and SRC, and the
NSF CIF grant number 1618366. }%
\thanks{A part of this work was presented at the \textit{2019 IEEE Information Theory Workshop}, Visby, Gotland, Sweden. Another part was submitted to the \textit{2020 IEEE International Symposium
on Information Theory}.}%
\thanks{The authors acknowledge the insightful comments made by the reviewers that significantly improved the content of the manuscript and presentation of the results.}%
}

%
%

\markboth{March~2020}%
{Pattabiraman \MakeLowercase{\textit{et al.}}: Coding for Polymer-Based Data Storage}
%



\maketitle

\begin{abstract}
Motivated by polymer-based data-storage platforms that use chains of binary synthetic polymers as the recording media and read the content via tandem mass spectrometers, we propose a new family of codes that allows for both unique string reconstruction and correction of multiple mass errors. We consider two approaches: The first approach pertains to asymmetric errors and it is based on introducing redundancy that scales linearly with the number of errors and logarithmically with the length of the string. The construction allows for the string to be uniquely reconstructed based only on its erroneous substring composition multiset. The key idea behind our unique reconstruction approach is to interleave (shifted) Catalan-Bertrand paths with arbitrary binary strings and ``reflect'' them so as to force prefixes and suffixes of the same length to have different weights. The asymptotic code rate of the scheme is one, and decoding is accomplished via a simplified version of the Backtracking algorithm used for the Turnpike problem. For symmetric errors, we use a polynomial characterization of the mass information and adapt polynomial evaluation code constructions for this setting. In the process, we develop new efficient decoding algorithms for a constant number of composition errors. 
\end{abstract}

\begin{IEEEkeywords}
Composition errors; Polymer-based data storage; String reconstruction.
\end{IEEEkeywords}

%
\IEEEpeerreviewmaketitle

\section{Introduction}
Current digital storage systems are facing numerous obstacles in terms of scaling the storage density and allowing for in-memory based computations~\cite{zhirnov2016nucleic}. To offer storage densities at nanoscale, several molecular storage paradigms have recently been put forward in 
~\cite{al2017mass,goldman2013towards,grass2015robust,yazdi2015rewritable,yazdi2017portable}. One promising line of work with low storage cost and readout latency is the work in~\cite{al2017mass}, which proposes using synthetic polymers for storing user-defined information and reading the content via tandem mass spectrometry (MS/MS) techniques. More precisely, binary data is encoded using poly(phosphodiester)s, synthesized through automated phosphoramidite chemistry in such a way that the two bits $0$ and $1$ are represented by molecules of different masses that are stitched together into strings of fixed length. To read the encoded data, phosphate bonds are broken, and MS/MS readers are used to estimate the masses of the fragmented polymer and reconstruct the recorded string, as illustrated in the simplified scheme shown in Figure~\ref{fig:spectrometry}.
\begin{figure}[h]
\centering
\includegraphics[scale=0.3]{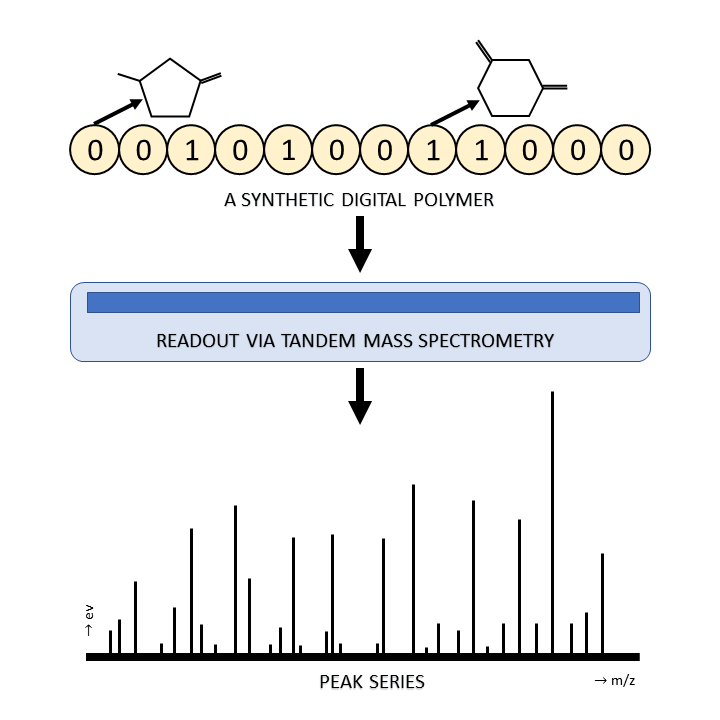} 
\caption{The scheme is adapted from~\cite{al2017mass}. The top figure depicts a binary string synthesized using phosphoramidite chemistry. The bottom image is an illustration of \emph{peak series} or MS Spectrum obtained by MS/MS readout of the digital polymer. The peak series plots the charge at the detection plates (in eV) against the ratio of the mass number of the ion and its charge number (m/z). The charge normalization is often removed through calibration thereby allowing one to deal with masses only. Under ideal conditions, the peaks are supposed to correspond to the masses of string fragments, or more precisely, masses of prefixes and suffixes of the string. Due to measurement errors, spurious peaks arise and one needs to apply specialized signal processing techniques to identify the correct peaks.}
\label{fig:spectrometry}
\end{figure}
Ideally, the masses of all prefixes and suffixes are recovered reliably, allowing one to read the message content by taking the differences of the increasing fragment masses and mapping them to the masses of the $0$ or $1$ symbol. Polymer synthesis is cost- and time-efficient and MS/MS sequencers are significantly faster than those designed for other macromolecules, such as DNA.
Nevertheless, despite the fact that the masses of the polymers can be tuned to allow for more accurate mass discrimination, polymer-based storage systems still suffer from large read error-rates. This is due to the fact that MS/MS sequencing methods tend to produce peaks, representing the masses of the fragments that are buried in analogue noise due to atom disassociation during the fragmentation process and other sources of errors.

In an earlier line of work, the authors of~\cite{acharya2014string} introduced the problem of \emph{binary string reconstruction from its substring composition multiset} to address the issue of MS/MS readout analysis. The substring composition multiset of a binary string is obtained by writing out substrings of the string of all possible lengths and then representing each substring by its composition. As an example, the string $101$ contains three substrings of length one - $1$, $0$, and $1$, two substrings of length 2 - $10$ and $01$, and one substring of length three - $101$. The composition multiset of the substrings of length one equals $\{ 0,1,1 \}$, the composition multiset of substrings of length two equals $\{0^11^1,0^11^1\}$ and the composition multiset of substrings of length three equals $\{0^11^2\}$. Note that composition multisets ignore information about the actual order of the bits in the substrings and may hence be seen as only capturing the information about the ``mass'' or ``weight'' of the unordered substrings. Furthermore, the multiset information cannot distinguish between a string and it's reversal, as well as some other nontrivial interleaved string settings. The problem addressed in~\cite{acharya2014string} was to determine for which string lengths one can guarantee unique reconstruction from an error-free composition multiset, up to string reversal. The main results of~\cite[Theorem ~17,~18,~20]{acharya2014string} assert that binary strings of length  $\leq 7$, one less than a prime or one less than twice a prime are uniquely reconstructable up to reversal.

For our line of work, we will rely on the two modeling assumptions first described in~\cite{acharya2014string}:

\textit{Assumption 1.} One can infer the composition of a polymer substring from its mass. 
\textit{Assumption 2.} When a polymer is broken down for mass spectrometry analysis, we observe the masses of all its substrings with identical frequency. 

The masses of all binary substrings of an encoded polymer may be abstracted by the composition multiset of a string, provided that Assumption 1 holds. Assumption 2 slightly deviates from practical ion series measurements in so far that the latter only provides information about the masses of the prefixes and suffixes, while the proposed modification allows one to observe the masses of all substrings, but without a priori knowledge of their order. Observe that one can make use of platforms that provide mass information for all substrings but such systems require more than one string disassociation and are hence are harder to implement and more expensive.

Unlike the work in~\cite{acharya2014string} which has solely focused on the problem of determining under which conditions unique string reconstruction is possible, we view the problem of multiset composition analysis from a coding-theoretic perspective and ask the following questions:

\textbf{Q1.} \emph{Can one add asymptotically negligible redundancy to information strings in such a way that unique reconstruction is possible, independent of the length of the strings?} Since only strings of specific lengths are reconstructable up to reversals, we aim to devise an efficiently encodable and decodable scheme that encode all strings of length $k \geq 1$ into strings of a larger length $n \geq k$ that are uniquely reconstructable for \emph{all possible string lengths}. Furthermore, we do not allow for both a string and its reversal to be included in the codebook. One simple means for ensuring that a string is uniquely reconstructable up to reversal is to pad the string with $0$s to obtain the shortest length of the form $\min\{{p-1,2q-1\}}$, where $p$ and $q$ primes. For example, if $k > 89693$, it is known that there exists a prime $p$ such that  
$k-1 < p-1 < \left( 1+\frac{1}{\ln^3\, k}\right)\,k-1.$ The result only holds for very large $k$ that are beyond the reach of polymer chemistry. Bertrand's postulate~\cite{hardy1929introduction} applies to shorter lengths $k>3$ but only guarantees that $k-1 < p-1 < 2k-4.$ This implies a possible coding rate loss of up to $1/2$. Note that eliminating reversals of strings reduces the codebook size by less than a half. 

\textbf{Q2.} \emph{Can one add asymptotically negligible redundancy to information strings in such a way that unique reconstruction is possible even in the presence of errors, independent on the length of the strings?} We focus on mass error models under which the composition (mass) of one substring is erroneously interpreted as a different composition (mass). In the asymmetric error model, no two errors can simultaneously affect the masses of two substrings of length $i$ and $k-i+1$, while in the symmetric error model such pairs are allowed. Clearly, the two models are the same when only one mass error is present. Furthermore, asymmetric errors are easily detectable even without added redundancy, while symmetric errors may not be automatically detectable. Symmetric errors tend to be correlated as they arise during the same fragmentation process, while asymmetric errors may be independent as they arise during two different fragmentation processes. It is therefore of interest to analyze both cases.

We answer both questions affirmatively by describing coding schemes that allow for both unique reconstruction and correction of multiple symmetric and asymmetric mass errors. For the case of asymmetric errors, encoding is performed by interleaving symmetric strings with shifted Catalan-Bertrand paths while decoding is accomplished through a modification of the backtracking decoding algorithm described in~\cite{acharya2014string}. For symmetric errors, the proposed encoding and decoding procedures use the polynomial factorization approach of~\cite{acharya2014string} and add redundancy in a fashion similar to that included in Reed-Solomon codes. 

Both lines of work extend the existing literature in string reconstruction~\cite{levenshtein2001efficient, dudik2003reconstruction , batu2004reconstructing, viswanathan2008improved} and coded string reconstruction~\cite{kiah2016codes,gabrys2018unique,cheraghchi2019coded,gabrys2017asymmetric}.

The paper is organized as follows. Section~\ref{sec:ps} introduces the problem, the relevant terminology and notation. The topic of reconstruction codes, or code design for unique reconstruction, is addressed in Section~\ref{sec:recons}. Asymmetric error-correction codes with unique reconstruction properties are addressed in Section~\ref{sec:asymmetric}, while symmetric error-correction code constructions are discussed in Section~\ref{sec:symmetric}. The paper concludes with a discussion of open problems in Section~\ref{sec:open}.

\section{Problem Statement} \label{sec:ps}
Let $\textbf{s}=s_1 s_2 \ldots s_k$ be a binary string of length $k \geq 2$. A substring of $\textbf{s}$ starting at $i$ and ending at $j$, where $1 \leq i \textcolor{black}{\leq} j \leq k,$ is denoted by $\textbf{s}_{i}^{j}$, and is said to have \emph{composition} $0^{z}1^{w}$, where $0 \leq z,w \leq j-i+1$ stand for the number of $0$s and $1$s in the substring, respectively. \textcolor{black}{Let $c(\textbf{s}_{i}^{j})$ denote the composition of $\textbf{s}_{i}^{j}$, $i\leq j$.} A composition only conveys information about the weight of the substring, but not the particular order of the bits. Furthermore, let $C_{l}(\textbf{s})$  stand for the multiset of compositions of substrings of $\textbf{s}$ of length $l$, $1\leq l \leq k$; clearly, this multiset contains $k-l+1$ compositions. For example, if $\textbf{s}=100101$, then the substrings of length two are $10,00,01,10,01$, so that $C_2(\textbf{s})=\{{0^11^1,0^2,0^11^1,0^11^1,0^11^1\}}$.

The multiset $C(\textbf{s})=\cup_{l=1}^{k} C_{l}(\textbf{s})$ is termed the \emph{composition multiset}. It is straightforward to see that the composition multisets of a string $\textbf{s}$ and its reversal, $\textbf{s}^r=s_k s_{k-1} \ldots s_1,$ are identical and hence these two strings are indistinguishable based on $C(\cdot)$.
We define the \emph{cummulative weight} of a composition multiset $C_{l}(\textbf{s}),$ with compositions of the form $0^{z}1^{w}$, where $z+w=l$, as $w_{l}(\textbf{s})=\sum_{0^{z}1^{w} \in C_{l}(\textbf{s})}\, w.$ Observe that $w_{1}(\textbf{s})=w_{k}(\textbf{s})$, as both equal the weight of the string $\textbf{s}$. 
More generally, one has $w_{l}(\textbf{s})=w_{k-l+1}(\textbf{s}), \text{ for all } 1 \leq l \leq k.$ \textcolor{black}{This assertion can be proved by counting the objects of interest in two different ways. One may arrange all substrings of length $\ell$ row-wise. In this case, the columns represent strings of length $k-\ell+1$. The weight counts of the rows have to be the same as those of the columns, so that $w_\ell = w_{k+1-\ell}$.}

In our subsequent derivations, we also make use of the following notation. For a string $\textbf{s}=s_1 s_2 \ldots s_k$, we let $\sigma_i=\text{wt}(s_is_{k-i+1})$ for $i \leq \lfloor \frac{k}{2} \rfloor,$ \textcolor{black}{and for odd $k,$} $\sigma_{ \lceil \frac{k}{2} \rceil}=\text{wt}(s_{ \lceil \frac{k}{2} \rceil})$, where $\text{wt}$ stands for the weight of the string. For our running example $\textbf{s}=100101,$ $\sigma_1=2,$ while $\sigma_2=0$. We use $\Sigma^{ \lceil \frac{k}{2} \rceil}$ to denote the sequence $( \sigma_i)_{i \in [ \lceil \frac{k}{2} \rceil]},$ where $[a]=\{{1,\ldots,a\}}$.

Whenever clear from the context, we omit the argument $\textbf{s}$.

The problems of interest are as follows. The first problem pertains to reconstruction codes: A collection of binary strings of fixed length is called a \textbf{reconstruction code} if all the strings in the code can be reconstructed uniquely based on their multiset compositions. We seek reconstruction codes of small redundancy and consequently, large rate.

As part of the second problem, we consider \textbf{error-correcting reconstruction codes.} In this context, one is given a valid composition multiset of a string $\textbf{s}$, $C(\textbf{s})$. Within the multiset $C(\textbf{s})$, some compositions may be arbitrarily corrupted \textcolor{black}{ to a composition of the same length}. We refer to such errors as \textbf{composition errors}. For example, when $\textbf{s}=100101$, the multiset 
$C_2(\textbf{s})=\{{0^11^1,0^2,0^11^1,0^11^1,0^11^1\}}$ may be corrupted to $\tilde{C}_2(\textbf{s})=\{{\mathbf{0^2},0^2,0^11^1,0^11^1,0^11^1\}}$, in which case we have a single composition error. 

\textcolor{black}{
For the case where multiple composition errors may occur so that the symmetric difference between $C(\textbf{s})$ and $\tilde{C}(\textbf{s})$, denoted $C(\textbf{s}) \bigtriangleup \tilde{C}(\textbf{s})$ may contain more than one element, we will call the errors as asymmetric if the following condition holds: For each $i \in \{1,2, \ldots, \lfloor \frac{n}{2} \rfloor \}$, if $\left| C_i(\textbf{s}) \bigtriangleup \tilde{C}_i(\textbf{s}) \right| \neq 0$, then $\left| C_{n-i+1}(\textbf{s}) \bigtriangleup \tilde{C}_{n-i+1}(\textbf{s}) \right| = 0$. In words, this means that the composition sets $C_i(\textbf{s})$ and ${C}_{n-i+1}(\textbf{s})$ cannot both be in error. For the case of symmetric errors, this condition need not hold (so that there are no restrictions on the structure of the composition errors), and asymmetric composition errors are a special case of symmetric composition errors. For the case where a single composition error occurs between $C(\textbf{s})$ and $\tilde{C}(\textbf{s})$, the single composition error is necessarily asymmetric (and therefore also symmetric).} 

\textcolor{black}{Continuing from our previous example, the multisets $C_2(\textbf{s})$ and $C_5(\textbf{s})$ may be corrupted to $\tilde{C}_2(\textbf{s})=\{{\mathbf{0^2},0^2,0^11^1,0^11^1,0^11^1\}}$ and $\tilde{C}_5(\textbf{s})=\{{\mathbf{0^11^4},0^31^2\}}$, in which case we say that we encountered an example of two \emph{symmetric composition errors}, given that the sum of the substrings lengths, $2$ and $5$, sum up to $k+1=7$. Note that this example does not represent two asymmetric composition errors because an error has occurred in a composition of length $i=2$ and also in composition of length $n-i+1=6-2+1=5$.}

Our main results are summarized below. 

Theorem~\ref{thm:recon} establishes the existence of efficiently decodable reconstruction codes that have asymptotic rate one \textcolor{black}{(proved in Section~\ref{sec:recons})}, while Theorem~\ref{thm:single} establishes similar results for the case of reconstruction codes capable of correcting one composition error \textcolor{black}{(proved in Section~\ref{sec:asymmetricS})}.

\begin{theorem} \label{thm:recon}
There exist efficiently encodable and decodable reconstruction codes with information string-length $k$ and redundancy at most $ \frac{1}{2}\,\log \,(k) + \textcolor{black}{5}$. 
\end{theorem}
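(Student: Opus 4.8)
The plan is to reduce the entire statement to one clean structural condition on a codeword, prove that this condition guarantees unique reconstruction, and then count how many strings satisfy it. For a candidate codeword $\textbf{s}$ of length $k$, I would introduce the prefix weights $P_i=\text{wt}(s_1\cdots s_i)$, the suffix weights $S_i=\text{wt}(s_{k-i+1}\cdots s_k)$, and their difference $D_i=P_i-S_i=\sum_{j=1}^{i}(s_j-s_{k-j+1})$, noting that $P_i+S_i=\sum_{j\le i}\sigma_j$ is already determined by $C(\textbf{s})$. I would then prove the following sufficient condition: \emph{if $D_i\neq 0$ for every $1\le i\le\lfloor \frac{k}{2}\rfloor$, then $\textbf{s}$ is uniquely determined by $C(\textbf{s})$ up to reversal.} The mechanism is that each increment $s_j-s_{k-j+1}$ lies in $\{-1,0,1\}$, so $D_i$ changes by at most one per step; hence if it never equals $0$ for $i\ge 1$ it can never change sign, and after fixing the global orientation (precisely the unavoidable reversal freedom) we may assume $P_i>S_i$ for all $i$.

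Next I would show that the composition multiset reveals, for each $i$, the \emph{unordered} pair $\{P_i,S_i\}$, and that the displayed condition makes reconstruction deterministic. This is where the backtracking/peeling algorithm of~\cite{acharya2014string} enters: the largest composition gives $W=P_k=S_k$, and the two length-$(k-1)$ substrings have weights $W-s_k$ and $W-s_1$, exposing $\{P_1,S_1\}=\{s_1,s_k\}$; peeling off the outermost symbols and recursing on the composition multiset of the inner string exposes $\{P_i,S_i\}$ for increasing $i$. In general this peeling can branch, but under $D_i\neq 0$ the two members of each pair are distinct and, since $P_i$ can never cross $S_i$, one consistently labels the \emph{larger} value as the prefix weight at every level. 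This removes all branching, so the algorithm runs without backtracking and in polynomial time, and the recovered string is unique up to the single initial orientation choice.

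With the criterion in hand, the plan is to build an efficient encoder whose images all satisfy $D_i>0$. I would view a length-$k$ string through its $t=\lfloor \frac{k}{2}\rfloor$ symmetric coordinate pairs $(s_j,s_{k-j+1})$; such a pair produces the signed step $d_j=s_j-s_{k-j+1}\in\{-1,0,1\}$, where the value $0$ is attainable in two ways ($00$ or $11$), each carrying one free information bit. Encoding then amounts to choosing a step sequence $(d_1,\dots,d_t)$ together with the free bits so that every partial sum $D_i$ is strictly positive, i.e.\ so that $(d_j)$ traces a \emph{shifted} (strictly positive) Catalan--Bertrand ballot path, the forced first step $d_1=+1$ being exactly the ``shift''. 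A useful bonus of this design is that reversing $\textbf{s}$ negates every $d_j$ and hence turns a strictly positive path into a strictly negative one, so no codeword's reversal is itself a codeword and the reversal ambiguity is resolved by the codebook itself.

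Finally I would count the codewords. Configurations of length $t$ over the weighted step set with $0$ of multiplicity two and $\pm1$ of multiplicity one, subject to all partial sums being positive, are counted by a ballot-type argument: the step generating function $z+2+z^{-1}$ describes a mean-zero walk, and the strictly positive walks number $\Theta(4^{t}/\sqrt{t})$, so the base-$2$ logarithm of the count is $2t-\tfrac12\log t-O(1)=k-\tfrac12\log k-O(1)$. Hence to embed $k$ free information bits I would take codeword length $n=k+\tfrac12\log k+O(1)$ and realize the embedding by enumerative (ranking/unranking) coding over these paths, which is polynomial. The hard part will be pinning the additive constant down to $+5$: this needs a \emph{non-asymptotic} lower bound on the number of strictly positive weighted paths (a reflection or cycle-lemma computation rather than an asymptotic estimate), plus the bookkeeping for the parity of $k$, the free middle symbol when $k$ is odd, and an explicit coder injecting exactly $k$ bits. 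Conceptually, verifying that the peeling of the second step is genuinely branch-free under $D_i\neq 0$ is the main obstacle, while the exact constant is the main computational one.
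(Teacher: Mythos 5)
Your proposal is correct and follows essentially the same route as the paper: your condition $D_i \neq 0$, enforced by a shifted strictly-positive ballot path on the antisymmetric coordinate pairs with free bits on the symmetric pairs, is precisely the paper's Catalan-Bertrand interleaving construction $\mathcal{S}_R(n)$ (and your sign convention $D_i>0$ is just the mirror image of the paper's $s_1=0$, $s_n=1$ choice), your branch-free peeling argument is the paper's backtracking guarantee via Lemma~\ref{lem1}, and your $\Theta(4^t/\sqrt{t})$ path count is the paper's binomial sum bounded through the central binomial coefficient. The one piece you leave open --- pinning the additive constant to $5$ --- is resolved in the paper exactly as you anticipate, via the non-asymptotic bound $|\mathcal{S}_R(n)| \geq 2^{n-3}/\sqrt{\pi n}$ (Proposition~\ref{prop2} plus the binomial theorem) together with a doubling step for odd lengths and lexicographic ranking/unranking for the explicit encoder.
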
 

\begin{theorem}  \label{thm:single}
There exist efficiently encodable and decodable reconstruction codes with information string-length $k$ capable of correcting a single composition error and redundancy at most $\frac{1}{2} \log\,(k) +\textcolor{black}{13}$. 
\end{theorem}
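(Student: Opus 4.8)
The plan is to bootstrap the reconstruction code of Theorem~\ref{thm:recon}: keep its encoder, arrange (by padding) that every codeword has \emph{even} length $n$, so that the length pairs $\{l,\,n-l+1\}$ are always disjoint, and append a constant number of check symbols to the information string. The decoder will first reduce a single composition error to a \emph{localized} perturbation of the cumulative-weight profile. Recall from the excerpt that a clean codeword satisfies $w_l(\textbf{s})=w_{n-l+1}(\textbf{s})$ for every $l$; a short counting argument of the same flavor also shows that the pair sums are recovered as a negated second difference, $\sigma_i = 2\,w_i(\textbf{s}) - w_{i-1}(\textbf{s}) - w_{i+1}(\textbf{s})$ (with $w_0:=0$). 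The point is that a single composition error is asymmetric in the sense of the problem statement: it corrupts exactly one multiset $C_{l^*}(\textbf{s})$, hence changes exactly one cumulative weight $\tilde{w}_{l^*}$ by a nonzero integer and leaves every other weight intact.

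First I would detect and localize. Since only $\tilde{w}_{l^*}$ is perturbed and $n$ is even, the \emph{unique} index pair for which $\tilde{w}_l \neq \tilde{w}_{n-l+1}$ reveals $\{l^*,\,n-l^*+1\}$; if no mismatch is found, the received multiset is already consistent and the clean decoder of Theorem~\ref{thm:recon} can be run verbatim. The only residual ambiguity is \emph{which} member of the flagged pair actually carries the bad composition, giving two hypotheses. Under each hypothesis the correct common value of $w_{l^*}=w_{n-l^*+1}$ is read off from the \emph{other}, uncorrupted, member of the pair. Writing $a=\min\{l^*,\,n-l^*+1\}\le n/2$, the large-index weight $w_{n-a+1}$ never enters the computation of any $\sigma_i$ with $i\le n/2$, while $w_a$ enters only $\sigma_{a-1},\sigma_a,\sigma_{a+1}$. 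Consequently the two hypotheses yield two candidate pair-sum sequences $\Sigma$ that \emph{differ in at most three consecutive entries}, and therefore at most two candidate codewords.

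The hard part, and the step I expect to require the most care, is that repairing the cumulative weight $w_{l^*}$ does not by itself repair the multiset $C_{l^*}(\textbf{s})$ that the simplified backtracking decoder of Theorem~\ref{thm:recon} consumes: one must still identify which length-$l^*$ composition was corrupted and restore its exact weight, and a three-entry change in $\Sigma$ could in principle cascade into many differing bits during backtracking. The plan here is to exploit the fact that the adjacent multisets $C_{l^*-1}(\textbf{s})$ and $C_{l^*+1}(\textbf{s})$ are uncorrupted and, together with the now-known correct sum $w_{l^*}$, tightly constrain $C_{l^*}(\textbf{s})$ through the window-overlap relations of a single string. I would run the backtracking decoder and permit it to branch \emph{only} at the single flagged layer, and then use the rigidity of the construction -- the Catalan--Bertrand interleaving forcing distinct prefix and suffix weights -- to argue that only a constant number of completions survive. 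The main obstacle is precisely to bound this number of completions and to show that the appended constant-size check (a few suitably weighted parities, or a direct codebook-membership test) always separates the surviving candidates, so that exactly one is returned. Granting this, the redundancy is that of Theorem~\ref{thm:recon}, namely $\tfrac{1}{2}\log(k)+5$, plus the even-length padding and the separating check, which together cost at most $8$ further bits, for the claimed total of $\tfrac{1}{2}\log(k)+13$.
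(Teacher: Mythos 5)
Your detection-and-localization skeleton is sound: with $n$ even, every length $l$ pairs with a distinct length $n+1-l$; two distinct compositions of the same length necessarily have different weights, so a single composition error produces exactly one mismatch $\tilde{w}_{l^*} \neq \tilde{w}_{n+1-l^*}$; and the second-difference identity $\sigma_i = 2w_i - w_{i-1} - w_{i+1}$ (which is correct) confines the damage to at most three consecutive entries of $\Sigma$. The genuine gap is the step you yourself flag as the main obstacle and then defer: proving that the Backtracking decoder, fed the \emph{still-corrupted} multiset together with the two candidate $\Sigma$ sequences, terminates with exactly one codeword. Repairing the cumulative weight $w_{l^*}$ does not repair $C_{l^*}(\textbf{s})$, and the assertion that ``only a constant number of completions survive'' is precisely what has to be proved, not assumed. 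In the paper this is the content of Lemma~\ref{lem:3}: by a case analysis on the longest prefix-suffix pair of length $i$ shared by two codewords with identical $\Sigma^{\lceil \frac{n}{2} \rceil}$ (the cases $\sigma_{i+2}=1$, $\sigma_{i+2}\in\{0,2\}$, and the boundary case $n=2(i+1)+1$), one shows that any two such codewords have composition multisets differing in at least $3$ compositions, so a single error can never make them confusable; it is exactly this minimum-distance property that licenses branching at the flagged layer and guarantees the wrong branch is detected. Without this argument (or an equivalent), your proof does not close.

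Relatedly, you cannot simply ``grant'' the constant-size separating check: whether a few weighted parities always separate the surviving candidates is the open part of your argument, and it is also where the redundancy budget must be justified, so your count of $8$ extra bits is unsubstantiated. The paper's alternative is instructive here: rather than even-length padding plus two hypotheses resolved a posteriori, it takes $n$ odd with $\lceil \frac{n}{2} \rceil$ divisible by $3$ and imposes $\text{wt}(\textbf{s}) \equiv 0 \bmod 2$ and $\sum_{i=1}^{\lceil n/2 \rceil} w_i(\textbf{s}) \equiv 0 \bmod 3$ (realized by fixing $s_2$, $s_{n-1}$ and $s_{\lceil \frac{n}{2} \rceil}$); Propositions~\ref{prop:w1} and~\ref{prop:cor} together with Lemma~\ref{finding_sigma} then pin down the corrupted cumulative weight \emph{uniquely}, so the correct $\Sigma^{\lceil \frac{n}{2} \rceil}$ is recovered with no residual hypothesis ambiguity, and only afterwards is the confusability argument of Lemma~\ref{lem:3} invoked. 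This division of labor---redundancy buys unambiguous recovery of $\Sigma$, and a structural distance-$3$ lemma handles the remaining multiset error---is what your proposal is missing; its accounting ($1$ bit for length parity, $3$ fixed bits, $4$ bits for the divisibility condition, on top of the $\frac{1}{2}\log(k)+5$ of Theorem~\ref{thm:recon}) is what actually yields the stated $\frac{1}{2}\log(k)+13$.
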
 

Theorems~\ref{thm:asym},~\ref{th:mainth} and~\ref{th:new} extend the results of Theorem~\ref{thm:single} for the case of multiple composition errors, including both the asymmetric and symmetric case \textcolor{black}{(proved in Sections~\ref{sec:mulerr},~\ref{sec:symmetric}, and~\ref{sec:symmetric} respectively) }. The result in Theorem~\ref{thm:asym} demonstrates the existence of explicit asymmetric error-correcting reconstruction codes of asymptotic rate one that can be efficiently reconstructed for constant $t$. The result in Theorem~\ref{th:mainth} applies to symmetric errors. The best known redundancy is achieved using the construction supporting Theorem~\ref{th:new}.

\begin{theorem} \label{thm:asym}
There exist efficiently encodable and decodable reconstruction codes with information string-length $k$ capable of correcting a constant number of $t$ asymmetric composition errors and redundancy $\cO \left( t \log k \right)$. The decoding algorithm has complexity $\cO(n^3\,2^t)$.
\end{theorem}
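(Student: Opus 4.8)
The plan is to build the code on top of the rate-one reconstruction code of Theorem~\ref{thm:recon} and to upgrade the single-error mechanism behind Theorem~\ref{thm:single} into a $t$-error procedure, using the identity $w_\ell(\mathbf{s}) = w_{n-\ell+1}(\mathbf{s})$ (for the length-$n$ codeword $\mathbf{s}$) as the engine for asymmetric errors. The decisive point is that an asymmetric error, by definition, corrupts at most one of the two length-classes $\ell$ and $n-\ell+1$. Hence, if we compute the cumulative weight of the received multiset from both sides, $\tilde w_\ell$ (from $\tilde C_\ell$) and $\tilde w_{n-\ell+1}$ (from $\tilde C_{n-\ell+1}$), then for every $\ell$ at least one of the two equals the true common value $w_\ell = w_{n-\ell+1}$, and the pairs on which $\tilde w_\ell \ne \tilde w_{n-\ell+1}$ are \emph{exactly} the corrupted ones. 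Thus detection and localization of the set $E$ of corrupted length-classes comes for free, with no redundancy. Moreover $\Sigma^{\lceil n/2\rceil}$ is a finite difference of the cumulative weights, since $\sigma_\ell = 2w_\ell - w_{\ell-1} - w_{\ell+1}$ (with $w_0=0$), so recovering the true weight profile recovers every $\sigma_\ell = \text{wt}(s_\ell s_{n-\ell+1})$.

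For the encoding I would first pass the information string of length $k$ through the reconstruction encoder of Theorem~\ref{thm:recon}, producing a backbone of length $n=\cO(k)$ whose Catalan-Bertrand structure forces prefixes and suffixes of equal length to have distinct weights and therefore guarantees unique backtracking reconstruction from a clean multiset. On top of this I would protect the weight profile: the one-sided sequence $(\tilde w_1,\ldots,\tilde w_{\lfloor n/2\rfloor})$, read always from the length-$\ell$ side, agrees with the true profile except in the at most $t$ positions whose $\ell$-side class was hit, so it is a $t$-error corrupted version of the true profile. I therefore reserve $\cO(t\log k)$ coordinates to carry $2t$ Reed-Solomon parity symbols over $\ff_q$ with $q=\cO(n)$ computed from the true profile, placed in a short, weight-balanced region appended so as not to disturb the strict ballot condition of the backbone. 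The total redundancy is $\cO(\log k)+\cO(t\log k)=\cO(t\log k)$.

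Decoding then proceeds in four steps. (i) Compute $\tilde w_\ell$ two ways for every $\ell$ and flag the set $E$, $|E|\le t$, via $\tilde w_\ell\ne\tilde w_{n-\ell+1}$. (ii) Feed the one-sided profile and the stored parity to the Reed-Solomon decoder to recover the true weight profile, hence the true $\Sigma^{\lceil n/2\rceil}$ and all $\sigma_\ell$. (iii) Run the backtracking reconstruction of Theorem~\ref{thm:recon}, but drive it by the recovered $\sigma_\ell$: classes with $\sigma_\ell\in\{0,2\}$ are forced, clean classes with $\sigma_\ell=1$ are resolved uniquely exactly as in the error-free algorithm, while at a flagged class $\ell\in E$ with $\sigma_\ell=1$ the governing composition is unreliable, so I branch on the binary orientation ($s_\ell=1,s_{n-\ell+1}=0$ versus the reverse). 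This yields at most $2^t$ candidate strings. (iv) For each candidate recompute its composition multiset and accept the unique one consistent with the uncorrupted classes of $\tilde C$ and lying in the codebook. Since one backtracking pass is $\cO(n^3)$ and there are at most $2^t$ branches, the complexity is $\cO(n^3\,2^t)$.

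The hard part is controlling the coupling between a corrupted composition and the backtracking: used naively, one wrong length-$\ell$ composition can cause the wrong pair to be peeled off and trigger a cascade of further wrong decisions, which would blow the branching up from $2^t$ to something exponential in $n$. The key lemma I must establish is that, once the true $\Sigma^{\lceil n/2\rceil}$ and the flagged set $E$ are known, each corrupted class contributes only an isolated, local binary choice: the recovered $\sigma_\ell$ already fixes the unordered pair $\{s_\ell,s_{n-\ell+1}\}$, so the only remaining freedom at a flagged step is its orientation, and the Catalan-Bertrand structure together with the recovered weights prevents an incorrect orientation from completing to a valid codeword. I would also need to verify that the localization is exact under the asymmetric promise, paying attention to the self-paired middle term for odd $n$ and to the boundary $\ell=\lfloor n/2\rfloor$, and to confirm that the reserved parity-bearing region is itself reliably recoverable without breaking the strict ballot condition underlying Theorem~\ref{thm:recon}.
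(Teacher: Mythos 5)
Your overall architecture matches the paper's (a Catalan-Bertrand backbone, a Reed-Solomon-protected weight/$\Sigma$ profile, and a backtracking decoder that branches at corrupted classes, giving the $2^t$ factor), but there are two genuine gaps, and the first one is fatal as stated. You build on the \emph{plain} reconstruction code of Theorem~\ref{thm:recon}, and the ``key lemma'' you defer --- that an incorrect orientation at a flagged class cannot complete to a valid codeword consistent with the unflagged classes --- is exactly what fails for that code once $t\ge 2$. The danger is not a cascade inside backtracking but outright confusability: if there exist two codewords $\textbf{s},\textbf{v}$ with the same $\Sigma^{\lceil n/2\rceil}$ whose multisets differ in only $j\le t$ classes (in distinct pairs), each by a single composition substitution, then the adversary can corrupt $C(\textbf{s})$ into exactly $C(\textbf{v})$. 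The received multiset is then a perfectly self-consistent multiset of a genuine codeword: every pair satisfies $\tilde w_\ell=\tilde w_{n-\ell+1}$, so your flagged set $E$ is empty, the RS check passes (same profile, hence same valid parity), and your decoder confidently outputs $\textbf{v}\neq\textbf{s}$. For the plain code such close pairs do exist --- the paper's Lemma~\ref{lem:3} analysis already shows the differing classes can be as few as two ($C_j$ and $C_{j-1}$). This is precisely why the paper does \emph{not} use $\mathcal{S}_R(n)$ directly but introduces the $t$-shifted code $\mathcal{S}^{(t)}_R(m)$ (pad with $t$ $0$s and $t$ $1$s, so every prefix has at least $t+1$ more $0$s than the matched suffix), and then proves Lemma~\ref{lem:awesome}: any two same-$\Sigma$ codewords differ in at least $2$ compositions in each of $t+1$ consecutive classes lying in distinct pairs, which no $t$-asymmetric-error pattern can bridge. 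That strengthening costs only $O(t)$ extra redundancy but is the load-bearing ingredient; without it your consistency check in step (iv) does not single out the transmitted string.

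The second gap is the circularity in how the parity is read. You ``reserve coordinates to carry RS parity symbols'' in the string, but the decoder never sees the string's bits --- it sees only the corrupted composition multiset, and extracting bits from it requires reconstruction, which requires the profile, which requires the parity. The paper breaks this loop by encoding the redundancy in the $\Sigma$-domain itself: the appended redundant symbols $(\sigma_{m/2+1},\ldots,\sigma_{n/2})$ are realized as bit pairs $b_k b_{n-m+1-k}$ placed symmetrically around the center of the codeword, so the entire sequence $\Sigma^{n/2}$ is a codeword of a systematic RS code over $\{0,1,2\}$. The decoder then recovers, directly from the linear system~\eqref{eq:sigmas} applied to the corrupted multiset, all $\sigma_i$ except at most $3t$ at \emph{known} locations (each corrupted pair destroys $w_j$ and hence $\sigma_{j-1},\sigma_j,\sigma_{j+1}$), and erasure-decodes. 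Your localization observation ($\tilde w_\ell\neq\tilde w_{n-\ell+1}$ flags pairs exactly, since a composition error always changes the cumulative weight) is correct and is morally the same as the paper's Claim~1, but it identifies corrupted \emph{pairs}, not sides, so the flagged entries of your one-sided profile must be treated as erasures anyway; this is fine for redundancy, but it does not remove the need to make the parity itself readable from the multiset, nor does it handle the self-paired middle class for odd $n$, which you flag but leave unresolved.
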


\begin{theorem}\label{th:mainth} There exist efficient symmetric $t$-error correcting reconstruction codes with information string-length $k$, redundancy $\cO(t^2 \log k)$ and decoding complexity $\cO(n^3)$.
\end{theorem}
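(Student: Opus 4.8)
The plan is to convert the composition multiset into an algebraic object whose coordinates can be protected by a Reed--Solomon-type outer code, correct the induced coordinate errors, and only then run a factorization/backtracking reconstruction. First I would recall the polynomial characterization underlying the reconstruction results. Writing $P_i=\sum_{j\le i}s_j$ for the prefix weights of $\textbf{s}$, setting $\Phi(x,y)=\sum_{i=0}^{k}x^{\,i-P_i}y^{P_i}$ and $\Phi^{\ast}(x,y)=\Phi(x^{-1},y^{-1})$, a direct expansion yields the identity $\Phi(x,y)\,\Phi^{\ast}(x,y)=(k+1)+S(x,y)+S^{\ast}(x,y)$, where $S(x,y)=\sum_{1\le i\le j\le k}x^{z_{ij}}y^{w_{ij}}$ is the generating polynomial of the compositions of $C(\textbf{s})$, the exponents recording the number of $0$s and $1$s in $\textbf{s}_{i}^{j}$. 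Thus $C(\textbf{s})$ determines $\Phi\,\Phi^{\ast}$ exactly, and recovering the factor $\Phi$ recovers $(P_0,\dots,P_k)$, hence $\textbf{s}$, up to the reversal symmetry $\Phi\leftrightarrow\Phi^{\ast}$. The key quantitative observation is local: a single composition error perturbs one monomial of $S$, and hence only $\cO(1)$ coefficients of the product $\Phi\,\Phi^{\ast}$, so $t$ symmetric composition errors corrupt at most $\cO(t)$ coefficients of this product.

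Next I would describe the code. On top of the unique-reconstruction layer of Theorem~\ref{thm:recon} (which breaks reversal ambiguity and forces distinct prefix/suffix weights at cost $\tfrac12\log k+\cO(1)$), I would append parity symbols protecting the derived sequence that carries $\Phi$ --- concretely the prefix-weight sequence, equivalently $\Sigma^{\lceil k/2\rceil}$ together with the per-level ordering data --- viewed as a string over $\fq$ with $q=\cO(k)$, so each symbol costs $\cO(\log k)$ bits. Decoding then proceeds in three stages: extract the corrupted coefficient vector of $\Phi\,\Phi^{\ast}$ from $\tilde C(\textbf{s})$; run an RS/BCH-style syndrome decoder to locate and correct the induced symbol errors; and finally factor the corrected product and run the Turnpike/backtracking peeling of Acharya \et~to output $\textbf{s}$. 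The reconstruction step dominates and runs in $\cO(n^3)$, matching the claim.

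The hard part, and the source of the $\cO(t^2)$ rather than $\cO(t)$ factor in the redundancy, is the distinctive feature of the symmetric model: an error may hit both members of a level pair $(C_i,C_{k-i+1})$, so the cheap consistency check $w_i=w_{k-i+1}$ used to localize asymmetric errors in Theorems~\ref{thm:single} and~\ref{thm:asym} is no longer available, and the $2^t$ enumeration of Theorem~\ref{thm:asym} must be avoided to keep complexity polynomial. I would therefore treat the error \emph{locations} as unknown among $\cO(k)$ coordinates and pay for genuine location-plus-value correction: because the peeling is iterative, a single mis-resolved pair can corrupt the interpretation of up to $\cO(t)$ subsequent pairs before it is detected, so one must protect $\cO(t)$ syndromes each able to correct $\cO(t)$ errors, giving $\cO(t^2)$ parity symbols and redundancy $\cO(t^2\log k)$. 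Verifying that the induced errors stay confined to $\cO(t^2)$ coordinates of the protected sequence, and that the factorization of the corrected product remains unique, is the main technical obstacle; the remaining estimates on field size, padding length, and decoder complexity are routine.
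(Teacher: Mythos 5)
Your proposal has the paper's skeleton — the identity $P_{\textbf{s}}(x,y)\,P_{\textbf{s}}(\frac{1}{x},\frac{1}{y}) = (n+1)+S_{\textbf{s}}(x,y)+S_{\textbf{s}}(\frac{1}{x},\frac{1}{y})$, the observation that $t$ symmetric composition errors perturb the observable product by a sparse polynomial $\tilde{E}(x,y)$ with $\cO(t)$ terms, and a final backtracking pass on the corrected multiset over a Theorem~\ref{thm:recon}-type inner code — but the central mechanism is missing. To ``run an RS/BCH-style syndrome decoder'' on the coefficient vector of $P_{\textbf{s}}P^{*}_{\textbf{s}}+\tilde{E}$, the decoder must know fixed linear functionals (syndromes) of the uncorrupted product, and nothing in your construction provides them. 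The object your parities protect — the prefix-weight sequence, equivalently $\Sigma^{\lceil k/2\rceil}$ plus ordering data — is essentially the string itself: a corrupted version of it cannot be extracted from $\tilde{C}(\textbf{s})$ without already solving the reconstruction problem, and it is related to the coefficients of $P_{\textbf{s}}P^{*}_{\textbf{s}}$ quadratically rather than linearly, so parities on it do not translate into syndromes for the product. The paper instead constrains the codestring so that $P_{\textbf{s}}$ takes prescribed values on the grid $\{(\alpha^{\ell_1},\alpha^{\ell_2})\}_{\ell_1,\ell_2\in\{0,\ldots,4t\}}$ (Lemma~\ref{lem:deca1}, Corollary~\ref{cor:sideinfo}); evaluating the observed polynomial at the grid (and, by reciprocity, at the inverse points) then directly yields the evaluations of $\tilde{E}$, and the $4t$-sparse \emph{bivariate} $\tilde{E}$ is recovered by two nested Reed--Solomon decodings (Theorem~\ref{th:RS}): first in $x$ for each fixed $\alpha^{\ell_2}$, then in $y$ for each coefficient polynomial $M_{i_\ell}(y)$. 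This also corrects your accounting of the redundancy: the $\cO(t^2)$ factor is the size $(4t+1)^2$ of this evaluation grid required for bivariate sparse interpolation, not ``error propagation in iterative peeling'' — in the paper the peeling runs only after the multiset has been fully corrected, so nothing propagates, and no bivariate factorization is needed either.

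The second gap is the chicken-and-egg problem of communicating the side information. Even granting that grid evaluations are the right object, the decoder must learn them from the corrupted multiset. The paper's encoder $\E_{t,n}$ BCH-encodes the $(4t+1)^2$ evaluations of $P_{\textbf{u}}$ together with $\text{wt}(\textbf{u}) \bmod (2t+1)$ — the latter is needed to pin down $d_x,d_y$ and form the reciprocal polynomial at all, a step you omit — and embeds the resulting bits into the parities $w_{2j} \bmod 2$ of the cumulative weights (Claim~\ref{cl:rw}). These parities are robustly readable precisely because $t$ composition errors flip at most $t$ of the observed values $\tilde{w}_{2j} \bmod 2$, so minimum distance $2t+1$ suffices (Lemma~\ref{lem:mclemma}). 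Appending $\fq$-ary parity symbols to the string, as you propose, gives no such readability: those symbols are buried in the multiset like every other bit. Finally, note that protecting $\Sigma^{\lceil n/2\rceil}$ with an outer code is exactly the paper's \emph{asymmetric} strategy (Theorem~\ref{thm:asym}), where corrupted $w_i$ manifest as erasures at known locations; in the symmetric model an error pair can corrupt $w_i$ and $w_{n+1-i}$ consistently, the locations become unknown, and this is precisely why the paper abandons $\Sigma$-protection in the symmetric setting.
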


\begin{theorem}\label{th:new} 
There exist symmetric $t$-error correcting reconstruction codes with information string-length $k$, redundancy $\cO(\log k + t)$ and decoding complexity $\cO(n^{3+3t})$.
\end{theorem}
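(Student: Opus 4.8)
The plan is to reduce reconstruction to polynomial factorization, bolt on the rate-one reconstruction code of Theorem~\ref{thm:recon}, and correct the $t$ symmetric errors by an exhaustive search over error patterns whose correctness rests on a minimum-distance guarantee. First I would set up the algebraic description of the composition multiset. For a codeword $\bs=s_1\cdots s_n$ write $q_i=\weight(s_1\cdots s_i)$ for the $i$-th prefix weight ($q_0=0$) and attach to the $i$-th prefix the monomial $\psi_i=x^{\,i-q_i}\,y^{\,q_i}$. The substring occupying positions $a+1,\dots,b$ then has composition represented by $\psi_b/\psi_a$, and separating the diagonal and reciprocal terms of $\bigl(\sum_i\psi_i\bigr)\bigl(\sum_j\psi_j^{-1}\bigr)$ yields
\begin{equation}
S(x,y)\,S(x^{-1},y^{-1}) \;=\; A_{\bs}(x,y) + A_{\bs}(x^{-1},y^{-1}) + (n+1),
\end{equation}
where $S(x,y)=\sum_{i=0}^{n}\psi_i$ carries exactly one monomial of each total degree $i$ and $A_{\bs}$ is the generating function of the observed composition multiset $C(\bs)$. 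Since the degree-$i$ monomial of $S$ records $q_i$, recovering the graded factor $S$ recovers $\bs$, the only ambiguity being $S\leftrightarrow S(x^{-1},y^{-1})$, i.e.\ reversal.

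Building on this, I would take the base code to be the reconstruction code of Theorem~\ref{thm:recon}: it contributes the $\cO(\log k)$ term, removes the reversal ambiguity, and lets the $\cO(n^3)$ backtracking procedure recover $\bs$ from an error-free $C(\bs)$. For the $t$-error decoder I would run a list-and-verify search. A single symmetric composition error is described by a triple (the length class $\ell$, the weight value removed from $C_\ell(\bs)$, the weight value inserted), giving $\cO(n^3)$ possibilities, so all patterns of at most $t$ errors number $\cO(n^{3t})$. For each pattern I would undo the hypothesised corruption, run the $\cO(n^3)$ error-free reconstruction on the resulting multiset, and keep any output that is a legal codeword lying within $t$ symmetric errors of the received multiset. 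The product of the two costs is the advertised $\cO(n^{3+3t})$.

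Correctness of this search is exactly the statement that distinct codewords have composition multisets at distance at least $2t+1$ in the symmetric-difference metric $C(\bs)\bigtriangleup C(\bs')$; otherwise one received multiset would admit two valid explanations. The encouraging structural fact is that this distance is generically \emph{large}: because every bit of $\bs$ lies in $\Theta(n)$ substrings, even a single adjacent transposition already perturbs $\Theta(n)$ compositions, so for constant $t$ two codewords can fall within distance $2t$ only through the rigid algebraic coincidences governed by the identity above, namely near-reversals and the interleaved structures that Acharya \et\ show can share composition information. The reconstruction code of Theorem~\ref{thm:recon} already eliminates the exact (distance-zero) coincidences, and I would add an $\cO(t)$-bit constraint on the symmetric weight profile $\Sigma^{\lceil k/2\rceil}$ (a short hash, or a fixed pattern in the few layers where such near-collisions can differ) designed so that any two strings at composition-distance $\le 2t$ are forced to disagree on it. This yields total redundancy $\cO(\log k)+\cO(t)=\cO(\log k+t)$, cheaper than the $\cO(t^2\log k)$ of Theorem~\ref{th:mainth} precisely because the exhaustive search, rather than algebraic parity, performs the error localisation.

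The hard part will be the distance lemma underpinning the last step: proving that a symmetric composition-distance of at most $2t$ between two strings forces them to be near-equivalent in a precise low-dimensional sense, and bounding the number of such near-equivalences tightly enough that $\cO(t)$ check bits separate all of them while only $\cO(1)$ bits are spent per error. Concretely I expect to track how a bounded change in $A_{\bs}(x,y)+A_{\bs}(x^{-1},y^{-1})+(n+1)$ propagates through the factorisation $S\,S(x^{-1},y^{-1})$, bound the number of altered graded coefficients of $S$ (equivalently of the cumulative weights $w_\ell(\bs)$, using $w_\ell=w_{k+1-\ell}$), and show that the residual confusable configurations form a set small enough to be killed by the short constraint; the final routine check is that this added constraint remains compatible with efficient encoding of the $k$ information bits.
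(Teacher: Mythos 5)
Your decoder is exactly the paper's: enumerate the $\cO(n^{3t})$ candidate error patterns, undo each hypothesised corruption, run the $\cO(n^{3})$ error-free backtracking reconstruction, and verify, for $\cO(n^{3+3t})$ total. The gap is that the substance of the theorem --- an explicit codebook whose distinct codewords have composition multisets far apart --- is never constructed, and the route you sketch for it cannot be repaired. First, your distance requirement is too weak: one composition error deletes one element of the multiset and inserts another, so $t$ errors move $C(\bs)$ by up to $2t$ in symmetric difference, and unique decoding needs $|C(\bs)\bigtriangleup C(\bv)|\geq 4t+1$ for distinct codewords $\bs\neq\bv$, not $2t+1$; with only $2t+1$ your list-and-verify step can return two valid candidates. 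Second, and fatally, the proposed $\cO(t)$-bit constraint on $\Sigma^{\lceil k/2\rceil}$ cannot separate the problematic pairs at all: as the paper's own analysis of confusable strings shows (Lemma~\ref{lem:awesome} and Appendix~\ref{app:lem4}), the closest pairs surviving the Theorem~\ref{thm:recon} construction share the \emph{same} $\Sigma$ sequence. Any constraint that is a function of $\Sigma$ alone is satisfied by both members of such a pair or by neither, so no number of such check bits can exclude one of them from the codebook. Your supporting intuition (``a single adjacent transposition perturbs $\Theta(n)$ compositions'') also conflates changes to individual substrings with distance between multisets; cancellation inside the multiset is exactly the difficulty --- a string and its reversal have multiset distance zero.

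What the paper does instead, and what you would need to supply or replace, is a different explicit construction: codewords consist of $4t+1$ leading $0$s, then a Catalan string (every prefix has at least as many $0$s as $1$s, balanced weight), then $4t+1$ trailing $1$s, which costs $\cO(\log n)+\cO(t)$ redundancy. Theorem~\ref{th:main} then proves combinatorially that any two distinct codewords differ in at least $4t+1$ compositions: taking $i$ minimal with $s_i\neq v_i$ or $s_{n+1-i}\neq v_{n+1-i}$, the weight balance forced by the Catalan constraint together with the fixed affixes exhibits, for each $l\in\{1,\ldots,4t+1\}$, a composition of $\bs_l^{n-i}$ or of $\bv_l^{n-i}$ lying in $C(\bs)\bigtriangleup C(\bv)$. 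Neither your polynomial identity nor the base code of Theorem~\ref{thm:recon} plays any role in that argument; without a proved distance lemma your proposal establishes the decoding complexity but not the existence of the code, which is the actual content of the theorem.
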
 

\subsection{Technical Background} \label{pb}
 \vspace{-0.04in}
\textcolor{black}{
For a string of length $k$, recall that $\sigma_i = \text{wt}(s_i, s_{k+1-i})$, and that given $C_1$ one can compute $w_1 = \sum^{\lceil \frac{k}{2} \rceil}_{j=1} \sigma_j.$ 
When $i=2$, the bits at positions $1$ and $k$ contribute once to $w_2$, whereas the bits $2, \dots, k-1$ all contribute 
twice to $w_2$. \\
Using $C_2,$ we can obtain $\sigma_1 + 2 \sum^{ \lceil \frac{k}{2} \rceil}_{j=2} \sigma_j = w_2.$ 
Generalizing this result for all $C_i, i\leq \lceil \frac{k}{2} \rceil$ is straightforward, and gives the following equalities: 
\begin{equation} \label{eq:sigmas}
\frac{1}{i} \sigma_1 + \frac{2}{i} \sigma_2 + \dots + \frac{i-1}{i} \sigma_{i-1} +  \sigma_i + \sigma_{i+1} + \dots +  \sigma_{\lceil \frac{k}{2} \rceil} = \frac{1}{i} w_{i}. 
\end{equation}
The above system of $\lceil \frac{k}{2} \rceil$ linear equations with $\lceil \frac{k}{2} \rceil$ unknowns can be solved efficiently. Thus, for all error-free composition sets, one can find $\Sigma^{\lceil \frac{k}{2} \rceil}$. }

Some of our code designs rely on the Backtracking algorithm~\cite{acharya2014string}, first used in the context of the Turnpike problem. We provide an example illustrating the operation of the algorithm. \textcolor{black}{The composition multiset $C(\textbf{s})$ of a string is given as the input to the algorithm, and its output is the set of all strings that have the same composition as $C(\textbf{s})$.}
\begin{example} Let $\textbf{s} = 1010001010$. The sequence $\Sigma^5 = (\sigma_1 = 1,\sigma_2 = 1,\sigma_3=1,\sigma_4=1,\sigma_5=0)$ can be uniquely determined from the composition multiset. This follows from $w_1 = \sum_{i=1}^{5} \sigma_{i}$ and $i\,w_1 - w_i = \sum^{i-1}_{j=1} (i-j) \sigma_j$, for $i=1,\ldots, \lceil \frac{k}{2} \rceil$. Solving the system of equations produces $\Sigma^{5}$.  

The Backtracking algorithm starts by determining the first and the last bit of the string and then proceeds to place the remaining bits in an inward fashion. Since $\sigma_1 = \emph{wt}(s_1 s_{10})$ is known, and since a string and its reversal have the same composition multiset, the first and the last bits are placed arbitrarily. In our example, without loss of generality, the Backtracking algorithm sets $s_1=1$ and $s_{10}=0$. 

Let $\ell_r$ be the length of the reconstructed prefix/suffix pair.
Backtracking produces a multiset of all compositions that are jointly determined by the reconstructed prefix and suffix of length $\ell_r =1,$ $\textbf{s}_1^1 = 1$, $\textbf{s}^{10}_{10} = 0$ and $\Sigma^{5}$. Denote this multiset by $\mathcal{T}_{\ell_r=1}$. 

Note that $\sigma_5 = 0 $ implies that the composition of $\textbf{s}^6_5$ is $ 0^2$. Similarly, $\sigma_4 = 1$ and $\sigma_5 = 0$ imply that the composition of $\textbf{s}_4^7 $ is $ 0^31$. Thus, using the information in $\Sigma^5$ alone one can reconstruct the following compositions: $0^61^4, 0^51^3, 0^41^2, 0^31^1, 0^2$.  
Note that compositions of substrings of the form $\textbf{s}_i^j$ can be reconstructed provided that $i,j$ satisfy: (1) $i \textcolor{black}{ \leq } j \leq \ell_r$ or (2) $k +1 - \ell_r \leq i \textcolor{black}{ \leq } j$ or (3) \textcolor{black}{$i \leq \ell_r +1$ and $j \geq k- \ell_r$}. Thus, the composition $0^51^4$ of $\textbf{s}_1^{9}$ and the composition $0^61^3$ of $\textbf{s}_2^{10}$ can both be reconstructed as well. Consequently, $\mathcal{T}_1 = \{0^61^4, 0^51^4, 0^61^3, 0^51^3, 0^41^2, 0^31^1, 0^2, 0, 1 \}$.  

In the next step, the Backtracking algorithm tries to determine the bits $s_2$ and $s_9$. First, recall that $\sigma_2 = 1$ is known. The algorithm determines the compositions of the two longest substrings in the multiset $C \setminus \mathcal{T}_1$ to be $\{ 0^51^3, 0^51^3\}$. Observe that these compositions must be those of the substrings $\textbf{s}_1^8$ and $\textbf{s}_3^{10}$ (although inconsequential for this example, it is still important to note that in general one does not know which one of the two largest compositions in $C \setminus \mathcal{T}_1$ correspond to the prefix). Hence, the compositions of the prefix-suffix pair $\{\textbf{s}_1^2, \textbf{s}_9^{10}\}$ equal $\{ 01,01\}$.   

Since the weight of the reconstructed prefix is not equal to the weight of the reconstructed suffix, \textit{i.e.}, $\emph{wt}(\textbf{s}_1^1) = 1 \not = 0 = \emph{wt}(\textbf{s}_{10}^{10})$, the Backtracking algorithm outputs $s_2 = 0, s_9 = 1$. This follows due to the fact that given that the reconstructed prefix-suffix pair have a weight mismatch, setting $(s_2 = 0, s_9 = 1)$,  or setting $(s_2 = 1, s_9 = 0 )$ leads to different prefix-suffix compositions. As a result, $\{1^2 , 0^2 \} \not = \{01,01 \}$. The algorithm completes this iteration by updating $\mathcal{T}$ to $\mathcal{T}_{\ell_r=2} = \{ 0^61^4, 0^51^4, 0^61^3,$ $0^51^3, 0^51^3 ,0^51^3, 0^41^3,$ $0^51^2, 0^41^2, 0^31^1, 0^2, 01,01,0,1, \textcolor{black}{0,1}\}$.

In the next iteration, following the same steps described above, the compositions of the prefix-suffix pair of length $3$ are found to be $\{ 01^2, 0^21 \}$. However, since $\emph{wt}(\textbf{s}_1^2) = \emph{wt}(\textbf{s}_9^{10})$, the Backtracking algorithm cannot determine the bits $s_3, s_8$. Thus, whenever $\emph{wt} (\textbf{s}_1^{\ell_r}) = \emph{wt} (\textbf{s}_{k+1- \ell_r}^k) $ and \textcolor{black}{$\sigma_{\ell+1} =1,$ } the algorithm \emph{guesses} the bits $s_{\ell_r +1}, s_{n-\ell_r}$. However, if $\emph{wt} (\textbf{s}_1^{\ell_r}) = \emph{wt} (\textbf{s}_{k+1- \ell_r}^k) $ and \textcolor{black}{$\sigma_{\ell+1} \in \{ 0,2\},$} then the reconstruction of bits $s_{\ell_r +1}$ and $s_{n-\ell_r}$ is straightforward. For example, guessing that $s_3=0$, and $s_7=1$ leads to an error. The error is detected by encountering a multiset $\mathcal{T}_{\ell_r}$ that is incompatible with the composition multiset $C$ of the given string. Upon detection of an error, the algorithm backtracks to the \textcolor{black}{last} position where it guessed the bit assignment, changes its guess and restarts the algorithm from that iteration. In our example, this leads to $s_3=1$ and \textcolor{black}{ $s_8=0$}, and one hence obtains the reconstructed string $1010001010$. 
\end{example}

The complexity of the Backtracking algorithm is summarized in the following theorem.
\begin{theorem*} \emph{\cite[Theorem ~32]{acharya2014string}} 
Let $$ \ell_s \overset{\text{def}}{=} | \{ i \leq \lfloor \frac{n}{2} \rfloor: \emph{wt}(\textbf{s}_1^i) = \emph{wt}(\textbf{s}_{n+1-i}^{n}) \text{ and } s_{i+1} \not  = s_{n-i} \} |, $$
$$ E_s \overset{\text{def}}{=} \{ \textbf{v}: C(\textbf{v})=C(\textbf{s})\},\; \ell_s^{*} \overset{\text{def}}{=} \max_{u \in E_s} \ell_{u}.
$$ 
For a given input $C(\textbf{s})$ and $\ell_s$, the Backtracking algorithm outputs a set of strings that contains $\textbf{s}$ in time $\mathcal{O}(2^{\ell_s}n^{2} \log\,(n))$. Furthermore, $E_s$ can be recovered in time $\mathcal{O}(2^{\ell^*_s}n^{2} \log\,(n))$.
\end{theorem*}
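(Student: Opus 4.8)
\begin{sproof}
The plan is to read the Backtracking procedure as a depth-first search over a binary decision tree and to bound separately (i) the branching of the tree and (ii) the work expended per node, then multiply the two.

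First I would establish the structural invariant that drives the reconstruction. Assume inductively that after placing $\ell_r$ bits on each end the matched multiset $\mathcal{T}_{\ell_r}$ consists of exactly the compositions of those substrings $\textbf{s}_i^j$ that are \emph{determined} by the current prefix, suffix and $\Sigma^{\lceil n/2\rceil}$ (cases (1)--(3) in the example). Under this invariant the two longest compositions of $C(\textbf{s})\setminus\mathcal{T}_{\ell_r}$ are exactly $c(\textbf{s}_1^{\,n-\ell_r-1})$ and $c(\textbf{s}_{\ell_r+2}^{\,n})$, so subtracting them from the total composition $c(\textbf{s})$ yields the unordered pair $\{c(\textbf{s}_1^{\ell_r+1}),\,c(\textbf{s}_{n-\ell_r}^{\,n})\}$. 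Writing $a=\text{wt}(\textbf{s}_1^{\ell_r})$, $b=\text{wt}(\textbf{s}_{n+1-\ell_r}^{\,n})$ and recalling $\sigma_{\ell_r+1}=s_{\ell_r+1}+s_{n-\ell_r}$, a short case analysis then shows the pair $(s_{\ell_r+1},s_{n-\ell_r})$ is uniquely forced whenever $a\neq b$ or $\sigma_{\ell_r+1}\in\{0,2\}$, and is genuinely ambiguous (the two assignments produce identical $\mathcal{T}_{\ell_r+1}$, since equal weights force equal prefix and suffix compositions) exactly when $a=b$ and $\sigma_{\ell_r+1}=1$, i.e. $s_{\ell_r+1}\neq s_{n-\ell_r}$. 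These ambiguous indices are precisely the ones counted by $\ell_s$.

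Next I would bound the tree. Each ambiguous step is a binary guess, and a wrong guess is detected the moment the running multiset fails $\mathcal{T}_{\ell_r}\subseteq C(\textbf{s})$, triggering a backtrack to the most recent guess. Since $\ell_s$ is supplied with the input, I would run the search with a cap of at most $\ell_s$ guesses along any branch: the reconstruction of $\textbf{s}$ uses exactly $\ell_s$ guesses and so survives the cap, while the cap confines the whole tree to at most $2^{\ell_s}$ root-to-leaf paths. For the per-path cost I would store $C(\textbf{s})$, whose cardinality is $\sum_{l=1}^{n}(n-l+1)=\Theta(n^2)$, in a balanced search tree keyed by composition, so that extracting the two longest unmatched compositions and inserting or matching a composition each cost $\mathcal{O}(\log n)$. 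A single path runs through $\mathcal{O}(n)$ inward steps and, across them, newly determines and matches $\mathcal{O}(n^2)$ compositions in total, for $\mathcal{O}(n^2\log n)$ per path. Multiplying gives the claimed $\mathcal{O}(2^{\ell_s}n^2\log n)$ bound for returning a set of strings that contains $\textbf{s}$. To recover all of $E_s$, I would drop the success-halt and the depth cap; since every $\textbf{v}\in E_s$ is reached after at most $\ell_s^{*}=\max_{u\in E_s}\ell_u$ guesses, the tree now has at most $2^{\ell_s^{*}}$ leaves and the total time is $\mathcal{O}(2^{\ell_s^{*}}n^2\log n)$.

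The hard part will be the first step: rigorously maintaining the invariant on $\mathcal{T}_{\ell_r}$ --- in particular proving that no undetermined substring of length $\ge n-\ell_r-1$ other than $\textbf{s}_1^{\,n-\ell_r-1}$ and $\textbf{s}_{\ell_r+2}^{\,n}$ survives in $C(\textbf{s})\setminus\mathcal{T}_{\ell_r}$, so that the peeling-by-subtraction is well defined --- together with the ``genuine ambiguity'' half of the dichotomy, namely that the two choices at an $\ell_s$-index yield identical composition multisets and are both extendable. Once the invariant and the dichotomy are in hand, the counting and data-structure bookkeeping are routine.
\end{sproof}
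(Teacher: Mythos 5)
You should first note a point of reference: the paper never proves this statement at all — it is quoted verbatim as \cite[Theorem~32]{acharya2014string} and used purely as a black box — so your sketch can only be judged on its own merits, not against an in-paper argument. On those merits, your first part is essentially sound: the invariant on $\mathcal{T}_{\ell_r}$ does hold (every substring of length $\geq n-\ell_r$, and every substring of length $n-\ell_r-1$ except $\textbf{s}_1^{n-\ell_r-1}$ and $\textbf{s}_{\ell_r+2}^{n}$, is determined by the prefix, suffix and $\Sigma^{\lceil n/2 \rceil}$, so the peeling-by-subtraction is well defined); branching occurs exactly when the running prefix and suffix weights agree and $\sigma_{\ell_r+1}=1$; the true path uses exactly $\ell_s$ guesses; and capping every branch at $\ell_s$ guesses (legitimate, since $\ell_s$ is part of the input) confines the search to $\mathcal{O}(2^{\ell_s})$ paths of cost $\mathcal{O}(n^2\log n)$ each.

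The genuine gap is in your second part. From ``every $\textbf{v}\in E_s$ is reached after at most $\ell_s^{*}$ guesses'' you conclude that the \emph{uncapped} tree has at most $2^{\ell_s^{*}}$ leaves; this is a non sequitur. Guess positions along a branch are a property of that branch's own partial reconstruction — its running weight difference returning to zero at a position with $\sigma=1$ — not of any string in $E_s$. Once a branch has deviated from every member of $E_s$ but has not yet tripped the compatibility test, its weight-difference walk is driven by whatever the mechanical peeling produces, and nothing in your argument prevents such a dead branch from guessing more than $\ell_s^{*}$ times before its incompatibility is detected; bounding the guesses accumulated on dead branches is precisely the content of the $2^{\ell_s^{*}}$ claim, and it cannot be sidestepped by iterative deepening (a sound stopping rule would require exactly the bound in question). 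A second, related flaw: your parenthetical justification of ``genuine ambiguity'' is wrong as stated. Equal weights do force $c(\textbf{s}_1^{\ell_r})=c(\textbf{s}_{n+1-\ell_r}^{n})$ (for binary strings, length and weight determine a composition), but the two assignments also generate compositions of the shorter substrings abutting the two new bits, and those multisets coincide only if, for every $m<\ell_r$, the length-$m$ suffix of the current prefix has the same weight as the length-$m$ prefix of the current suffix. So the two children of a guess node need not produce identical $\mathcal{T}_{\ell_r+1}$; one may die immediately. This is harmless for your part-one upper bound, but it shows the branching structure is subtler than your sketch assumes, and that subtlety is exactly what part two leaves unresolved.
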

If a string has a length that does not allow for unique reconstruction up to reversal, the algorithm returns a set of strings and in the process backtracks multiple times. Backtracking is possible even when the string is uniquely reconstructable, but a condition that ensures that the algorithm does not backtrack is that no prefix has a matching suffix of the same length and same weight. If the algorithm does not backtrack, the string has to be unique. This observation is crucial for our subsequent constructions and it motivates the use of Catalan-Bertrand paths discussed in what follows.

\begin{theorem}\text{\emph{ \textcolor{black}{(Whitworth [1878] , Bertrand [1887])}}} \label{thm:BW}
Among all strings comprising \textcolor{black}{ of } $a$ $0$s and $b$ $1$s, where $a \geq b$, there are ${a+b \choose a} - {a+b \choose a+1}$ strings in which every prefix has at least as many $0$s as $1$s. Note that when $a=b=h$, 
$${a+b \choose a} - {a+b \choose a+1} = \frac{1}{h+1} {2h \choose h} =C_h.$$ 
The number $C_h$ is known as the $h^{\text{th}}$ \emph{Catalan number}. \textcolor{black}{Note that the central binomial coefficient ${2h \choose h}$ also counts the number of strings of length $2h$ whose every prefix has at least as many $0$s as $1$s. Furthermore, note} that the scaled central binomial coefficient $\frac{1}{2}{2h \choose h}$ counts the number of strings of length $2h$ whose every prefix contains strictly more $0$s than $1$s. 
\end{theorem}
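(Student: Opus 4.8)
The plan is to prove all three claims through the lattice-path encoding together with André's reflection principle. I would first encode each binary string as a lattice path by mapping every $0$ to an up-step $+1$ and every $1$ to a down-step $-1$, so that the path starts at height $0$ and its height after a prefix equals the number of $0$s minus the number of $1$s in that prefix. Under this encoding the condition ``every prefix has at least as many $0$s as $1$s'' becomes ``the path never drops below height $0$,'' and a string with $a$ zeros and $b$ ones corresponds to a path with $a$ up-steps and $b$ down-steps ending at height $a-b \ge 0$. The total number of such paths is $\binom{a+b}{a}$, so it suffices to count the ``bad'' paths that reach height $-1$ at some point and subtract.

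The core step, and the one I expect to require the most care, is the reflection bijection for the bad paths. Given a bad path, I would let $\tau$ be the first time it touches height $-1$ and reflect the initial segment $[0,\tau]$ across the horizontal line $y=-1$; this swaps up- and down-steps on that segment and moves the starting point from $0$ to $-2$, while leaving the endpoint at $a-b$. A short bookkeeping check then shows the reflected path has $a+1$ up-steps and $b-1$ down-steps. Conversely, any path from $-2$ to $a-b \ge 0$ must pass through height $-1$ (it begins strictly below $-1$, ends at or above $0$, and changes height by $\pm 1$ per step), so reflecting back its initial segment up to the first such passage recovers a unique bad path. The delicate points to verify are that $\tau$ is well defined, that the two reflections are mutually inverse, and that the step-count bookkeeping is exact; granting these, the number of bad paths equals the number of unrestricted paths with $a+1$ up-steps and $b-1$ down-steps, namely $\binom{a+b}{a+1}$. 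Subtracting gives $\binom{a+b}{a}-\binom{a+b}{a+1}$, and setting $a=b=h$ and simplifying $\binom{2h}{h}-\binom{2h}{h+1}=\binom{2h}{h}\bigl(1-\tfrac{h}{h+1}\bigr)=\tfrac{1}{h+1}\binom{2h}{h}=C_h$ yields the Catalan specialization.

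For the central-binomial claim I would sum the formula just proved over all admissible weights. A nonnegative path of length $2h$ has $b$ ones for some $0 \le b \le h$ (the constraint $a \ge b$ with $a+b=2h$ forces $b \le h$), and the count for fixed $b$ is $\binom{2h}{b}-\binom{2h}{b-1}$ by the first claim. The sum $\sum_{b=0}^{h}\bigl(\binom{2h}{b}-\binom{2h}{b-1}\bigr)$ telescopes to $\binom{2h}{h}$ since $\binom{2h}{-1}=0$, establishing that $\binom{2h}{h}$ counts all length-$2h$ strings whose every prefix has at least as many $0$s as $1$s.

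Finally, for the strict-inequality claim I would use a prepend bijection. A length-$2h$ string has every prefix containing strictly more $0$s than $1$s exactly when it begins with $0$ and, after deleting this leading $0$, the remaining length-$(2h-1)$ string has every prefix with at least as many $0$s as $1$s; the strict margin of $1$ supplied by the leading $0$ is precisely what converts the weak condition on the suffix into the strict condition on the whole string. This gives a bijection onto the nonnegative strings of length $2h-1$, whose number I would obtain by the same telescoping argument as $\binom{2h-1}{h-1}$. The symmetry $\binom{2h-1}{h-1}=\binom{2h-1}{h}$ together with $\binom{2h}{h}=\binom{2h-1}{h}+\binom{2h-1}{h-1}$ then forces $\binom{2h-1}{h-1}=\tfrac{1}{2}\binom{2h}{h}$, completing the proof.
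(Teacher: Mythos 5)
Your proposal is correct, and on the parts the paper actually proves it follows essentially the same route: the paper's appendix establishes the weak-prefix count by exactly your telescoping sum of the ballot numbers $\binom{\ell}{a}-\binom{\ell}{a+1}$ over admissible weights, and it handles the strict-prefix count by the same prepend/delete-leading-zero bijection onto weak-prefix strings of length $2h-1$, finishing with $2\binom{2h-1}{h}=\binom{2h}{h}$. The one genuine difference is scope: the paper treats the first formula $\binom{a+b}{a}-\binom{a+b}{a+1}$ as a classical cited result (Whitworth/Bertrand) and never proves it, whereas you derive it from scratch via the lattice-path encoding and Andr\'e's reflection principle, with correct bookkeeping (the reflected bad path has $a+1$ up-steps and $b-1$ down-steps, the first passage through height $-1$ is preserved under reflection, so the two maps are mutually inverse). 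This makes your argument self-contained at the cost of extra length, while the paper's version is shorter but rests on the citation; your closing algebra for the Catalan specialization and the identity $\binom{2h-1}{h-1}=\tfrac{1}{2}\binom{2h}{h}$ both check out.
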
 

\textcolor{black}{The second part of Theorem~\ref{thm:BW} is proved in Appendix~\ref{app:BW}.}

Strings that have the property that their every prefix contains \emph{strictly} more $0$s than $1$s are henceforth referred to as \emph{Catalan-Bertrand} strings.

We also find the following bounds on the central binomial coefficient useful in our subsequent derivations. 
\begin{proposition} \label{prop2}
\emph{The central binomial coefficient may be bounded~\cite{speyer2016upper} as:}
\begin{equation}
    \frac{2^{2h}}{\sqrt{\pi (h+1)}}  \leq {2h \choose h} \leq \frac{2^{2h}}{\sqrt{\pi h}} ,\; \forall \, h \geq 1.
\end{equation}
\end{proposition}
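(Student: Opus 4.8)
The plan is to reduce the statement to a sharp estimate on the normalized quantity $a_h \define \binom{2h}{h}/2^{2h}$, for which the target inequalities read $\tfrac{1}{\sqrt{\pi(h+1)}} \le a_h \le \tfrac{1}{\sqrt{\pi h}}$. First I would rewrite $a_h$ as the telescoping product $a_h = \prod_{k=1}^{h} \frac{2k-1}{2k}$, which follows immediately from $a_h = \frac{(2h)!}{4^h (h!)^2}$ together with the identity $(2h)! = 2^h\, h! \cdot \prod_{k=1}^h (2k-1)$. This product is precisely the object governed by the Wallis integrals, which is where the factor $\pi$ must enter.

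Concretely, I would introduce $I_n \define \int_0^{\pi/2} \sin^n x \, dx$, recall $I_0 = \pi/2$, $I_1 = 1$, and the standard reduction $I_n = \frac{n-1}{n} I_{n-2}$ obtained by integration by parts. Iterating the recurrence separately along even and odd indices yields the closed forms $I_{2h} = \frac{\pi}{2}\, a_h$ and $I_{2h+1} = \frac{1}{(2h+1)\,a_h}$; each is verified by a short induction on $h$ using the product formula above. The engine of the proof is the monotonicity of $I_n$: since $0 \le \sin x \le 1$ on $[0,\pi/2]$ we have $\sin^{n+1}x \le \sin^n x$, hence $I_{2h+1} \le I_{2h} \le I_{2h-1}$ for every $h \ge 1$.

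Feeding the closed forms into these two comparisons gives everything at once. The inequality $I_{2h} \le I_{2h-1}$ becomes $\frac{\pi}{2} a_h \le \frac{1}{2h\,a_h}$ after simplifying $I_{2h-1}$ via $a_{h-1} = a_h \cdot \frac{2h}{2h-1}$, which rearranges to $a_h^2 \le \frac{1}{\pi h}$ and hence to the claimed upper bound $\binom{2h}{h} \le 2^{2h}/\sqrt{\pi h}$. Dually, $I_{2h+1} \le I_{2h}$ becomes $\frac{1}{(2h+1)a_h} \le \frac{\pi}{2} a_h$, i.e. $a_h^2 \ge \frac{2}{\pi(2h+1)} = \frac{1}{\pi(h+\frac12)}$; since $h + \frac12 \le h+1$ this already implies the (slightly weaker) stated lower bound $\binom{2h}{h} \ge 2^{2h}/\sqrt{\pi(h+1)}$. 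Because the monotonicity comparisons hold for all $h \ge 1$, no separate base case is required.

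I expect the only real obstacle to be conceptual rather than computational: there is no purely algebraic manipulation of $a_h$ that produces the constant $\pi$, so one must supply an analytic bridge, and the Wallis integral $I_0 = \pi/2$ is the cleanest such bridge. An alternative route avoids integrals by showing directly that $a_h\sqrt{h}$ is increasing and $a_h\sqrt{h+1}$ is decreasing — both via the elementary ratio $a_{h+1}/a_h = \frac{2h+1}{2h+2}$ — and then invoking $\lim_{h\to\infty} a_h\sqrt{h} = 1/\sqrt{\pi}$. However, this still imports $\pi$ through the limit value (Stirling or Wallis), so it merely relocates the same analytic input; I would therefore favor the self-contained Wallis-integral argument.
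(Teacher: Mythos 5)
Your proof is correct, but there is nothing in the paper to compare it against: the paper never proves Proposition~\ref{prop2}, it simply quotes the estimate and attributes it to the external reference~\cite{speyer2016upper}. Your Wallis-integral argument is a valid, self-contained substitute. The product representation $a_h=\prod_{k=1}^{h}\frac{2k-1}{2k}$, the closed forms $I_{2h}=\frac{\pi}{2}a_h$ and $I_{2h+1}=\frac{1}{(2h+1)a_h}$ (both verified by induction through the reduction $I_n=\frac{n-1}{n}I_{n-2}$), and the monotonicity chain $I_{2h+1}\le I_{2h}\le I_{2h-1}$ all check out; the two comparisons rearrange exactly as you state to $a_h^2\le\frac{1}{\pi h}$ and $a_h^2\ge\frac{1}{\pi(h+\frac{1}{2})}\ge\frac{1}{\pi(h+1)}$, which are the claimed bounds (your lower bound is in fact slightly sharper than the one stated, with $h+\tfrac12$ in place of $h+1$). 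What your route buys is self-containedness and a transparent account of where the constant $\pi$ enters, namely through $I_0=\pi/2$; what the authors' citation buys is brevity, which is defensible here since the proposition is used only to lower-bound $|\mathcal{S}_R(n)|$ in Section~\ref{sec:recons} and Appendix~\ref{app:derivation}, where the precise constant is immaterial. One small presentational caveat: if this were to be spliced into the paper, the simplification of $I_{2h-1}$ via $a_{h-1}=a_h\cdot\frac{2h}{2h-1}$ deserves the one-line display you allude to, since it is the only step a reader might not do in their head.
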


\section{Reconstruction Codes} \label{sec:recons}

We describe next a family of efficiently encodable and decodable reconstruction codes that map strings of any length $k$ into strings of length $n \leq k+  \frac{1}{2} \log\,(k) + \textcolor{black}{ 5}$. 

\textcolor{black}{Recall that for a given string of length $n$, the system of $\lceil \frac{n}{2} \rceil$ linear equations with $\lceil \frac{n}{2} \rceil$ unknowns given by~\eqref{eq:sigmas} can be solved efficiently. Thus, for all error-free composition sets, one can find $\Sigma^{\lceil \frac{n}{2} \rceil}$. Therefore, the problem of interest is to determine $\textbf{s}$ given $\Sigma^{\lceil \frac{n}{2} \rceil}$ and $C(\textbf{s})$. 
Note that when $\text{wt}(\textbf{s}_1^i) \not = \text{wt}(\textbf{s}^{n}_{n +1 -i})$,~\cite[Lemma~31]{acharya2014string} asserts that $C(\textbf{s}), \textbf{s}_1^i,$ and $\textbf{s}_{n-i+1}^n$ determine the ordered pair $(s_{i+1},s_{n-i})$. }

The previous lemma ~\cite[Lemma~31]{acharya2014string} will be used to guide our construction of a reconstructible code based on Catalan-Bertrand strings. We proceed as follows. Let $I \subseteq [n]$. The string formed by concatenating bits at positions in $I$ in-order is denoted by $\textbf{s}_{I}$. We define a reconstruction code $\mathcal{S}_R(n)$ of even length $n$ as follows:
\begin{align}
    \mathcal{S}_R(n) = &\{ \textbf{s} \in \{ 0,1\}^{n}, s_1 =0, s_n =1, \text{ and } \label{set11} \\
 & \exists \; I \subseteq \{ \textcolor{black}{1}, 2, \dots, n-1, \textcolor{black}{n}\} \text{ such that} \nonumber \\ 
 & \qquad \qquad \qquad \quad \quad \quad \quad \text{ for all } i \in I, s_i \not = s_{n+1-i}, \nonumber\\
 & \qquad \qquad \qquad \quad \quad \quad \quad  \text{           } \text{for all }  i \not \in I, s_i = s_{n+1-i}, \nonumber \\
                        &\textbf{s}_{[\frac{n}{2}] \cap I}  \text{ is a Catalan-Bertrand string}.         \nonumber                \} 
                        \end{align}

For odd $n$, we define the codebook as $$\mathcal{S}_R(n)=\cup_{\textbf{s} \in  \mathcal{S}_R(n-1)} \{\textbf{s}_1^{\frac{n-1}{2}}\; 0 \; \textbf{s}_{\frac{n+1}{2}}^{n-1}, \; \textbf{s}_1^{\frac{n-1}{2}} \; 1 \, \textbf{s}_{\frac{n+1}{2}}^{n-1}\}.$$ 

The following proposition is an immediate consequence of the above construction.
\begin{lemma} \label{lem1}
Consider a string $\textbf{s}\in \mathcal{S}_R(n)$. For all prefix-suffix pairs of length $ 1\leq  j \leq \frac{n}{2}$, one has $\emph{wt}(\textbf{s}_1^j) \not = \emph{wt}(\textbf{s}^{n}_{n +1 -j})$.
\end{lemma}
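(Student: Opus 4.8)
The plan is to reduce the statement to the even-length case and then carry out a short signed-sum computation that ties the prefix--suffix weight gap directly to the strict imbalance enforced by the Catalan--Bertrand condition.

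First I would dispose of odd $n$. If $\textbf{s}\in\mathcal{S}_R(n)$ with $n$ odd, then by construction $\textbf{s}$ is obtained from some $\textbf{t}\in\mathcal{S}_R(n-1)$ (of even length $n-1$) by inserting a single bit into the central position $\frac{n+1}{2}$. For any $j\le\lfloor\frac{n}{2}\rfloor=\frac{n-1}{2}$, the prefix $\textbf{s}_1^j$ and the suffix $\textbf{s}_{n+1-j}^{n}$ lie strictly to the left and strictly to the right of the inserted bit, so they coincide with $\textbf{t}_1^j$ and $\textbf{t}_{(n-1)+1-j}^{\,n-1}$ respectively. Hence $\text{wt}(\textbf{s}_1^j)=\text{wt}(\textbf{t}_1^j)$ and $\text{wt}(\textbf{s}_{n+1-j}^{n})=\text{wt}(\textbf{t}_{(n-1)+1-j}^{\,n-1})$, so the claim for $\textbf{s}$ reduces to the claim for $\textbf{t}$. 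It therefore suffices to treat even $n$.

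For even $n$, fix $1\le j\le\frac{n}{2}$ and write the weight gap as a signed sum,
\[
\text{wt}(\textbf{s}_1^j)-\text{wt}(\textbf{s}_{n+1-j}^{n})=\sum_{i=1}^{j}\bigl(s_i-s_{n+1-i}\bigr).
\]
By the defining property of $\mathcal{S}_R(n)$, the term $s_i-s_{n+1-i}$ vanishes whenever $i\notin I$ and equals $\pm1$ whenever $i\in I$ (it is $+1$ if $s_i=1$ and $-1$ if $s_i=0$). Since $j\le\frac{n}{2}$, the indices $i\le j$ lying in $I$ are exactly the smallest elements of $[\frac{n}{2}]\cap I$, say $i_1<\dots<i_t$, and by definition $s_{i_1}\cdots s_{i_t}$ is precisely the length-$t$ prefix of the Catalan--Bertrand string $\textbf{s}_{[\frac{n}{2}]\cap I}$. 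Consequently
\[
\sum_{i=1}^{j}\bigl(s_i-s_{n+1-i}\bigr)=\bigl(\text{number of }1\text{'s}\bigr)-\bigl(\text{number of }0\text{'s}\bigr)
\]
counted over this prefix. Because $s_1=0$ and $s_n=1$ we have $s_1\ne s_{n+1-1}$, so $1\in I$ and thus $1\in[\frac{n}{2}]\cap I$; this forces $t\ge1$ for every $j\ge1$, so the prefix is nonempty. The Catalan--Bertrand property then guarantees that this nonempty prefix has strictly more $0$'s than $1$'s, making the signed sum strictly negative and in particular nonzero. Hence $\text{wt}(\textbf{s}_1^j)\ne\text{wt}(\textbf{s}_{n+1-j}^{n})$, as claimed.

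The one delicate point, and the step I would be most careful about, is the nonemptiness of the selected prefix: had it been empty, the signed sum would be $0$ and the conclusion would fail. This is exactly what the anchoring conditions $s_1=0$ and $s_n=1$ prevent, since they place index $1$ in $I$ and thereby force a contribution for every $j\ge1$. The rest is routine bookkeeping, namely verifying that $[j]\cap I$ selects an initial segment of $[\frac{n}{2}]\cap I$ (which holds precisely because $j\le\frac{n}{2}$ keeps all relevant indices in the first half) and that the $\pm1$ sign convention matches the $0/1$ count in the Catalan--Bertrand prefix.
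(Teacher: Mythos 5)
Your proof is correct and takes the route the paper intends: the paper states this lemma with no proof at all, calling it an immediate consequence of the construction of $\mathcal{S}_R(n)$. Your signed-sum argument---terms vanish off $I$, the terms on $I$ reproduce the $0$/$1$ imbalance of a nonempty prefix of the Catalan--Bertrand string (nonempty because $s_1=0$ and $s_n=1$ force $1\in I$)---together with the reduction of odd $n$ to even $n$ via the middle-bit insertion, is exactly that consequence made explicit.
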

The proof of Theorem~1 follows from the fact that $\mathcal{S}_R(n)$ is a reconstruction code, which may be easily established from the guarantees for the Backtracking algorithm and~Lemma \ref{lem1}.

\textcolor{black}{For $n$ even,} the size of $\mathcal{S}_R(n)$ may be bounded as: \textcolor{black}{
\begin{align*}
    |\mathcal{S}_R(n)| = \sum_{i=0}^{\frac{n}{2} -1} {\frac{n}{2} - 1 \choose i} 2^{\frac{n}{2} -1 -i} {i \choose  \lfloor \frac{i}{2} \rfloor } 
     \geq \frac{ 2^{n-3} }{\sqrt{ \pi n}} \, . 
\end{align*} }
The first equality follows from the description of the codebook, while the second inequality follows from Proposition \ref{prop2} and the binomial theorem. \textcolor{black}{For odd $n,$ $ |\mathcal{S}_R(n)| = 2  |\mathcal{S}_R(n-1)| \geq \frac{ 2^{n-3} }{\sqrt{ \pi n}}.$ Further details are provided in Apprendix~\ref{app:derivation}.} As $2^k \leq |\mathcal{S}_R(n)|$, simple algebraic manipulation reveals that the redundancy of the reconstruction code for information lengths $k$ is at most 
$1/2  \log\,(k) +5$ \textcolor{black}{for all $k \geq 8$}. 

\textcolor{black}{Given an information string of length $k$, the encoding algorithm returns a reconstructable string of length $n$. The encoding algorithm that accompanies our reconstruction codebook (a bijective map between the set of all information strings of length $k$ and a subset of the reconstructable strings of length $n$) can be implemented using simple lexicographical rankings of Catalan-Bertrand strings and symmetric strings. This encoding technique requires $\mathcal{O}(n^2)$ time (see Appendix~\ref{app:construction} for details). However, as described in~\cite{durocher2012cool}, there exist other ordering-based constructions for Catalan strings that may be used to further increase the efficiency of the encoding algorithm. The Backtracking algorithm reconstructs the coded string in $\mathcal{O}(n^3)$ time. The coded string is then mapped to the information string via the inverse encoding map, which takes an additive $\mathcal{O}(n^2)$ time. Thus, the overall reconstruction time remains $\mathcal{O}(n^3)$.  This concludes the proof of Theorem~1.}

\section{Error-Correcting Reconstruction Codes: The Asymmetric Setting} \label{sec:asymmetric}

For clarity of exposition, we will start with a discussion of single error-correcting reconstruction codes, as they illustrate the use of Catalan-Bertrand paths and are conceptually easy to extend for the case of multiple composition errors. Our reconstruction codes with composition error-correcting capabilities are derived using the interleaving procedure described in the previous section, and they require adding an additional logarithmic number of redundant bits to recover the sequence $\Sigma^{\lceil \frac{n}{2} \rceil}$.

\subsection{Single Error-Correcting Reconstruction Codes}  \label{sec:asymmetricS}
 
Let $\mathcal{S}_{R}(n-2)$ be the code of odd length $n-2$ \textcolor{black}{, $\lceil \frac{n}{2} \rceil $ divisible by three, as} described in the previous section. Then, a single (symmetric or asymmetric) composition error-correcting code of length $n$, $\mathcal{S}^{(1)}_{C}(n),$ can be constructed by adding two bits to each string in $\mathcal{S}_R(n-2)$ and subsequently fixing the value of one additional bit. These three redundant bits allow us to uniquely recover the sequence $\Sigma^{\lceil \frac{n}{2}\rceil }$ in the presence of a single composition error. \textcolor{black}{As will be seen from our subsequent derivations, given $\Sigma^{\lceil \frac{n}{2} \rceil}$ and the erroneous composition set of $\textbf{s}$, one can reconstruct $\textbf{s}$.} 

To prove Theorem~2, let $\tilde{C}(\textbf{s})$ denote the set obtained by introducing a single error in the composition set $C(\textbf{s})$ of a string $\textbf{s}$ of length $n$. Recall that $w_j$ stands for the cumulative weight of compositions of length $j$ in $C$ and that $w_j = w_{n-j+1}$. Let $\tilde{w}_j$ denote the cumulative weight of compositions in $\tilde{C}_j$. It is straightforward to prove the following proposition.

\begin{proposition}\label{prop:wsback} Let \textcolor{black}{$j \in \left[ \lceil \frac{n}{2} \rceil \right]$}. Then, 
$$w_j  \textcolor{black}{=}  j w_1 -   \sum_{i=1}^{j-1} i \, \sigma_{j-i},$$
\textcolor{black}{which implies}
\textcolor{black}{$$ j w_1 -   \sum_{i=2}^{j-1} i \, \sigma_{j-i}  \leq w_j \leq  j w_1 -   \sum_{i=2}^{j-1} i \, \sigma_{j-i} + 2.$$}
\end{proposition}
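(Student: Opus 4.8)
The plan is to get the displayed equality directly from the linear system~\eqref{eq:sigmas}, which is already established in the technical background, and then to read off the two-sided estimate by peeling off a single summand whose value is a priori confined to $\{0,1,2\}$. First I would clear denominators in~\eqref{eq:sigmas}: multiplying its $j$-th equation by $j$ gives
\[
w_j \;=\; \sigma_1 + 2\sigma_2 + \cdots + (j-1)\sigma_{j-1} + j\big(\sigma_j + \sigma_{j+1} + \cdots + \sigma_{\lceil \frac{n}{2}\rceil}\big)\;=\;\sum_{l=1}^{\lceil \frac{n}{2}\rceil}\min(l,j)\,\sigma_l,
\]
so the coefficient of $\sigma_l$ in $w_j$ is $\min(l,j)$. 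Since $w_1=\sum_{l=1}^{\lceil n/2\rceil}\sigma_l$, subtracting term by term yields
\[
j\,w_1 - w_j \;=\; \sum_{l=1}^{\lceil \frac{n}{2}\rceil}\big(j-\min(l,j)\big)\sigma_l\;=\;\sum_{l=1}^{j-1}(j-l)\,\sigma_l,
\]
the summand vanishing for $l\ge j$. Re-indexing with $i=j-l$ (so $l=j-i$ and $i$ runs from $1$ to $j-1$) turns the right-hand side into $\sum_{i=1}^{j-1} i\,\sigma_{j-i}$, which is exactly the claimed identity $w_j = j w_1 - \sum_{i=1}^{j-1} i\,\sigma_{j-i}$.

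For the inequalities I would simply separate out the $i=1$ term of this last sum. Writing $\sum_{i=1}^{j-1} i\,\sigma_{j-i} = \sigma_{j-1} + \sum_{i=2}^{j-1} i\,\sigma_{j-i}$ shows that $w_j$ differs from $j w_1 - \sum_{i=2}^{j-1} i\,\sigma_{j-i}$ by exactly the single quantity $\sigma_{j-1}$. Because $j\le \lceil \tfrac n2\rceil$ forces $j-1\le\lfloor \tfrac n2\rfloor$, the index $j-1$ labels a genuine bit-pair, so $\sigma_{j-1}=\text{wt}(s_{j-1}s_{n+2-j})\in\{0,1,2\}$. Hence $w_j$ is pinned to within an additive $2$ of $j w_1 - \sum_{i=2}^{j-1} i\,\sigma_{j-i}$, which is the asserted two-sided bound of width $2$.

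The content is entirely routine, so the only thing requiring care is bookkeeping: justifying the clearing of denominators in~\eqref{eq:sigmas} to obtain the coefficient $\min(l,j)$, and performing the reindexing $i=j-l$ without an off-by-one slip. The one genuine point worth checking is the boundary claim that $\sigma_{j-1}$ is a two-bit pair weight rather than the lone middle bit of an odd-length string; this is precisely where the hypothesis $j\in[\lceil \tfrac n2\rceil]$ (which keeps $j-1$ strictly below the centre) is used, and it is what guarantees the interval has width $2$ rather than $1$. The degenerate cases $j=1,2$, where one or both of the sums are empty, should be noted separately but are immediate.
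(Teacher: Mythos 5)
Your proof of the displayed equality is correct and is essentially the paper's own argument: the paper likewise starts from the coefficient identity $w_j=\sigma_1+2\sigma_2+\cdots+(j-1)\sigma_{j-1}+j\sum_{i\ge j}\sigma_i$ (which is exactly~\eqref{eq:sigmas} with the denominator cleared), substitutes $w_1=\sum_i\sigma_i$, and regroups to obtain $w_j=jw_1-\sum_{i=1}^{j-1}i\,\sigma_{j-i}$. Nothing to add on that part.

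The inequality step is where you need more care. Your own algebra gives $w_j = X-\sigma_{j-1}$, where $X = jw_1-\sum_{i=2}^{j-1}i\,\sigma_{j-i}$ and $\sigma_{j-1}\in\{0,1,2\}$, hence $X-2\le w_j\le X$. That is \emph{not} the interval printed in the proposition, which is $X\le w_j\le X+2$; the two coincide only when $\sigma_{j-1}=0$. Your closing claim that being ``pinned to within an additive $2$'' of $X$ ``is the asserted two-sided bound'' conflates the symmetric statement $|w_j-X|\le 2$ with the specific one-sided interval asserted, and hides the sign of the correction term. In fact the printed inequality is falsified by a tiny example: for $\textbf{s}=1010$ and $j=2$ one has $w_2=3$ while $X=2w_1=4$, so the printed lower bound fails, whereas your interval $[X-2,X]=[2,4]$ holds. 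So your derivation is the right one, and what it exposes is that the proposition's displayed inequality has the $2$ on the wrong side (a slip the paper never confronts, since its own proof stops at the equality and treats the inequality as an unexamined ``which implies''). A complete write-up should state the interval you actually proved, namely $jw_1-\sum_{i=2}^{j-1}i\,\sigma_{j-i}-2\le w_j\le jw_1-\sum_{i=2}^{j-1}i\,\sigma_{j-i}$, and flag the discrepancy rather than declare agreement. Your side remark about $j-1\le\lfloor \frac{n}{2}\rfloor$ is correct but not load-bearing: even if $\sigma_{j-1}$ were the lone middle bit it would lie in $\{0,1\}\subset\{0,1,2\}$, so the width-$2$ bound holds either way.
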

\begin{proof}
\textcolor{black}{
Note that $w_j = \sum_{i=1}^{j-1} i \sigma_i + \sum_{i=j}^{\lceil \frac{n}{2} \rceil} j \sigma_i$. Since $w_1 = \sum_{i=1}^{\lceil \frac{n}{2} \rceil} \sigma_i$, we have
\begin{align*}
    w_j =& \sigma_1 + 2 \sigma_2 + \cdots + (j-1) \sigma_{j-1} + j \sum_{i=j}^{\lceil \frac{n}{2} \rceil} \sigma_j \\
    =& j \left( \sum_{i=1}^{\lceil \frac{n}{2} \rceil} \sigma_i \right) - \sigma_{j-1} - 2 \sigma_{j-2} - \cdots - (j-1) \sigma_1. \\
    =& j w_1 - \sum_{i=1}^{j-1} i \sigma_{j-i}.
\end{align*}}
\end{proof}
This result immediately implies the next proposition.
\begin{proposition}\label{prop:cor} Let \textcolor{black}{$j \in \left[ \lceil \frac{n}{2} \rceil \right]$} and suppose that we are given $w_1, \sigma_1, \ldots, \textcolor{black}{\sigma_{j-2}}$. Then, the value $w_j \bmod 3$ uniquely determines $w_j$.
\end{proposition}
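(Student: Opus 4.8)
The plan is to isolate the single degree of freedom that remains once $w_1, \sigma_1, \ldots, \sigma_{j-2}$ are fixed, and then exploit that this unknown takes only three values. Starting from the identity $w_j = j w_1 - \sum_{i=1}^{j-1} i\,\sigma_{j-i}$ of Proposition~\ref{prop:wsback}, I would peel off the $i=1$ summand, which equals $\sigma_{j-1}$ and carries coefficient $1$. Writing $A := j w_1 - \sum_{i=2}^{j-1} i\,\sigma_{j-i}$, this gives $w_j = A - \sigma_{j-1}$, where $A$ depends only on the data $w_1, \sigma_1, \ldots, \sigma_{j-2}$ we are permitted to use. Thus the whole problem reduces to recovering $\sigma_{j-1}$.

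The key step is the observation that $\sigma_{j-1} = \text{wt}(s_{j-1}\,s_{n+2-j})$ is the weight of a two-symbol block, hence $\sigma_{j-1} \in \{0,1,2\}$. Consequently $w_j = A - \sigma_{j-1}$ is confined to the three consecutive integers $A-2, A-1, A$, precisely the length-three window asserted by the inequality in Proposition~\ref{prop:wsback}. Since any three consecutive integers realize all three residue classes modulo $3$, the reduction $x \mapsto x \bmod 3$ is injective on $\{A-2, A-1, A\}$. Therefore $w_j \bmod 3$ singles out which of the three candidates $w_j$ equals, and hence determines $w_j$ (equivalently, it recovers $\sigma_{j-1}$ via $\sigma_{j-1} \equiv A - w_j \pmod 3$).

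I expect no substantive obstacle: the entire argument is the pigeonhole-style fact that an unknown ranging over an alphabet of size $3$, entering with unit coefficient, is resolved by a single residue modulo $3$. The only point deserving a line of care is verifying that $w_1, \sigma_1, \ldots, \sigma_{j-2}$ indeed determine every term of $\sum_{i=1}^{j-1} i\,\sigma_{j-i}$ except the $\sigma_{j-1}$ term; this is immediate from the coefficient pattern in Proposition~\ref{prop:wsback}, where $\sigma_{j-1}$ is the unique summand not indexed by one of $\sigma_1, \ldots, \sigma_{j-2}$.
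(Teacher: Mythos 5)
Your proof is correct and is exactly the argument the paper intends: Proposition~\ref{prop:cor} is presented there as an immediate consequence of Proposition~\ref{prop:wsback}, and your writeup fills in precisely that implication (isolate the unit-coefficient term $\sigma_{j-1}$, note $\sigma_{j-1}\in\{0,1,2\}$, so $w_j$ ranges over three consecutive integers, which occupy distinct residue classes modulo $3$). One remark: your window $\{A-2, A-1, A\}$ is the correct one, whereas the paper's displayed inequality places the window at $[A, A+2]$ --- an apparent sign slip in the paper that does not affect the argument, since only the fact that $w_j$ is confined to three consecutive known integers is used.
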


We also need the following three propositions. 

\begin{proposition}\label{prop:w1} Given $\emph{wt}(\textbf{s}) \bmod 2,$ $\tilde{w}_n$ and $\tilde{w}_1$, one can recover $w_1$. 
\end{proposition}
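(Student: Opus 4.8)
The plan is to exploit the structural asymmetry between the two extreme composition lengths $1$ and $n$. First I would record that the true cumulative weights satisfy $w_1 = w_n = \text{wt}(\textbf{s})$, so both carry the parity we are handed, namely $w_1 \bmod 2 = \text{wt}(\textbf{s}) \bmod 2$. A single composition error alters exactly one composition, lying in exactly one multiset $C_j$; since the length-$1$ and length-$n$ multisets are distinct (for $n \geq 2$), at most one of $\tilde{w}_1, \tilde{w}_n$ can be corrupted while the other equals $w_1$. The key quantitative observation is the asymmetry in how far each can move. Every length-$1$ composition is either $0$ or $1$, so corrupting one of them shifts $\tilde{w}_1$ by exactly $\pm 1$ and hence flips its parity; by contrast, the single length-$n$ composition is $0^{n-w}1^{w}$ with $w = w_1$ and may be corrupted to $0^{n-w'}1^{w'}$ for any $w' \neq w$, so $\tilde{w}_n$ can move by an arbitrary nonzero amount of either parity.

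From here the recovery rule falls out by a short case analysis. If $\tilde{w}_1 = \tilde{w}_n$, then neither $C_1$ nor $C_n$ can have been hit, since a hit on either would force a strict change and break the equality; thus $w_1 = \tilde{w}_1$. If $\tilde{w}_1 \neq \tilde{w}_n$, then the error must lie in $C_1$ or $C_n$ (an error in any middle $C_j$ would leave both untouched and equal), exactly one of the two observed values is corrupted, and the uncorrupted one equals $w_1$ and carries the known parity. When the two observed values have opposite parities, exactly one matches $\text{wt}(\textbf{s}) \bmod 2$, and I would output that one as $w_1$. When they share a parity, necessarily the correct one, a $C_1$-error is ruled out because it always flips parity and would force opposite parities; hence the error lies in $C_n$, the value $\tilde{w}_1$ is untouched, and I would output $w_1 = \tilde{w}_1$.

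The only delicate point is this last subcase, in which the length-$n$ composition is corrupted by an \emph{even} shift so that $\tilde{w}_1$ and $\tilde{w}_n$ differ yet share the correct parity; parity alone cannot then single out the right value. The resolution is precisely the $\pm 1$-versus-arbitrary asymmetry isolated at the outset: since a $C_1$-error can never preserve parity, a same-parity disagreement unambiguously localizes the error to $C_n$. I expect this asymmetry to be the crux of the argument, with the remaining cases reducing to routine bookkeeping.
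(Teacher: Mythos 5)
Your proof is correct and rests on exactly the same idea as the paper's: a single composition error in $C_1$ shifts $\tilde{w}_1$ by exactly $\pm 1$ (flipping its parity) while leaving $\tilde{w}_n$ untouched, so comparing $\tilde{w}_1 \bmod 2$ against the stored $\mathrm{wt}(\textbf{s}) \bmod 2$ localizes the error and selects the uncorrupted value. Your case split on whether $\tilde{w}_1$ and $\tilde{w}_n$ share a parity is merely a finer bookkeeping of the paper's rule (output $\tilde{w}_1$ if its parity matches, else $\tilde{w}_n$), and the two decision procedures produce identical outputs in every case.
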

\begin{proof} If $\tilde{w}_n = \tilde{w}_1$, then clearly $w_1=\tilde{w}_n = \tilde{w}_1$. Hence, suppose that $\tilde{w}_n \neq \tilde{w}_1$ and observe that $|\tilde{w}_1 - w_1| \leq 1$. 
The last inequality follows since at most one composition error is allowed. If $\tilde{w}_1 \bmod 2 = \text{wt}(\textbf{s}) \bmod 2$, then $w_1 =  \tilde{w}_1$; 
otherwise, $w_1 = \tilde{w}_n$.
\end{proof}

\begin{proposition}\label{prop:change} Suppose that $n$ is odd and that either $ \lceil \frac{n}{2} \rceil + 1$ or $ \lceil \frac{n}{2} \rceil$ is divisible by $3$. Assume that $\textbf{s} = s_1 \, \ldots \, s_{ \lceil \frac{n}{2} \rceil} \, \ldots s_n,$ and let $\textbf{s}' = s_1 \, \ldots \, 1-s_{\lceil \frac{n}{2} \rceil} \, \ldots \, s_n$. Then,
\begin{align*}
\sum_{i=1}^{ \lceil \frac{n}{2} \rceil} w_i(\textbf{s}) \equiv \sum_{i=1}^{\lceil \frac{n}{2} \rceil} w_i(\textbf{s}') \bmod 3.
\end{align*}
\end{proposition}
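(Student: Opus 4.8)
The plan is to track exactly how the central-bit flip propagates, first through the paired-weight sequence $(\sigma_i)$ and then through the cumulative weights $w_i$, and finally to recognize that the total change collapses to a single triangular number whose divisibility by $3$ is governed precisely by the stated hypothesis.

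Write $m = \lceil \frac{n}{2} \rceil = \frac{n+1}{2}$, so that the flipped coordinate is the central position $m$. First I would observe that flipping $s_m$ leaves every $\sigma_i$ with $i < m$ untouched: for $i < m$ the paired positions $i$ and $n+1-i$ are both distinct from $m$, so $\sigma_i(\textbf{s}') = \text{wt}(s_i s_{n+1-i})$ is unchanged. The only affected quantity is $\sigma_m = \text{wt}(s_m) = s_m$, which becomes $\sigma_m(\textbf{s}') = 1 - s_m$. Hence the single increment $\Delta = \sigma_m(\textbf{s}') - \sigma_m(\textbf{s}) = 1 - 2 s_m \in \{+1,-1\}$ captures the entire change in the $\sigma$-sequence.

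Next I would feed this into the expression for $w_j$ established in the proof of Proposition~\ref{prop:wsback},
\[ w_j = \sum_{i=1}^{j-1} i\,\sigma_i + j \sum_{i=j}^{m} \sigma_i, \qquad 1 \le j \le m. \]
Since $m \ge j$ throughout this range, $\sigma_m$ appears only in the second sum, and there with coefficient $j$; it does not appear in the first sum, whose indices run only up to $j-1 \le m-1$. Consequently $w_j(\textbf{s}') - w_j(\textbf{s}) = j\,\Delta$, and summing over $j$ gives
\[ \sum_{j=1}^{m} w_j(\textbf{s}') - \sum_{j=1}^{m} w_j(\textbf{s}) = \Delta \sum_{j=1}^{m} j = \Delta\,\frac{m(m+1)}{2}. \]

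Finally I would reduce modulo $3$. Because $\Delta = \pm 1$, this difference vanishes modulo $3$ exactly when $3 \mid \frac{m(m+1)}{2}$; and since $3$ is coprime to $2$, this holds iff $3 \mid m(m+1)$, i.e. iff $m \equiv 0$ or $m \equiv 2 \pmod 3$. This is precisely the hypothesis that either $m = \lceil \frac{n}{2} \rceil$ or $m+1 = \lceil \frac{n}{2}\rceil + 1$ is divisible by $3$, which completes the argument. The computation is short, so there is no serious obstacle; the only point demanding care is the very first step—verifying that the central flip perturbs the $\sigma$-sequence in the single coordinate $\sigma_m$ and in no $\sigma_i$ with $i<m$—since the clean collapse to the triangular number $\frac{m(m+1)}{2}$, and hence the exact role of the divisibility hypothesis, rests entirely on that isolation.
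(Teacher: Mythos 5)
Your proof is correct and takes essentially the same route as the paper's: both arguments show that flipping the central bit changes each $w_j$, $1 \le j \le m = \lceil \frac{n}{2} \rceil$, by exactly $\pm j$, so the sum $\sum_{j=1}^{m} w_j$ changes by $\pm \frac{m(m+1)}{2}$, which vanishes modulo $3$ precisely under the stated divisibility hypothesis. The only cosmetic difference is that you read the per-index change off the identity $w_j = \sum_{i=1}^{j-1} i\,\sigma_i + j\sum_{i=j}^{m}\sigma_i$ (and treat both flip directions uniformly via $\Delta = \pm 1$), whereas the paper counts the central bit's contribution of $j$ to each $w_j$ directly and argues for $s_m = 1$, the other case being symmetric.
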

\begin{proof} Suppose that $s_{\lceil \frac{n}{2} \rceil}=1$. Then, the bit $s_{ \lceil \frac{n}{2} \rceil}$ contributes 
$ \lceil \frac{n}{2} \rceil$ to $w_{ \lceil \frac{n}{2} \rceil}$ and $ \lceil \frac{n}{2}  \rceil-1$ to $w_{ \lceil \frac{n}{2}\rceil -1}$.

In summary, 
if $s_{\lceil \frac{n}{2} \rceil}=1$, then 
$$ \sum_{i=1}^{ \lceil \frac{n}{2} \rceil} w_i(\textbf{s}) = \sum_{i=1}^{ \lceil \frac{n}{2} \rceil} w_i(\textbf{s}') + \frac{ \lceil \frac{n}{2} \rceil  \, ( \lceil \frac{n}{2} \rceil +1)}{2}.$$
The result follows if either $ \lceil \frac{n}{2} \rceil + 1$ or $ \lceil \frac{n}{2} \rceil $ is divisible by $3$.
\end{proof}

\begin{proposition}\label{prop:change2} For odd $n,$ if $s_1 \, \ldots \, s_{ \lceil \frac{n}{2} \rceil} \, \ldots \, s_n \in \mathcal{S}_R(n)$, 
then $s_1 \, \ldots \, 1-s_{ \lceil \frac{n}{2} \rceil} \, \ldots \, s_n \in \mathcal{S}_R(n)$.
\end{proposition}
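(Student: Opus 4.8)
The plan is to read the claim directly off the recursive definition of $\mathcal{S}_R(n)$ for odd $n$, so the argument will be short and structural rather than analytic. The central observation is that the odd-length codebook is \emph{by construction} a union taken over both possible values of the middle bit, so toggling that bit necessarily maps the codebook into itself. No property of the underlying even-length code (in particular, neither the boundary conditions $s_1=0,\,s_n=1$ nor the Catalan-Bertrand condition) involves the middle coordinate, which is why flipping it is harmless.

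First I would use the odd-$n$ definition to extract a witness. Given $\textbf{s} = s_1 \ldots s_{\lceil \frac{n}{2} \rceil} \ldots s_n \in \mathcal{S}_R(n)$, the defining relation
$$\mathcal{S}_R(n)=\bigcup_{\textbf{t} \in  \mathcal{S}_R(n-1)} \{\textbf{t}_1^{\frac{n-1}{2}}\, 0 \, \textbf{t}_{\frac{n+1}{2}}^{n-1}, \; \textbf{t}_1^{\frac{n-1}{2}} \, 1 \, \textbf{t}_{\frac{n+1}{2}}^{n-1}\}$$
guarantees an even-length string $\textbf{t} \in \mathcal{S}_R(n-1)$ whose first half equals $\textbf{s}_1^{\frac{n-1}{2}}$ and whose second half equals $\textbf{s}_{\frac{n+3}{2}}^{n}$, with $s_{\lceil \frac{n}{2} \rceil}$ playing the role of the inserted middle bit. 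Concretely, $\textbf{t}$ is recovered from $\textbf{s}$ by deleting the coordinate in position $\lceil \frac{n}{2} \rceil$ and concatenating the two resulting blocks.

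Next I would observe that the very same $\textbf{t}$ also witnesses membership of the flipped string. Since the union in the definition ranges over both the choice $0$ and the choice $1$ for the inserted middle bit, the string $\textbf{t}_1^{\frac{n-1}{2}}\,(1 - s_{\lceil \frac{n}{2} \rceil})\,\textbf{t}_{\frac{n+1}{2}}^{n-1}$ also lies in $\mathcal{S}_R(n)$. But this string is exactly $s_1 \ldots (1 - s_{\lceil \frac{n}{2} \rceil}) \ldots s_n$, which is what we wanted.

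I do not expect a genuine obstacle here; the only thing requiring care is the index bookkeeping, namely verifying that deleting position $\lceil \frac{n}{2} \rceil = \frac{n+1}{2}$ from a length-$n$ string leaves two blocks of length $\frac{n-1}{2}$ that align precisely with the first and second halves of a length-$(n-1)$ string in $\mathcal{S}_R(n-1)$. Once this alignment is confirmed, the proposition is immediate: the recursive construction treats the middle bit as a free binary choice, so toggling it neither adds nor removes any constraint inherited from the even-length code.
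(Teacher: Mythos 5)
Your proof is correct and follows exactly the route the paper takes: the paper simply asserts that the claim ``follows from the definition of the reconstruction set $\mathcal{S}_R(n)$,'' and your argument is the careful spelling-out of that assertion, using the fact that the odd-$n$ codebook is defined as a union over both values of the inserted middle bit, so the same witness $\textbf{t} \in \mathcal{S}_R(n-1)$ certifies membership of both $\textbf{s}$ and its middle-bit-flipped version. The index bookkeeping you flag (the deleted coordinate $\lceil \frac{n}{2} \rceil = \frac{n+1}{2}$ splitting $\textbf{s}$ into two blocks of length $\frac{n-1}{2}$ that align with the halves of $\textbf{t}$) checks out, so nothing is missing.
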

\textcolor{black}{The proof of the proposition is straightforward, as it follows from the definition of the reconstruction set $\mathcal{S}_R(n)$.}

\textcolor{black}{For odd $n$ such that $\lceil \frac{n}{2} \rceil \equiv 0 \mod 3$ our code is defined as follows:}
\begin{align*}
\mathcal{S}^{(1)}_{C}(n) = \Big \{ &\textbf{s}=s_1 s_2 s_3 \ldots s_{ \lceil \frac{n}{2} \rceil} \ldots  s_{n-2} s_{n-1} s_{n} \in \{0,1\}^n :  \\
&\ \ \ \ \ s_1  \textbf{s}_{3}^{n-2} s_n =  s_1 s_3 s_4 s_5 \ldots s_{n-4} s_{n-3}s_{n-2} s_n \in \mathcal{S}_R(n-2), \\
&\ \ \ \ \ \text{wt}(\textbf{s}) \bmod 2 = 0,   \\
&\ \ \ \ \ \sum_{i=1}^{ \textcolor{black}{\lceil \frac{n}{2} \rceil} } w_i(\textbf{s}) \equiv 0 \bmod 3, \, \text{ where } s_2 \leq s_{n-1} \Big \}. 
\end{align*}

The size of the code $\mathcal{S}^{(1)}_{C}(n)$ is $\frac{|\mathcal{S}_R(n-2)|}{2}$, which follows \textcolor{black}{from the second constraint that $s_1 s_3 \ldots s_{n-2} s_n \in \mathcal{S}_R(n-2)$, along with Proposition~\ref{prop:change2}}. To construct a string in $\mathcal{S}^{(1)}_{C}(n)$, we first fix $s_2$ and $s_{n-1}$ so that $\sum_{i=1}^{ \lceil \frac{n}{2} \rceil} w_i(\textbf{s}) \equiv 0 \bmod 3$. \textcolor{black}{Then, we choose $s_{ \lceil \frac{n}{2} \rceil}$ to satisfy} $\text{wt}(\textbf{s}) \equiv 0 \bmod 2 $. From Propositions~\ref{prop:change} and \ref{prop:change2}, the resulting string belongs to $\mathcal{S}^{(1)}_{C}(n)$. \textcolor{black}{The next proposition shows that for certain codelengths, there exists values for $s_2$ and $s_{n-1}$ that always allow for the constraints to be satisfied.}

\textcolor{black}{\begin{proposition} When $\lceil \frac{n}{2} \rceil$ is divisible by $3$, then for any $\bx = x_1 \ldots x_{n-2} \in \{0,1\}^{n-2}$, there exists $s_2 s_{n-1} \in \{0,1\}^2$ so that
\begin{align*}
 \sum_{i=1}^{ \lceil \frac{n}{2} \rceil } w_i\Big( x_1 s_2 x_2 x_3 \ldots x_{n-3} s_{n-1} x_{n-2} \Big) \equiv 0 \bmod 3,
\end{align*}
where $s_2 \leq s_{n-1}$.
\end{proposition}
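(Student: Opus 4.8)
The plan is to exploit the fact that the target quantity $S(\textbf{s}) := \sum_{i=1}^{\lceil \frac{n}{2} \rceil} w_i(\textbf{s})$ is a \emph{linear} function of the individual bits of $\textbf{s}$, and then to count precisely how much the two free bits $s_2$ and $s_{n-1}$ contribute to $S$ modulo $3$. Indeed, a symbol $s_p=1$ placed at position $p$ belongs to exactly $\min(i,\,p,\,n+1-i,\,n+1-p)$ substrings of length $i$, so that $w_i = \sum_{p=1}^{n} s_p \min(i,p,n+1-i,n+1-p)$ and hence
$$ S(\textbf{s}) \;=\; \sum_{p=1}^{n} s_p\, c(p), \qquad c(p) := \sum_{i=1}^{\lceil \frac{n}{2} \rceil} \min(i,\,p,\,n+1-i,\,n+1-p). $$

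First I would evaluate the coefficient $c(2)$ of the free bit at position $2$. Throughout the summation range $1 \le i \le \lceil \frac{n}{2} \rceil$ one has $n-1 \ge 2$ and $n+1-i \ge 2$, so the minimum collapses to $\min(i,2)$, which equals $1$ at $i=1$ and $2$ for $2 \le i \le \lceil \frac{n}{2} \rceil$. Summing yields $c(2) = 1 + 2\big(\lceil \frac{n}{2} \rceil - 1\big) = 2\lceil \frac{n}{2} \rceil - 1$. By the symmetry $\min(i,p,n+1-i,n+1-p) = \min(i,n+1-p,n+1-i,p)$ of the coverage count, the bit at position $n-1$ carries the same coefficient, $c(n-1) = c(2)$.

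Next I would invoke the divisibility hypothesis. When $3 \mid \lceil \frac{n}{2} \rceil$, we obtain $c(2) = c(n-1) = 2\lceil \frac{n}{2} \rceil - 1 \equiv -1 \bmod 3$, so that $c(2)$ is invertible modulo $3$. Keeping all bits coming from $\bx$ fixed and writing $S_0$ for the value of $S$ obtained when $s_2 = s_{n-1} = 0$, linearity gives $S(s_2,s_{n-1}) \equiv S_0 + (s_2 + s_{n-1})\,c(2) \bmod 3$. The three pairs permitted by the constraint $s_2 \le s_{n-1}$, namely $(0,0)$, $(0,1)$ and $(1,1)$, realize precisely the three values $s_2 + s_{n-1} \in \{0,1,2\}$; multiplying by the unit $c(2)$ produces three \emph{distinct} residues, so as $(s_2,s_{n-1})$ ranges over the admissible pairs the sum $S$ attains every residue modulo $3$. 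In particular, exactly one admissible choice yields $\sum_{i=1}^{\lceil \frac{n}{2} \rceil} w_i \equiv 0 \bmod 3$, which is the claim.

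The step I expect to require the most care is not the arithmetic but the interplay between the constraint $s_2 \le s_{n-1}$ and surjectivity onto the residues modulo $3$: the argument succeeds precisely because $s_2 \le s_{n-1}$ forbids only the pair $(1,0)$ and therefore still leaves all three sum-values $s_2 + s_{n-1} = 0,1,2$ attainable. I would also verify the boundary behavior of the coverage formula, namely that the terms $n-1$ and $n+1-i$ never bind in the relevant range, which holds for every admissible $n$ (the smallest being $n=5$), since the entire argument hinges on the identity $c(2) \equiv -1 \bmod 3$ and hence on $c(2)$ being a unit.
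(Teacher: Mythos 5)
Your proof is correct and follows essentially the same route as the paper's: both arguments reduce to counting that the bit at position $2$ (and, by symmetry, at position $n-1$) contributes to exactly $2\left(\lceil \frac{n}{2} \rceil - 1\right) + 1 = 2\lceil \frac{n}{2} \rceil - 1 \equiv -1 \bmod 3$ of the compositions of lengths $1$ through $\lceil \frac{n}{2} \rceil$, and then observe that the three admissible pairs $(0,0)$, $(0,1)$, $(1,1)$ shift the sum by $0$, $-1$, $-2$ modulo $3$, hence realize every residue. Your explicit coverage formula $\min(i,p,n+1-i,n+1-p)$ is just a more systematic derivation of the count the paper states directly.
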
}
\textcolor{black}{\begin{IEEEproof} Let $\bs =  x_1 s_2 x_2 x_3 \ldots x_{n-3} s_{n-1} x_{n-2}$. Clearly, the elements $s_2$ and $s_{n-1}$ appear in exactly one composition from $C_1(\bs)$ (recall that $C_i(\bs)$ denotes the set of compositions of $\bs$ of length $i$). Furthermore, $s_2$ and $s_{n-1}$ each appear twice in every set $C_i(\bs),$ where $\lceil \frac{n}{2} \rceil \geq i \geq 2$. Therefore the symbol $s_2$ appears in
$$ 2\left( \lceil \frac{n}{2} \rceil -1 \right) + 1$$
compositions from $C_1(\bs) \cup C_2(\bs) \cup \cdots \cup C_{\lceil \frac{n}{2} \rceil}(\bs)$, and, by symmetry, the symbol $s_{n-1}$ appears $2\left( \lceil \frac{n}{2} \rceil -1 \right ) + 1$ times as well. \\
Suppose $\sum_{i=1}^{\lceil n/2 \rceil} w_i(s) \equiv a \bmod 3 $ when $(s_2, s_{n-1}) = (0,0)$. Then, more generally if $(s_2, s_{n-1}) = (c_1, c_2) \in \{0,1\}^2$ where $(c_1, c_2)$ are not necessarily equal to $(0,0)$ we have,
\begin{align*}
\sum_{i=1}^{ \lceil \frac{n}{2} \rceil } w_i( \bs ) &\equiv a +c_1 \Big( 2 \big( \lceil \frac{n}{2} \rceil - 1 \big)+ 1 \Big) + c_2 \Big( 2 \big( \lceil \frac{n}{2} \rceil - 1 \big)+ 1 \Big)   \bmod 3 \\
& \equiv a - c_1 - c_2 \bmod 3.
\end{align*}
Since for the case $(c_1, c_2)=(0,0)$, $\sum_{i=1}^{ \lceil \frac{n}{2} \rceil } w_i( \bs ) \equiv a \bmod 3$, for $(c_1, c_2) = (0,1)$, $\sum_{i=1}^{ \lceil \frac{n}{2} \rceil } w_i( \bs ) \equiv a - 1 \bmod 3$, and for $(c_1, c_2) = (1,1),$ $\sum_{i=1}^{ \lceil \frac{n}{2} \rceil } w_i( \bs ) \equiv a - 2 \bmod 3$. This completes the proof.
\end{IEEEproof}}

For the next lemma, recall that $\tilde{C}(\textbf{s})$ is the result of a single composition error in $C(\textbf{s})$.

\begin{lemma} Suppose that $\textbf{s} \in \mathcal{S}^{(1)}_{C}(n)$ \textcolor{black}{where $\lceil \frac{n}{2} \rceil$ is divisible by $3$}. Then, given $\tilde{C}(\textbf{s})$, one can recover $\Sigma^{ \textcolor{black}{\lceil \frac{n}{2} \rceil } }$. \label{finding_sigma}
\end{lemma}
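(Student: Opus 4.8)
The plan is to reconstruct the cumulative weights $w_1, \ldots, w_{\lceil \frac{n}{2} \rceil}$ one at a time from $\tilde{C}(\textbf{s})$, and then solve the linear system~\eqref{eq:sigmas} to obtain $\Sigma^{\lceil \frac{n}{2} \rceil}$. Write $m = \lceil \frac{n}{2} \rceil$. The guiding observation is that a single composition error corrupts exactly one length class, say class $j_0$; since $w_j = w_{n-j+1}$, this means that among the two measurements $\tilde{w}_j, \tilde{w}_{n-j+1}$ attached to the folded index $j = \min(j_0, n+1-j_0)$, at most one disagrees with the true $w_j$, while every other length class is read off correctly from $\tilde{C}(\textbf{s})$.

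First I would recover $w_1 = \text{wt}(\textbf{s})$, which is exactly Proposition~\ref{prop:w1}: since $\text{wt}(\textbf{s}) \equiv 0 \bmod 2$ is a defining constraint of $\mathcal{S}^{(1)}_{C}(n)$ and a length-$1$ composition error perturbs the measured weight by an odd amount, the parity check selects the correct value among $\tilde{w}_1$ and $\tilde{w}_n$. Next, proceeding by induction on $j = 2, \ldots, m$, I assume $w_1, \ldots, w_{j-1}$ are known; by the triangular recurrence of Proposition~\ref{prop:wsback} this yields $\sigma_1, \ldots, \sigma_{j-2}$. To obtain $w_j$ for $j < m$, I would first test whether $\tilde{w}_j = \tilde{w}_{n+1-j}$. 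If they agree, neither class is in error and $w_j = \tilde{w}_j$. If they disagree, the unique error sits at folded index $j$, and I claim every cumulative weight $w_i$ with $i \le m$, $i \ne j$, is read correctly: the two corrupted classes are $j$ and $n+1-j$, and $n+1-j > n+1-m = m \ge i$, so $i \notin \{j, n+1-j\}$. Consequently $\sum_{i \le m,\, i \ne j} \tilde{w}_i = \sum_{i \le m,\, i \ne j} w_i$, and the global redundancy constraint $\sum_{i=1}^{m} w_i \equiv 0 \bmod 3$ pins down $w_j \bmod 3$. Feeding this residue into Proposition~\ref{prop:cor}, which lifts $w_j \bmod 3$ to the exact value $w_j$ given $w_1, \sigma_1, \dots, \sigma_{j-2}$, recovers $w_j$.

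The middle class $j = m$ is handled in the same spirit: it is self-mirrored, so it carries no companion measurement, but if no disagreement was flagged at any earlier step then the single error, if present, can only lie in class $m$; since $w_1, \ldots, w_{m-1}$ are then correct, the same mod-$3$ argument together with Proposition~\ref{prop:cor} returns $w_m$. Having recovered all of $w_1, \ldots, w_m$, I would finish by solving the system~\eqref{eq:sigmas}, equivalently by unrolling the recurrence of Proposition~\ref{prop:wsback} and setting $\sigma_m = w_1 - \sum_{i<m}\sigma_i$, to obtain $\Sigma^m$.

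The main obstacle is the tie-breaking step when $\tilde{w}_j \neq \tilde{w}_{n+1-j}$, and in particular the self-mirror middle length where no companion measurement exists; both are resolved by the single global mod-$3$ constraint, whose power rests on the fact that a lone composition error pollutes only one length class and hence leaves the sum of all remaining cumulative weights intact. The one point that must be verified carefully is that the two polluted classes $j$ and $n+1-j$ never both fall in $\{1, \ldots, m\}$ when $j < m$, which is precisely the inequality $n+1-j > m$ used above; this is what guarantees that the residual sum $\sum_{i \ne j} w_i \bmod 3$ is computable purely from error-free data.
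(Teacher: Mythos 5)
Your proposal is correct and follows essentially the same route as the paper's proof: recover $w_1$ by the parity constraint (Proposition~\ref{prop:w1}), locate the unique folded index $j$ with $\tilde{w}_j \neq \tilde{w}_{n+1-j}$, recover $\sigma_1,\ldots,\sigma_{j-2}$ from the error-free cumulative weights via Proposition~\ref{prop:wsback}, pin down $w_j \bmod 3$ from the global constraint $\sum_{i=1}^{\lceil n/2\rceil} w_i \equiv 0 \bmod 3$, lift it to $w_j$ by Proposition~\ref{prop:cor}, and treat the self-mirrored middle class as the residual case. Your explicit verification that $j$ and $n+1-j$ cannot both lie in $\{1,\ldots,\lceil n/2\rceil\}$ when $j < \lceil n/2\rceil$ is a detail the paper leaves implicit, but it is the same argument.
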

\begin{proof} In order to prove the claim, we show that given $\tilde{C}(\textbf{s}),$ one can recover $w_1, w_2, \ldots, w_n,$ which we know uniquely determine $\Sigma^{ n/2 }$ according to (\ref{eq:sigmas}). Let $j$ be such that $\tilde{w}_j \neq \tilde{w}_{n+1-j}$. Since at most one 
single composition error is allowed, there exists at most one such $j$. It is straightforward to see that due to symmetry, 
either $\tilde{w}_j \neq w_j=w_{n+1-j}$ or $\tilde{w}_{n+1-j} \neq w_j=w_{n+1-j}$. Since $\text{wt}(\textbf{s}) \bmod 2 = 0$ by construction, 
it follows that we can determine $w_1$ based on Proposition~\ref{prop:w1}. \textcolor{black}{In addition, using the first identity from Proposition~\ref{prop:wsback}, it follows that we can recover $\sigma_1, \sigma_2, \ldots, \sigma_{j-2}$.}  \textcolor{black}{Also, using the constraint $\sum_{i=1}^{ \lceil \frac{n}{2} \rceil } w_i( \bs ) \equiv 0 \bmod 3$}, we can recover 
\textcolor{black}{$w_j \bmod 3$. Then, according to Proposition~\ref{prop:cor}, we can recover $w_j$ along with $w_1, \ldots, w_n$.} 
One case left to consider is when $\tilde{w}_i = \tilde{w}_{n+1-i},$ for all \textcolor{black}{$i \in [\lfloor \frac{n}{2} \rfloor ]$}. In this case, $\tilde{w}_{ \textcolor{black}{\lceil \frac{n}{2} \rceil} } \neq w_{ \textcolor{black}{\lceil \frac{n}{2} \rceil} }$. Applying Proposition~\ref{prop:cor} allows us to determine $w_{ \textcolor{black}{\lceil \frac{n}{2} \rceil} }$ for this case as well. This completes the proof.
\end{proof}

Next, recall that $\mathcal{T}_i$ stands for the set of compositions of all substrings \textcolor{black}{$\textbf{s}_j^l$ for which $j<l\leq i,$ or $n+1-i \leq j<l,$ or $j \leq i +1\textit{ and } n-i \leq l$, or $l = n+1-j$.} 

Let the two strings $\textbf{s}$ and $\textbf{v}$ be such that $\textbf{s}_1^j = \textbf{v}_1^j$ and $\textbf{s}_{n+1-j}^n = \textbf{v}_{n+1-j}^n$ and either $s_{j+1} \neq v_{j+1}$ or $s_{n-j} \neq v_{n-j}$. Then we say that the \emph{longest prefix-suffix pair} shared by the two strings has length $j$.

\textcolor{black}{Before we proceed to prove that $\mathcal{S}^{(1)}_C(n)$ is a single error-correcting code, we provide two illustrative examples - one for the case where the error occurs in a composition of length (size) $j \leq \lfloor \frac{n}{2} \rfloor,$ and another for the case where the error occurs in a composition of length (size) $j \geq \lceil \frac{n}{2} \rceil.$ 
\begin{example}
Let $n=11$ and consider $\textbf{s} = 00001111111 \in \mathcal{S}_C^{(1)}(n).$ Let the composition multiset with one composition error be $\tilde{C}(\textbf{s}) = ( C(\textbf{s})  \cup \{  1^4 \} ) \setminus \{ 0^4 \}.$ Given $\tilde{C}(\textbf{s}),$ by Lemma 2, we can infer $\Sigma^6 = (1,1,1,1,2,1).$ The Backtracking algorithm readily reconstructs up to $s_1s_2s_3 = 000$, and $s_{9}s_{10}s_{11} = 111.$ For further details on the Backtracking algorithm, refer to Example 1 and \cite{acharya2014string}. Next, observe that $w_4 \not = w_{8},$ and that the two largest compositions in $\tilde{C}(\textbf{s}) \setminus \mathcal{T}_3 = \{ 0^41^3, 1^7 \}$ are compatible with the reconstructed prefix, suffix and the constraints imposed by $\Sigma^6.$ Thus, the Backtracking algorithms proceeds by reconstructing $s_4s_{8} = 01,$ and computing $\mathcal{T}_4.$ Note that for this string, due to the constraints imposed by $\sigma_5, $ and $\sigma_6, $ the string is immediately reconstructed as $00001111111$. The Backtracking algorithm finds that $0^4 \in \mathcal{T}_4,$ but $0^4 \not \in \tilde{C}(\textbf{s}).$ However, this one single incompatibility is expected in the given error setting. In general, for the case of a single composition error, if the error occurs in a composition that corresponds to a substring of length $\leq \lfloor \frac{n}{2} \rfloor,$ it does not affect the Backtracking algorithm.
\end{example}
\begin{example}
Let $n=11$ and once again consider $\textbf{s} = 00001111111  \in \mathcal{S}_C^{(1)}(n).$ Let the composition multiset with one composition error be $\tilde{C}(\textbf{s}) = ( C(\textbf{s}) \cup  \{ 01^6 \} ) \setminus \{ 1^7 \}.$ Given $\tilde{C}(\textbf{s}),$ by Lemma 2, we can infer $\Sigma^6 = (1,1,1,1,2,1).$ The Backtracking algorithm readily reconstructs up to $s_1s_2s_3 = 000$, and $s_{9}s_{10}s_{11} = 111.$  Note that $w_4 \not = w_{8},$ and that the two largest compositions in $\tilde{C}(\textbf{s}) \setminus \mathcal{T}_3$ equal $\{ 0^41^3, 01^6\},$ implying $s_4s_8 = 00.$ Clearly, one of these two compositions is erroneous as $\sigma_4 =1$. Hence, the two possibilities for the bits $s_4s_{8} $ are $10$ or $01.$ If the Backtracking algorithm assigns $s_1s_2s_3s_4 = 0001,$ and $s_{8}s_{9}s_{10}s_{11} = 0111,$ then note that while $\mathcal{T}_4$ contains only one $1^4,$ $\tilde{C}(\textbf{s})$ contains four $1^4.$ In particular, the number of distinct elements in the sets $\tilde{C}(\textbf{s})$ and $\mathcal{T}_4$ due to incorrect bit assignments is strictly greater than one. Thus, if the Backtracking algorithm erroneously reconstructs the bits $s_4s_{8},$ it backtracks and assigns $s_4s_{8}=01$ instead. As mentioned in Example 2, due to the constraints imposed by $\sigma_5, $ and $\sigma_6,$ the string is then correctly reconstructed as $00001111111$.   
\end{example}
}

\begin{lemma} \label{lem:3}
Let $\textbf{s} \in \mathcal{S}^{(1)}_{C}(n)$.
Given $\tilde{C}(\textbf{s})$, one can uniquely reconstruct the string $\textbf{s}$. 
\end{lemma}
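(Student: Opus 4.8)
The plan is to recover the weight profile $\Sigma^{\lceil\frac{n}{2}\rceil}$ and then run a discrepancy-tolerant version of the Backtracking algorithm. By Lemma~\ref{finding_sigma}, the single corrupted composition in $\tilde{C}(\textbf{s})$ does not prevent exact recovery of $\Sigma^{\lceil\frac{n}{2}\rceil}$, so every $\sigma_i=\text{wt}(s_is_{n+1-i})$, as well as the true total weight $w_1$ (hence the true full composition), may be treated as known. A preliminary step is to promote Lemma~\ref{lem1} from the core to the padded codeword: writing $\textbf{u}=s_1s_3s_4\cdots s_{n-2}s_n\in\mathcal{S}_R(n-2)$, one has for $2\le j\le\lfloor\frac{n}{2}\rfloor$ that $\text{wt}(\textbf{s}_1^{j})-\text{wt}(\textbf{s}_{n+1-j}^{n})=\big(\text{wt}(\textbf{u}_1^{j-1})-\text{wt}(\textbf{u}_{n-j}^{n-2})\big)+(s_2-s_{n-1})$. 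Lemma~\ref{lem1} (via the \emph{strict} Catalan-Bertrand prefix condition) forces the parenthesized term to be $\le -1$, while the constraint $s_2\le s_{n-1}$ forces $s_2-s_{n-1}\le 0$; together with the $j=1$ case ($s_1=0\neq 1=s_n$) this yields $\text{wt}(\textbf{s}_1^{j})<\text{wt}(\textbf{s}_{n+1-j}^{n})$ for all $1\le j\le\lfloor\frac{n}{2}\rfloor$. Consequently, by \cite[Lemma~31]{acharya2014string} the error-free Backtracking run (seeded with the fixed endpoints $s_1=0$, $s_n=1$) never guesses and reconstructs $\textbf{s}$ deterministically, which also fixes the orientation.

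The decoder I would run is exactly this deterministic procedure, but accepting a reconstruction as soon as the composition multiset of the reconstructed string can be made equal to $\tilde{C}(\textbf{s})$ by altering at most one composition. The crucial quantitative fact is that at step $\ell_r$ the algorithm inspects only the two largest compositions remaining in $\tilde{C}(\textbf{s})\setminus\mathcal{T}_{\ell_r}$, which have length $n-\ell_r-1$, and that $\ell_r$ ranges over $1,\dots,\lfloor\frac{n}{2}\rfloor-1$, so the inspected lengths always satisfy $n-\ell_r-1\ge\lceil\frac{n}{2}\rceil$. The case split is therefore governed by the length $j$ of the corrupted composition.

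In the first case, $j\le\lfloor\frac{n}{2}\rfloor$, the corrupted composition is strictly shorter than every composition the algorithm ever reads, so each bit-pair decision coincides with the error-free run and $\textbf{s}$ is reconstructed correctly; the corrupted short composition simply remains as one unmatched element, so the reconstructed multiset equals $\tilde{C}(\textbf{s})$ after restoring that single altered composition, meeting the acceptance criterion (cf.~Example~2). In the second case, $j\ge\lceil\frac{n}{2}\rceil$, the error is inspected at the unique step $\ell_r=n-j-1$, with the boundary lengths $j\in\{n-1,n\}$ being harmless (the full composition is rebuilt from the recovered $w_1$ and the endpoints are fixed by the code). If $\sigma_{\ell_r+1}\in\{0,2\}$ the pair $(s_{\ell_r+1},s_{n-\ell_r})$ equals $00$ or $11$ regardless of the composition values, so the error cannot mislead; if $\sigma_{\ell_r+1}=1$, the corrupted composition may induce the wrong pair, and the plan is to show that any such wrong assignment forces at least two compositions of the reconstructed string to disagree with $\tilde{C}(\textbf{s})$. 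Since a single error can account for only one such disagreement, the inconsistency is detected, the decoder backtracks to the complementary (correct) pair, and—all shorter compositions being error-free—completes correctly (cf.~Example~3).

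The main obstacle is precisely this counting argument in the second case: one must show that flipping the bit-pair at the affected step $\ell_r$ perturbs the block of newly determined compositions $\mathcal{T}_{\ell_r+1}\setminus\mathcal{T}_{\ell_r}$ in such a way that more than one composition must be altered to match $\tilde{C}(\textbf{s})$, whereas the correct assignment leaves at most one such alteration. Establishing this bound uniformly over all corrupted lengths $j\ge\lceil\frac{n}{2}\rceil$ and all codewords of $\mathcal{S}^{(1)}_C(n)$, and verifying that it remains consistent with the single tolerated alteration coming from the short-composition side, is the heart of the proof; once it is in place, the remaining steps are the deterministic bookkeeping already illustrated in Examples~2 and~3.
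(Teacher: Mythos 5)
Your overall architecture matches the paper's proof: recover $\Sigma^{\lceil \frac{n}{2} \rceil}$ exactly via Lemma~\ref{finding_sigma}, observe that an error in a composition of length $j \leq \lfloor \frac{n}{2} \rfloor$ never enters the Backtracking decisions (which only read compositions of length $\geq \lceil \frac{n}{2} \rceil$), and reduce the remaining case to showing that a wrong bit-pair assignment at the single affected step $\ell_r = n-j-1$ is detectably inconsistent with $\tilde{C}(\textbf{s})$. Your preliminary step promoting Lemma~\ref{lem1} to the padded codeword (using $s_2 \leq s_{n-1}$ together with the strict Catalan-Bertrand inequality on the core $\textbf{u}$) is correct and is a clean way to see that the error-free run is deterministic. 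However, there is a genuine gap exactly where you flag ``the main obstacle'': the claim that any wrong assignment of $(s_{\ell_r+1}, s_{n-\ell_r})$ when $\sigma_{\ell_r+1}=1$ forces at least two discrepancies between the reconstructed multiset and $\tilde{C}(\textbf{s})$ is stated as a plan, not proven. That claim \emph{is} the substance of the lemma; everything else, in your write-up as in the paper's, is bookkeeping around it, and nothing in your proposal rules out that the wrong branch yields a string whose composition multiset also differs from $\tilde{C}(\textbf{s})$ in only one composition.

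The paper's proof supplies precisely this missing argument. It reformulates the claim as a minimum-distance statement: any two distinct codewords $\textbf{s}, \textbf{v} \in \mathcal{S}^{(1)}_{C}(n)$ sharing the same $\Sigma^{\lceil \frac{n}{2} \rceil}$ and a longest common prefix-suffix pair of length $i$ satisfy $|C(\textbf{s}) \setminus C(\textbf{v})| \geq 3$; since a single composition error in each would require $|C(\textbf{s}) \setminus C(\textbf{v})| \leq 2$ for both to explain the same $\tilde{C}$, distance $3$ suffices. The two candidate extensions automatically differ in the two compositions of length $j=n-i-1$, so the paper exhibits one further differing composition at length $j-1$: it writes out the four compositions of length $n-i-2$ on each side that could possibly differ, regroups the known extension bits into the prefix/suffix compositions, eliminates cross-matchings using the fact that a codeword's prefix of length $i$ contains at least two more $0$s than its suffix (the Catalan-Bertrand structure plus $s_1=0$, $s_n=1$), and then performs a case analysis on $\sigma_{i+2}=1$ versus $\sigma_{i+2} \in \{0,2\}$ (with the boundary case $n=2(i+1)+1$ treated separately) to show the residual set equality can never hold, hence $C_{j-1}(\textbf{s}) \neq C_{j-1}(\textbf{v})$. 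Until you carry out this counting argument --- or an equivalent one covering all corrupted lengths $j \geq \lceil \frac{n}{2} \rceil$ and all codeword pairs --- the correctness of your decoder is unestablished.
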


\begin{proof}
Let $j$ denote the index of the composition multiset $C_j$ that contains an error. \textcolor{black}{As shown in the example, single composition errors that occur in a composition of a substring of length $j \leq  \lfloor \frac{n}{2} \rfloor$ do not affect the reconstruction process, since the Backtracking algorithm only makes use of information provided by compositions of substrings of length $\geq \lceil \frac{n}{2} \rceil $. As the Backtracking algorithm progresses, the erroneous composition is identified through a comparison of the erroneous observed composition multiset and the iteratively constructed set $\mathcal{T}_\ell,$ as explained in the above examples. Errors that happen for $j \leq \lfloor\frac{n}{2} \rfloor$ are easily identified and automatically corrected by the Backtracking algorithm.} From Lemma~\ref{finding_sigma}, $\Sigma^{\lceil \frac{n}{2} \rceil}$ may be determined in an error-free manner. Using the obtained $\Sigma^{\lceil \frac{n}{2} \rceil}$, we run the Backtracking algorithm and in the process, we possibly run into incompatible compositions for $j \geq  \lceil \frac{n}{2} \rceil $. \textcolor{black}{Note that when $j = \lceil \frac{n}{2} \rceil$, the Backtracking algorithm has reconstructed the entire string. Given an already reconstructed prefix-suffix pair of length $i$, $\textbf{s}_{1}^{i}, \textbf{s}_{n-i+1}^{n}$, we make use of the two largest cardinality compositions in $\tilde{C}(\textbf{s}) \setminus \mathcal{T}_i$ to reconstruct the bits $s_{i+1}$ and $s_{n-i}.$ If $\sigma_{i+1} \in \{0,2\},$ $s_{i+1}$ and $s_{n-i}$ can be determined immediately. Also, given $\Sigma^{\lceil \frac{n}{2} \rceil},$ if the Backtracking algorithm can correctly determine which of the above two compositions corresponds to a prefix and a suffix, then $s_{i+1}$ and $s_{n-i}$ can be uniquely identified. Otherwise, the Backtracking algorithm halts and performs a guess. Thus, in summary, an error occurring in a composition in $C_j$ affects the reconstruction of the bits indexed by $i+1$ and $n-i,$ where $i$ is such that $j+i+1 = n.$}  

\textcolor{black}{Consider the case where the} incompatibility manifests itself through $\mathcal{T}_i \not \subset \tilde{C}$, where $j = n-i-1$. Here, we identify the element that is in $\mathcal{T}_i$ but not in $\tilde{C}_j$, and add its weight to $\tilde{w}_j$ and compare it with $\tilde{w}_{n+1-j}$; this allows us to identify the erroneous composition. \textcolor{black}{The assumption is that a composition corresponding to a substring of length $j$ is erroneous. Clearly, $\tilde{C}_j = (C_j \setminus \{{c_{i_1}\}}) \cup \{{c_{i_2}\}},$ for some compositions indexed by $i_1$ and $i_2$, where $c_{i_1}$ corresponds to the original, correct composition, while $c_{i_2}$ corresponds to the erroneous composition. Since $\mathcal{T}_i$ contains some composition $c_{i_3}$ of a substring of length $j$ that is not present in $\tilde{C}_j$, it must be that $c_{i_3} = c_{i_1}.$ Thus, we have $\text{wt}(c_{i_2}) = \tilde{w}_j + \text{wt}(c_{i_1}) - \tilde{w}_{n+1-j}$ and the erroneous composition can be identified and corrected.} Next, suppose \textcolor{black}{on the contrary} that $\mathcal{T}_i \subset \tilde{C}$. In this case, consider the two largest compositions in $\tilde{C} \setminus \mathcal{T}_i$. The two largest compositions in $\tilde{C} \setminus \mathcal{T}_i$ are the compositions of a prefix-suffix pair of length $j$.

Since we have reconstructed the prefix and suffix of length $i,$ and we know that $\sigma_{i+1} = 1$, \textcolor{black}{the prefix-suffix pair is either $(\textbf{s}_1^i 0, 1\textbf{s}_{n+1-i}^{n})$ or $(\textbf{s}_1^i 1, 0\textbf{s}_{n+1-i}^{n})$.}
 To show that only one of the constructed prefix-suffix pairs is valid/compatible, it suffices to show the following: For any two distinct strings $\textbf{s}, \textbf{v} \in \mathcal{S}^{(1)}_{C}(n)$ that have the same $\Sigma^{\lceil \frac{n}{2} \rceil}$, \textcolor{black}{and are such that the longest prefix-suffix pairs shared by them is of length $i,$ one has $|C(\textbf{s}) \setminus C(\textbf{v})| \geq 3$. \textcolor{black}{Note that it now follows from Lemma 2 that for all strings $\textbf{s} \in \mathcal{S}_C^{(1)}(n)$, $\Sigma^{\lceil \frac{n}{2} \rceil}$ can be determined. 
    Thus, if two strings $\textbf{u}, \textbf{v} \in \mathcal{S}_C^{(1)}(n)$ share the same $\Sigma^{\lceil \frac{n}{2} \rceil}$ sequence, and $\textbf{u}_1^{i+1} = \textbf{s}_1^i 0, \textbf{u}_{n-i}^{n} = 1\textbf{s}_{n+1-i}^{n}$, and $\textbf{v}_1^{i+1} = \textbf{s}_1^i 1, \textbf{v}_{n-i}^{n} = 0\textbf{s}_{n+1-i}^{n}$, $\tilde{C}(\textbf{u}) = \tilde{C}(\textbf{v})$ only if $|C(\textbf{u}) \setminus C(\textbf{v})| \leq 2.$ Thus, $|C(\textbf{u}) \setminus C(\textbf{v})| \geq 3$ implies that $\tilde{C}(\textbf{u}) \not = \tilde{C}(\textbf{v}).$ Observe that since $\textbf{v} \in \mathcal{S}^{(1)}_{C}(n)$, the number of $0$s in $c(\textbf{s}_1^{i})$ is at least by two larger than the number of $0$s in $c(\textbf{s}_{n+1-i}^{n}).$} }

Let us assume that on the contrary, there are two strings $\textbf{s}, \textbf{v}$ such that \textcolor{black}{ $|C(\textbf{s}) \setminus C(\textbf{v})|\leq 2$}, and that they differ only in their respective $C_j$ sets.

Since the prefixes and suffixes of the strings of length $i=n-j-1$ are identical, we let $s_1, \dots, s_i$ and $s_{n+1-i}, \dots, s_n$ denote the first and last $i$ bits of both strings. Let $c(\textbf{s})$ denote the composition of the string $\textbf{s}$. Furthermore, let $c(\textbf{s}_{i}^{j})$ denote the composition of $\textbf{s}_{i}^{j}$, $i\leq j$ and let $c = c(\textbf{s}_{i+3}^{n-i-2})$. 

When $n=2(i+1)+1$, the strings differ in two compositions in $C_{n-1-i}$ due to the assumption that the longest prefix-suffix pair shared by the two strings $\textbf{s}$ and $\textbf{v}$ is of length $i$. \textcolor{black}{Since $\textbf{s}$ and $\textbf{v}$ share the same $\Sigma^{\lceil \frac{n}{2} \rceil}$, let $\textbf{s}_{ \lceil \frac{n}{2} \rceil} = \textbf{v}_{ \lceil \frac{n}{2} \rceil} = b.$ Therefore, $\textbf{s} = \textbf{s}_1^i0b1\textbf{s}_{n+1-i}^n, \textbf{v}= \textbf{s}_1^i1b0\textbf{s}_{n+1-i}^n.$ Observe that for the cases $b=0$ and $b=1,$ $|C_{j-1}(\textbf{s}) \setminus C_{j-1}(\textbf{v})| \geq 1$. Thus, $|C(\textbf{s}) \setminus C(\textbf{v})|\geq 3.$ }

When $n \geq 2(i+1) + 3$ and $\sigma_{i+2}=1$, we let ${s_+}$ stand for the $(i+2)^{\text{th}}$ bit in the string $\textbf{s}$, and $v_+$ stand for the $(i+2)^{\text{th}}$ bit of string $\textbf{v}$. \textcolor{black}{Figure~\ref{fig:confusable_strings} illustrates this setting.} When $\sigma_{i+2} \in  \{0, 2 \}$, we let $b$ denote the $(i+2)^{\text{th}}$ bits of the two strings, which are identical.
Next, we determine conditions under which $C_{j-1}(\textbf{s})= C_{j-1}(\textbf{v})$.
\textcolor{black}{We know that if the compositions of the two strings differ by three or more, the two strings cannot be confused under the single composition error model. Due to the constraints imposed by the very construction of the string, we know that $|C_j (\textbf{s}) \setminus C_j(\textbf{v})| =2$. Thus, for the two strings not to be confusable under the given error model, $|C_\ell (\textbf{s}) \setminus C_\ell(\textbf{v})| > 0$ for some $\ell \in [n] \setminus \{ j \}.$ We show that a specific $\ell$ satisfying the previous inequality equals $j-1$, i.e., $|C_{j-1} (\textbf{s}) \setminus C_{j-1}(\textbf{v})| > 0.$}
Note that the compositions of substrings of length $n-i-2$ that contain the bits $i+1,\dots,n-i$ are identical for the two strings. 
\begin{figure}[h]
\centering
	\includegraphics[scale=0.27]{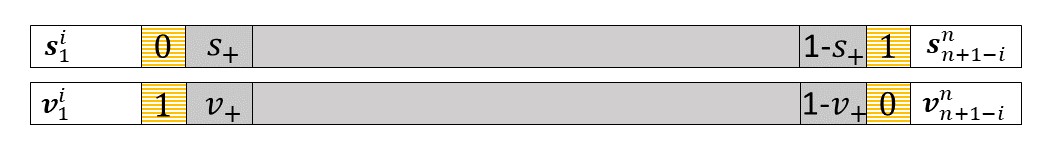} 
\caption{Two strings $\textbf{s}$ and $\textbf{v}$ that satisfy the assumptions used in the proof.}
\label{fig:confusable_strings}
\vspace{-0.08in}
\end{figure}\\
\emph{Case 1}: $\sigma_{i+2} = 1$. With a slight abuse of notation, we choose to write compositions as sets containing both bits and other compositions. On the left-hand-side of the equation below, the compositions correspond to the substrings of $\textbf{s}$ of length $n-i-2$ that \emph{may} differ for the two strings. 
The right-hand-side of the equation corresponds to the same entities in $\textbf{v}$. If the equation holds, then the multisets $C_{j-1}(\textbf{s})$ and $C_{j-1}(\textbf{v})$ are equal.
\begin{align*}
& \begin{rcases}
\begin{dcases}
\{{c(\textbf{s}_1^i), 0, s_+, c\}}, \\
\{{c(\textbf{s}_2^i), 0,  s_+, c ,1- s_+\}}, \\
\{{c(\textbf{s}_{j+2}^n),1, 1- s_+, c\}},  \\
\{{c(\textbf{s}_{j+2}^{n-1}), 1, 1- s_+, c,  s_+\}}
\end{dcases} 
\end{rcases} \\
& \qquad \qquad \qquad \qquad = 
\begin{rcases}
    \begin{dcases}
\{{c(\textbf{s}_1^i),1, v_+, c\}}, \\
\{{c(\textbf{s}_2^i), 1, v_+, c, 1-v_+\}}, \\
\{{c(\textbf{s}_{j+2}^n), 0, 1-v_+, c\}},  \\
\{{c(\textbf{s}_{j+2}^{n-1}), 0, 1-v_+, c, v_+\}}
	\end{dcases} 
\end{rcases} 
\end{align*}
The exhaustive case-by-case arguments that show that the above 
set equality is never true.

\emph{Case 2}: $\sigma_{i+2} \in \{0,2 \}$  Similar reasoning leads to a set equality condition 
in which $s_+$ and $v_+$ are replaced by $b$. Once again, it can be shown by an exhaustive case-by-case analysis that the set equality never holds, independently on the choice of $b$. 
This implies that the composition sets $C_{j-1}(\textbf{s})$ and $C_{j-1}(\textbf{v})$ differ, which in turn implies that the composition multisets of the two strings are at distance $\geq 3$. 
\end{proof}

\textcolor{black}{Recall that when $\lceil \frac{n}{2} \rceil \mod 3 = 0 ,$ the size of the code $\mathcal{S}^{(1)}_{C}(n)$ is $\frac{|\mathcal{S}_R(n-2)|}{2}.$ In addition to the redundancy required to construct the reconstruction code, we require one bit to ensure $n-3$ is even, three bits to fix $s_2, s_{n-1}$ and $s_{\lceil \frac{n}{2} \rceil }$, and four bits to ensure that $\lceil \frac{n}{2} \rceil $ is divisible by three. Thus, $\mathcal{S}^{(1)}_{C}(n)$ requires $\frac{1}{2} \log k + 13$ redundant bits. }

The backtracking string reconstruction process based on an erroneous composition set is straightforward: It takes $\mathcal{O}(n^2)$ time to compute the $\mathcal{T}_k$ multiset, and backtracking performs $\mathcal{O}(n)$ steps. Thus, the decoding algorithm can computes the original string in $\mathcal{O}(n^3)$ time. 

\subsection{Multiple Error-Correcting Reconstruction Codes: The Asymmetric Case} \label{sec:mulerr}

We consider an error model in which each of the multisets $C_i \cup C_{n+1-i},\, i \in [n]$ is allowed to contain at most one composition error and the total number of errors is at most $t$. The codes described in what follows add asymptotically negligible redundancy to the information strings to correct a fixed number of $t$ asymmetric errors. To construct the codes, we generalize the approach used in the previous section for correcting a single error. 

We start with the description of a \emph{$t$-shifted} reconstruction code of even length $m$, denoted by $\mathcal{S}^{(t)}_{R}(m)$ and defined below.

\begin{align}
    \mathcal{S}^{(t)}_{R}(m) = &\{ \textbf{s} \in \{ 0,1\}^{m}, \textbf{s}^t_1 = \mathbf{0}, \textbf{s}_{m-t+1}^m = \mathbf{1}, \text{ and }\label{set1} \\
 &\exists\, I \subseteq \{ t+1, \dots, m-t\} \text{ such that} \nonumber \\ 
 & \qquad \qquad \quad \quad \quad \quad \forall \, i \in I, s_i \not = s_{m+1-i}, \nonumber\\
 & \qquad \qquad \quad \quad \quad \quad  \text{           } \text{ and } \forall \, i \not \in I, s_i = s_{m+1-i}, \nonumber \\
                        &\textbf{s}_{[m/2] \cap I}  \text{ is } \text{a Catalan-Bertrand string.}         \nonumber                \} 
\end{align} 

We refer to strings of the form $\textbf{s}^t_1 \textbf{s}_{[m/2] \cap I} $ as \emph{$t$-shifted} Catalan-Bertrand strings. 
For a $\textbf{s} \in \mathcal{S}^{(t)}_{R}(m)$, every prefix of length $i$ where $\frac{m}{2} \geq i \geq t+1$, has at least $t+1$ more $0$s than its corresponding suffix of the same length. 

\begin{lemma} \label{lem:awesome}
Let \emph{$\textbf{s}, \textbf{v} \in \mathcal{S}^{(t)}_{R}(m)$} share the same $\Sigma^{\frac{m}{2}}$ sequence and satisfy \emph{$|C_j(\textbf{s}) \setminus C_j(\textbf{v}) | \leq 2$} for all $j \in [m]$. If the longest prefix-suffix pair shared by \emph{$\textbf{s}$} and \emph{$\textbf{v}$} is of length $i$, then their corresponding composition multisets $C_{m-i-1}, C_{m-i-2}, \dots, C_{m-i-t}, C_{m-i-t-1}$ each differ in at least $2$ compositions.  
\end{lemma}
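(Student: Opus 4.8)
The plan is to translate the hypotheses into a single antisymmetric difference vector supported on $I$, reduce every composition count to a statement about a symmetric prefix-sum walk, and then exhibit, for each of the $t+1$ target lengths, two substrings whose weights are forced to change in opposite directions and which cannot be reconciled inside the multiset.

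First I would exploit the shared sequence $\Sigma^{m/2}$. Since $\sigma_r=1$ exactly on $I$ while $\sigma_r\in\{0,2\}$ off $I$, the two codewords have the same index set $I$ and must agree at every position outside $I$ (there the bit is pinned by $\sigma_r$). Hence $d_r:=v_r-s_r$ is supported on $I$, and because $s_r+s_{m+1-r}=v_r+v_{m+1-r}=\sigma_r$ it is antisymmetric, $d_{m+1-r}=-d_r$. The longest shared prefix--suffix pair being $i$ forces the first nonzero $d$ at position $i+1$, and the Catalan--Bertrand constraint on $\textbf{s}_{[m/2]\cap I}$ and $\textbf{v}_{[m/2]\cap I}$ lets me take, without loss of generality, $d_{i+1}=1$ (so $d_{m-i}=-1$). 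I would then record, for a length $\ell$, the weight change of the substring starting at $a$ as $\text{wt}(\textbf{v}_a^{a+\ell-1})-\text{wt}(\textbf{s}_a^{a+\ell-1})=D(a+\ell-1)-D(a-1)$, where $D(x)=\sum_{r\le x}d_r$; antisymmetry of $d$ makes $D$ symmetric, $D(x)=D(m-x)$, with $D\equiv 0$ on $[0,i]$ and $D(i+1)=1$.

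The key step is to produce two guaranteed distinguishers per length. Writing $W_\ell(\textbf{s})$ for the multiset of weights of length-$\ell$ substrings (which equals $C_\ell(\textbf{s})$ in the binary case), I take $\ell=m-i-1-k$ with $0\le k\le t$ and compute $\delta_{k+1}(\ell)=D(m-i-1)-D(k)=D(i+1)-0=+1$ and $\delta_{i+2}(\ell)=D(m-k)-D(i+1)=D(k)-1=-1$, using only $D(k)=0$ (valid because $k\le t\le i$) and the symmetry of $D$; remarkably, these two values do not depend on the middle of the pattern $d$. Thus the front substring $\textbf{s}_{k+1}^{\,m-i-1}$ becomes one unit heavier and the back substring $\textbf{s}_{i+2}^{\,m-k}$ one unit lighter on passing from $\textbf{s}$ to $\textbf{v}$. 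It remains to see that these opposite shifts survive in the multiset. Here I invoke the shifted Catalan--Bertrand property: every length-$\ell$ prefix carries at least $t+1$ more $0$s than the matching suffix, so the front substring sits strictly in the light part of $W_\ell(\textbf{s})$ and the back substring strictly in the heavy part, the two parts being separated by a gap larger than the unit shifts. Consequently the lightest affected value loses a representative in $\textbf{v}$ while the heaviest gains one (or symmetrically), producing one uncancelled low weight and one uncancelled high weight, i.e.\ $|C_\ell(\textbf{s})\setminus C_\ell(\textbf{v})|\ge 2$ for every $k=0,\dots,t$.

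The hard part is the no-cancellation bookkeeping in full generality. When $d$ has several nonzero coordinates, many substrings change weight at once, and I must rule out that the $+1/-1$ shifts at the front and back are absorbed by compensating shifts elsewhere in $W_\ell$; this is exactly the multi-level analogue of the exhaustive case analysis carried out for a single error, and it is where the hypothesis $|C_j(\textbf{s})\setminus C_j(\textbf{v})|\le 2$ does its work, since it caps how many compositions per length may move and thereby forbids the delicate cancellations. I also expect the boundary lengths, where $m-i-1-k$ falls to $m/2$ or below (so the prefix substring is no longer the unique minimum) and where the parity of $m$ and the fixed outer $t$ bits interact, to require separate but routine treatment. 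Once all $t+1$ lengths are handled, the claim follows.
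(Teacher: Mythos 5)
Your algebraic setup is correct and is a genuinely different formulation from the paper's: the reduction to the antisymmetric difference vector $d$ supported on $I$, the symmetry $D(x)=D(m-x)$ of the prefix-sum walk, and the computation that for each $\ell=m-i-1-k$ with $0\le k\le t$ the substring $\textbf{s}_{k+1}^{m-i-1}$ gains weight $+1$ while $\textbf{s}_{i+2}^{m-k}$ loses weight $-1$ in passing to $\textbf{v}$ (using $i\ge t\ge k$ so that $D(k)=0$) are all accurate. The paper instead runs a Backtracking-style extension analysis that characterizes, step by step, the structure of any string $\textbf{v}$ confusable with $\textbf{s}$.

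However, there is a genuine gap, and it sits exactly where the lemma's difficulty lies: you never show that these $\pm 1$ shifts survive as actual differences of the multisets $C_\ell(\textbf{s})$ and $C_\ell(\textbf{v})$, and both mechanisms you sketch for this are unsound. First, the ``gap'' argument: $\textbf{s}_{k+1}^{m-i-1}$ is not a prefix, and the shifted Catalan--Bertrand property only guarantees a weight separation of $t+1-k$ between your front and back substrings, which degenerates to $1$ when $k=t$; more importantly, even a large gap between a light and a heavy part of $W_\ell$ does not prevent cancellation \emph{within} a part --- another front-group substring could drop by one unit of weight and occupy exactly the multiset slot vacated by $\textbf{s}_{k+1}^{m-i-1}$, leaving $C_\ell(\textbf{s})=C_\ell(\textbf{v})$. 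Second, invoking the hypothesis $|C_j(\textbf{s})\setminus C_j(\textbf{v})|\le 2$ to ``forbid the delicate cancellations'' is logically backwards: an upper bound on the symmetric difference cannot produce the lower bound $\ge 2$; you must separately rule out the values $0$ and $1$. In the paper the $\le 2$ hypothesis plays a different role --- it pins down the admissible structure of $\textbf{v}$ (forcing the alternating $00\cdots 0$ versus $10\cdots 0$ patterns), and the exhaustive matching analysis of Appendix D, whose engine is the fact that a composition containing a length-$i$ prefix has at least $t+1$ more $0$s than one containing a length-$i$ suffix and hence can never be matched to it, shows that every surviving case yields a difference of exactly $2$, with $0$ and $1$ never occurring. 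One observation that your framework does yield cheaply: your shifts satisfy $\delta_a=-\delta_{i+k+3-a}$, so $C_\ell(\textbf{s})$ and $C_\ell(\textbf{v})$ always have equal weight sums and a difference of exactly $1$ is automatically impossible; the entire burden is therefore to prove $C_\ell(\textbf{s})\neq C_\ell(\textbf{v})$ for each of the $t+1$ lengths, and that is precisely the case analysis you defer as routine --- it is not routine, it is the whole proof.
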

We defer the proof to Appendix~\ref{app:lem4}. 
\begin{corollary} \label{cor:4}
Let \emph{$\textbf{s} \in \mathcal{S}^{(t)}_{R}(m)$}, and let \emph{$\tilde{C}(\textbf{s})$} be the composition multiset \emph{$C(\textbf{s})$} corrupted by at most $t$ asymmetric errors. Then, given the correct $\Sigma^{\frac{m}{2}}$ sequence, the string \emph{$\textbf{s}$} can be uniquely reconstructed from \emph{$\tilde{C}(\textbf{s})$}.
\end{corollary}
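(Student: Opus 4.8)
The plan is to recast unique reconstructability as a non-confusability statement and then contradict the error budget by invoking Lemma~\ref{lem:awesome}. Because the decoder is handed the correct $\Sigma^{\frac{m}{2}}$, it only has to separate codewords of $\mathcal{S}^{(t)}_{R}(m)$ that share this sequence. Hence it suffices to rule out the existence of a codeword $\textbf{v}\in\mathcal{S}^{(t)}_{R}(m)$ with $\textbf{v}\neq\textbf{s}$, the same $\Sigma^{\frac{m}{2}}$, and the property that the observed multiset $\tilde{C}(\textbf{s})$ also lies within $t$ asymmetric errors of $C(\textbf{v})$. I would assume such a $\textbf{v}$ exists, let $i$ denote the length of the longest prefix-suffix pair shared by $\textbf{s}$ and $\textbf{v}$, and drive toward a contradiction.

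First I would verify that the hypothesis of Lemma~\ref{lem:awesome} is automatically satisfied. The asymmetric model permits at most one composition error inside each $C_j\cup C_{m+1-j}$, hence at most one error within $C_j$ alone; therefore $\tilde{C}_j(\textbf{s})$ differs from $C_j(\textbf{s})$ in at most one composition, and since $\tilde{C}_j(\textbf{s})$ is also within one error of $C_j(\textbf{v})$, the triangle inequality for the substitution distance on equal-size multisets gives $|C_j(\textbf{s})\setminus C_j(\textbf{v})|\le 2$ for every $j\in[m]$. With this hypothesis in hand, Lemma~\ref{lem:awesome} applies and yields $t+1$ consecutive levels $m-i-1,\,m-i-2,\,\dots,\,m-i-t-1$ at which $C_j(\textbf{s})$ and $C_j(\textbf{v})$ differ in at least two compositions; combined with the upper bound just established, they differ in \emph{exactly} two compositions at each of these $t+1$ levels.

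The final step is a pigeonhole count on the error budget of $\textbf{s}$. Fix one of these $t+1$ levels $j$. If $\textbf{s}$ suffered no error in $C_j$, then $\tilde{C}_j(\textbf{s})=C_j(\textbf{s})$ would differ from $C_j(\textbf{v})$ in two compositions, forcing two errors inside $C_j(\textbf{v})$, which is impossible since the asymmetric model allows at most one error per level. Hence $\textbf{s}$ must incur at least one composition error at each of these $t+1$ distinct levels; errors at distinct levels are distinct, so $\textbf{s}$ sustains at least $t+1$ errors, contradicting the assumption of at most $t$ asymmetric errors. Therefore $\textbf{v}=\textbf{s}$, so $\textbf{s}$ is the unique codeword with the given $\Sigma^{\frac{m}{2}}$ consistent with $\tilde{C}(\textbf{s})$, and running the backtracking reconstruction guided by $\Sigma^{\frac{m}{2}}$ returns it.

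I expect the real difficulty to lie entirely in Lemma~\ref{lem:awesome} (deferred to the appendix), whose proof must exhibit the block of $t+1$ affected levels immediately below $m-i$ and control how a single bit-disagreement propagates downward through the composition multisets; granting that lemma, the corollary is just the short packing argument above. The one point that needs care here is the faithful translation of the ``at most $t$ asymmetric errors'' model into a per-level budget of one error, since it is precisely this per-level restriction that converts a distance-two discrepancy at $t+1$ levels into a demand for $t+1$ errors and hence the contradiction.
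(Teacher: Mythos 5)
Your proof is correct and takes essentially the same route as the paper: the paper's entire proof is the one-line remark that the corollary ``immediately follows'' from Lemma~\ref{lem:awesome}, and your argument is exactly the detailed version of that implication --- using the per-level triangle inequality to verify the lemma's hypothesis $|C_j(\textbf{s}) \setminus C_j(\textbf{v})| \leq 2$, then turning the lemma's $t+1$ levels of guaranteed discrepancy into a contradiction with the budget of at most $t$ asymmetric errors (at most one per level).
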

\begin{proof}
The result immediately follows from Lemma~\ref{lem:awesome}.
\end{proof}

Henceforth, we use $\mathcal{S}^{(t)}_{CA}(n)$ to denote an asymmetric $t$-error-correcting reconstruction code. Strings $\textbf{s} \in \mathcal{S}^{(t)}_{CA}(n)$ are constructed by adding $n-m$ redundancy bits to a string $\textbf{s}' \in \mathcal{S}^{(t)}_{R}(m)$ of even length in such a way that the $\Sigma^{\lceil \frac{n}{2} \rceil}$ sequence can be recovered even in the presence of $t$ asymmetric errors.

\begin{claim}
Let \emph{$\textbf{s}$} be an arbitrary string of even length $n$ and let \emph{$\tilde{C}(\textbf{s})$} denote the composition multiset \emph{$C(\textbf{s})$} corrupted by $t$ asymmetric errors. Then, at least $\frac{n}{2} - 3t$ elements in $(\sigma_1, \sigma_2, \dots, \sigma_{\frac{n}{2}})$ can be determined based on \emph{$\tilde{C}(\textbf{s})$}.
\end{claim}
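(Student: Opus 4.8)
The plan is to reduce the recovery of the $\sigma_j$'s to the recovery of the cumulative weights $w_j$, and then to exploit the fact that under the asymmetric model the errors can be \emph{localized} to at most $t$ of the symmetric index pairs $\{i,\,n+1-i\}$. First I would record a local formula expressing each $\sigma_j$ through three consecutive cumulative weights. Starting from Proposition~\ref{prop:wsback}, namely $w_j = j\,w_1 - \sum_{i=1}^{j-1} i\,\sigma_{j-i}$, a short telescoping computation of $w_{j+1}-w_j$ gives $w_{j+1}-w_j = w_1 - \sum_{\ell=1}^{j}\sigma_\ell$, and a second difference yields
\begin{equation}
\sigma_j = 2w_j - w_{j-1} - w_{j+1}, \qquad 1 \le j \le \tfrac{n}{2},
\end{equation}
with the conventions $w_0 = 0$ and $w_{\frac{n}{2}+1} = w_{\frac{n}{2}}$, the latter following from $w_\ell = w_{n+1-\ell}$. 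Hence $\sigma_j$ is determined \emph{exactly} once the three quantities $w_{j-1}, w_j, w_{j+1}$ are known.

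Next I would establish the localization step. A composition of length $\ell$ is determined by its number of $1$s, so replacing it by a distinct composition of the same length strictly changes the cumulative weight; therefore $\tilde{w}_\ell \neq w_\ell$ holds precisely when $C_\ell$ is corrupted. Combined with the asymmetric constraint that each pair $C_i \cup C_{n+1-i}$ carries at most one composition error, this means that for every pair-index $i \in [\frac{n}{2}]$ exactly one of two situations occurs: either the pair is error-free, in which case $\tilde{w}_i = \tilde{w}_{n+1-i} = w_i$ and $w_i$ is recovered; or the pair carries exactly one error, in which case $\tilde{w}_i \neq \tilde{w}_{n+1-i}$ and $w_i$ is pinned down only to one of two candidate values. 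Letting $B \subseteq [\frac{n}{2}]$ denote the set of corrupted pair-indices, the fact that the total number of errors is at most $t$ and that distinct errors occupy distinct pairs gives $|B| \le t$, and $w_i$ is correctly recovered for every $i \notin B$ (with $w_0 = 0$ always available).

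Finally I would complete the count. By the local formula, $\sigma_j$ can be determined whenever none of the pair-indices $j-1, j, j+1$ lies in $B$; equivalently, a corrupted pair $b \in B$ can obstruct only the indices $j \in \{b-1, b, b+1\}$, that is, at most three values of $j$. Hence at most $3\lvert B\rvert \le 3t$ of the entries $\sigma_1,\dots,\sigma_{\frac{n}{2}}$ are left undetermined, so at least $\frac{n}{2}-3t$ of them are recoverable from $\tilde{C}(\textbf{s})$, which is the desired bound.

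The main obstacle (and the only delicate point) is the localization step: one must verify that a single composition error genuinely registers as the inequality $\tilde{w}_i \neq \tilde{w}_{n+1-i}$, so that error-free pairs are correctly identified and their recovered weights can be trusted, and one must check the boundary contributions $w_0$ and $w_{\frac{n}{2}+1}$ separately. Both are routine given that a length-$\ell$ composition is equivalent to its weight and that $w_\ell = w_{n+1-\ell}$; the remainder is a direct counting argument.
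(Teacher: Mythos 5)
Your proof is correct and follows essentially the same route as the paper: both difference the linear system \eqref{eq:sigmas} to obtain $\sigma_j$ from the three consecutive weights $w_{j-1}, w_j, w_{j+1}$, observe that $t$ asymmetric errors corrupt at most $t$ of the $w_i$'s, and count that each corrupted weight obstructs at most three entries of $\Sigma^{\frac{n}{2}}$. The only difference is presentational: you make explicit the second-difference formula $\sigma_j = 2w_j - w_{j-1} - w_{j+1}$ and the localization argument (corrupted pairs are identified by $\tilde{w}_i \neq \tilde{w}_{n+1-i}$), both of which the paper leaves implicit.
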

\begin{proof}
The claim is a consequence of a simple analysis of the set of linear equations in~\eqref{eq:sigmas}. 
Clearly, $w_i$ is unknown whenever $C_i \cup C_{n+1-i}$ contains an error. Therefore, if we have $t$ errors we only have 
$\frac{n}{2}-t$ linear equations that involve $\frac{n}{2}$ variables. From this system of $\frac{n}{2}-t$ linear equations we form a new system of linear equations by subtracting equation~\eqref{eq:sigmas} with index $i$ from the equation~\eqref{eq:sigmas} with index $i+1$. Note that for all values of $i$ such that $w_{i-1}, w_i$ and $w_{i+1}$ are known, the value of $\sigma_{i}$ can be found from the new system of equations. Thus, the derived system of equations allows one to infer at least $\frac{n}{2}-3t$ elements of the $\Sigma^{\frac{n}{2}}$ sequence. Note that all the expressions above assume that $n$ is even. For odd $n$, $\lceil \frac{n}{2} \rceil$ should be used instead.
\end{proof}
We illustrate the above claim with an example. If $w_3, w_4$ and $w_5$ are known then using the linear equations corresponding to $i=3$ and $i=4$, one can infer $\sum_{k=4}^{\frac{n}{2}} \sigma_k$ and using the linear equations corresponding to $i=4$ and $i=5$, one can infer $\sum_{k=5}^{\frac{n}{2}} \sigma_k$. Thus, one can determine $\sigma_4 = \sum_{k=4}^{\frac{n}{2}} \sigma_k - \sum_{k=5}^{\frac{n}{2}} \sigma_k$.

Thus, to recover the entire $\Sigma^{ \frac{n}{2} }$ sequence, it suffices to take the $\Sigma^{\frac{n}{2} }$ string from a systematic Reed-Solomon code over the alphabet $\{ 0,1,2\}$ that can correct up to $3t$ erasures. 

Thus, the codestrings $\textbf{s} \in \mathcal{S}^{(t)}_{CA}(n)$ are constructed via the following procedure:
\begin{itemize}
\item Pick a string $\textbf{s}^{'} = \textbf{s}_1^{' \frac{m}{2} } \textbf{s}_{ \frac{m}{2} +1}^{'m} \in \mathcal{S}^{(t)}_{R}(m)$.
\item Using a systematic Reed-Solomon code over the alphabet $\{ 0,1,2\}$ that can correct up to $3t$ erasures, the $\Sigma^{ \frac{m}{2} }$ sequence is mapped to $\Sigma^{ \frac{n}{2} }$. Note that the sequence $(\sigma_{ \frac{m}{2}  +1}, \dots, \sigma_{\frac{n}{2}})$ is appended to $\Sigma^{\frac{m}{2}}$.
\item A string $\textbf{b}$ of length $n-m$ is created using the sequence $(\sigma_{\frac{m}{2} +1}, \dots, \sigma_{\frac{n}{2}})$ as follows. For all $k \in \left[ \frac{n-m}{2}\right]$:
\begin{align*}
b_k b_{n-m+1 -k} = 
\begin{cases}
00, \text{ if } \sigma_{\frac{m}{2} + k} = 0; \\
01, \text{ if } \sigma_{\frac{m}{2} + k} = 1; \\
11, \text{ if } \sigma_{\frac{m}{2} + k} = 2.  
\end{cases}
\end{align*}
\item A codestring $\textbf{s} \in \mathcal{S}^{(t)}_{CA}(n)$ is obtained by concatenating the strings $\textbf{s}^{'}$ and $\textbf{b}$, namely $\textbf{s} = \textbf{s}_1^{'\frac{m}{2}} \, \textbf{b}_1^{n-m} \, \textbf{s}_{\frac{m}{2}+1}^{'m}$. 
\end{itemize}

Given \emph{$\tilde{C}(\textbf{s})$}, the composition multiset \emph{$C(\textbf{s})$} corrupted by $t$ asymmetric errors, the string $\textbf{s}$ can be uniquely reconstructed via the the following four-step procedure:
\begin{itemize}
\item Construct the linear system of equations governed by~\eqref{eq:sigmas} using the erroneous composition multiset. 
\item Solve for the $\sigma_i$ values that can be inferred from the linear system. 
\item Infer the correct $\Sigma^{\frac{n}{2}}$ sequence using an efficient polynomial evaluation decoder. 
\item Reconstruct the string $\textbf{s}$ using the Backtracking algorithm. 
\end{itemize}
The procedure described above requires $\frac{1}{2} \log n + 6$ redundant bits to ensure the Catalan-Bertrand \textcolor{black}{ string structure of even length}, $2t$ redundant bits for the t-shifted structure and $3t \log n$ redundant bits to correct erasures in the $\Sigma^{\frac{n}{2}}$ sequence. Thus, the number of redundant bits $r$ required is $\left( \frac{1}{2} + 3t \right) \log n + 2t + 6 $. Furthermore, $r$ does not exceed $\left( \frac{1}{2} + 3t \right) \log k + 2t + \textcolor{black}{7} + \left( \frac{1}{2} + 3t \right) \frac{1}{\kappa} $, where $\kappa$ is supremum over all $\kappa > 0$ such that $n \geq (1+\kappa) \left(\left( \frac{1}{2} + 3t \right) \log n + 2t + 7 \right)$. 

\textcolor{black}{Recall that the Backtracking algorithm takes $\mathcal{O}(n^3)$ time to reconstruct the string (it takes $\mathcal{O}(n^2)$ time to find the longest compositions in the set $\mathcal{T}_\ell \setminus C,$ and reconstruct the bits $s_{\ell+1} s_{n-\ell};$ and, there are $\frac{n}{2}$ such pairs to be reconstructed). With a slight abuse of notation, we say that index $i$ corresponds to an asymmetric error if a single composition error occurred in $C_i \cup C_{n+1-i}.$}
\textcolor{black}{Now assume that the indices $i, i+1, \dots , j$ ($j \geq i$) correspond to composition lengths that contain asymmetric errors, and that, $C_{i-1}, C_{n+2-i}, C_{j+1}, C_{n-j}$ are error-free. Note that the proof of Lemma~\ref{lem:awesome} established that the Backtracking algorithm can reconstruct the correct substrings $\textbf{s}_i^j \textbf{s}_{n+1-j}^{n+1-i}$ before proceeding to reconstruct the bits $s_{j+1}, s_{n-j}.$ Thus, every contiguous burst of errors of length $t'$ causes an additional $\mathcal{O}(n^2 2^{t'})$ reconstruction time delay. Thus, the worst case reconstruction time is $\mathcal{O}(n^2 2^{t}).$ } 

Combining this result with that of Corollary~\ref{cor:4} establishes Theorem~\ref{thm:asym}. 

\section{Multiple Error-Correcting Reconstruction Codes: The Symmetric Case}\label{sec:symmetric}

We now turn our attention to designing reconstruction codes capable of correcting symmetric composition errors. The proposed method leverages a polynomial formulation of the composition reconstruction problem first described in~\cite{acharya2014string}. The main result is a constructive proof for the existence of codes with $\O(t^2 \log k)$ bits of redundancy capable of correcting $t$ symmetric composition errors. 

To this end, we first review the results of~\cite{acharya2014string} describing how to formulate the string reconstruction problem in terms of bivariate polynomial factorization. 

For a string $\textbf{s} \in \{0,1\}^n$, let $P_{\textbf{s}}(x,y)$ be a bivariate polynomial of degree $n$ with coefficients in $\{0,1\}$ such that $P_{\textbf{s}}(x,y)$ contains exactly one term with total degree $i \in \{0,1,\ldots, n\}$. If $\textbf{s}= s_1 \ldots s_n$ and if $\Big (P_{\textbf{s}}(x,y) \Big)_i$ denotes the unique term of total degree $i$, then $\Big (P_{\textbf{s}}(x,y) \Big)_0 = 1$, and 
\begin{align*}
\Big (P_{\textbf{s}}(x,y) \Big)_i = \begin{cases}
y \, \Big(P_{\textbf{s}}(x,y) \Big)_{i-1}, \text{ if $s_i = 0$}, \\
x \, \Big(P_{\textbf{s}}(x,y) \Big)_{i-1}, \text{ if $s_i = 1$.}
\end{cases}
\end{align*}
In words, we use $y$ to denote the bit $0$ and $x$ to denote the bit $1$ and then summarize the composition of all prefixes of the string $\textbf{s}$ in polynomial form. As a simple example, for $\textbf{s} = 0100$ we have $P_{\textbf{s}}(x,y) = 1 + y + xy + xy^2 + xy^3$. To see why this is true, we start with the free coefficient $1$, then add $y$ to indicate that the prefix of length one of the string equals $0$, add $xy$ to indicate that the prefix of length two contains one $0$ and one $1$, add $xy^2$ to indicate that the prefix of length three contains two $0$s and one $1$ and so on. 

We also introduce another bivariate polynomial $S_{\textbf{s}}(x,y)$ to describe the composition multiset $C(\textbf{s})$ in a manner similar to $P_{\textbf{s}}(x,y)$. In particular, we now associate each composition with a monomial in which the symbol $y$ represents the bit $0$ and the symbol $x$ with the bit $1$. As an example, for $\textbf{s} = 0100$ we have 
\begin{align*}
C(\textbf{s})=\Big \{ 0 , 1, 0, 0, 01, 01, 0^2, 0^21, 0^21, 0^31 \Big \},
\end{align*}
and 
$$S_{\textbf{s}}(x,y) = x + 3y + 2xy + y^2 + 2xy^2 + xy^3,$$
where the first two terms in $S_{\textbf{s}}(x,y)$ indicate that the composition multiset contains one substring $1$ and three substrings $0$; the next three terms indicate that the string contains two substrings with one $1$ and one $0$ and one substring with two $0$s. The remaining terms are interpreted similarly. 

The key identity from~\cite{acharya2014string} is of the form
\begin{align}\label{eq:SP}
P_{\textbf{s}} (x,y) \, P_{\textbf{s}}\left(\frac{1}{x}, \frac{1}{y}\right) =  (n+1) + S_{\textbf{s}}(x,y) + S_{\textbf{s}}\left(\frac{1}{x}, \frac{1}{y}\right).
\end{align}
Given a bivariate polynomial $f(x,y)$, we use $f^{*}(x,y)$ to denote its reciprocal polynomial, defined as
$$ f^{*}(x,y) = x^{\text{deg}_x(f)} y^{\text{deg}_y(f)} f\left(\frac{1}{x}, \frac{1}{y}\right),$$
where $\text{deg}_x(f)$ denotes the $x$-degree of $f(x,y)$ and $\text{deg}_y(f)$ denotes its $y$-degree. For simplicity, we hence write $d_x = \text{deg}_x(P_{\textbf{s}})$ and $d_y = \text{deg}_y(P_{\textbf{s}})$. Using the notion of the reciprocal polynomial we can rewrite the expression in (\ref{eq:SP}) as:
\begin{align}
P_{\textbf{s}} (x,y) \, P^{*}_{\textbf{s}}(x, y) = x^{d_x} y^{d_y} \, \left( n + 1 + S_{\textbf{s}}(x,y) \right) + S^{*}_{\textbf{s}}(x,y).
\end{align}

Note that if $\tilde{C}(\textbf{s})$ is the composition multiset resulting from $t$ symmetric composition errors in $C(\textbf{s})$ and $\tilde{S}_{\textbf{s}}(x,y)$ is the polynomial representation of $\tilde{C}(\textbf{s})$ while $S_{\textbf{s}}(x,y)$ is the polynomial representation of $C(\textbf{s})$, then
\begin{align*}
\tilde{S}_{\textbf{s}}(x,y) = S_{\textbf{s}}(x,y) + E(x,y),
\end{align*}
where $E(x,y)$ has at most $2t$ nonzero coefficients. \textcolor{black}{Note that the coefficients of $E(x,y)$ lie in $\{-t, -t+1, \dots, -1, 0, 1, \dots , t-1, t \}$. A composition error corresponds to removing a multinomial $e_t$ from $S_{\textbf{s}}(x,y)$ and adding a different multinomial $e_f$. Thus, $-e_t$, and $+e_f$ are addends in $E(x,y)$. Since up to $t$ errors are possible, the coefficients of every multinomial in $E(x,y)$ are integers in $\{-t, -t+1, \dots, -1, 0, 1, \dots , t-1, t \}$. If every multinomial removed from or added to $S_{\textbf{s}}(x,y)$ is unique, then there are $2t$ terms in $E(x,y)$. Otherwise, the number of multinomials is less than $2t$.} Our first result relates $\tilde{S}_{\textbf{s}}(x,y)$ and $P_{\textbf{s}}(x,y)$.

\begin{claim}\label{cl:equiv} Suppose that $\emph{wt}(\textbf{s}) \textcolor{black}{\bmod (2t+1)} = c_w$ for some $c_w \in \{0,1,\ldots, 2t\}$. Then, given $\tilde{S}_{\textbf{s}}(x,y)$ and $c_w$ one can generate
\begin{align*}
P_{\textbf{s}}(x,y) \, P^{*}_{\textbf{s}}(x,y)  + \tilde{E}(x,y),
\end{align*}
where the polynomial $\tilde{E}(x,y)$ has at most $4t$ terms.
\end{claim}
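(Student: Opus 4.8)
The plan is to invert the identity
\[
P_{\textbf{s}}(x,y)\,P^{*}_{\textbf{s}}(x,y) = x^{d_x} y^{d_y}\bigl(n+1+S_{\textbf{s}}(x,y)\bigr) + S^{*}_{\textbf{s}}(x,y)
\]
by substituting the observed $\tilde{S}_{\textbf{s}}$ for the true $S_{\textbf{s}}$ and then controlling the resulting discrepancy. Recall that $d_x = \text{deg}_x(P_{\textbf{s}}) = \weight(\textbf{s})$ and $d_y = \text{deg}_y(P_{\textbf{s}}) = n-\weight(\textbf{s})$, and that these are also the degrees of $S_{\textbf{s}}$; hence the only data needed to evaluate the right-hand side are $S_{\textbf{s}}$ itself and the single integer $\weight(\textbf{s})$. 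Since we are only handed the corrupted $\tilde{S}_{\textbf{s}} = S_{\textbf{s}} + E$, the argument splits into two parts: first pin down $\weight(\textbf{s})$ exactly, and then show that using $\tilde{S}_{\textbf{s}}$ in place of $S_{\textbf{s}}$ perturbs the output by a polynomial with at most $4t$ terms.

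For the first part I would read off from $\tilde{S}_{\textbf{s}}$ the coefficient of $x$, call it $\tilde{w}_1$, which counts the length-one compositions observed to be a single $1$; in the error-free case this equals $w_1 = \weight(\textbf{s})$. Each of the at most $t$ symmetric errors that happens to fall in $C_1$ replaces one length-one composition by the other and hence changes this count by exactly $\pm 1$, so $|\tilde{w}_1 - \weight(\textbf{s})| \le t$. Consequently $\weight(\textbf{s})$ lies in the window $\{\tilde{w}_1 - t, \ldots, \tilde{w}_1 + t\}$ of $2t+1$ consecutive integers, which forms a complete residue system modulo $2t+1$. The hypothesis $\weight(\textbf{s}) \equiv c_w \pmod{2t+1}$ therefore singles out a unique candidate, determining $\weight(\textbf{s})$, and with it $d_x$ and $d_y$, exactly.

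With the correct degrees in hand, I would form
\[
G(x,y) := x^{d_x} y^{d_y}\bigl(n+1+\tilde{S}_{\textbf{s}}(x,y)\bigr) + x^{d_x} y^{d_y}\,\tilde{S}_{\textbf{s}}\!\left(\tfrac{1}{x},\tfrac{1}{y}\right),
\]
the reflection being taken with respect to the recovered $d_x, d_y$. Substituting $\tilde{S}_{\textbf{s}} = S_{\textbf{s}} + E$ and using $S^{*}_{\textbf{s}} = x^{d_x} y^{d_y} S_{\textbf{s}}(1/x,1/y)$ together with the identity above gives
\[
G(x,y) = P_{\textbf{s}}(x,y)\,P^{*}_{\textbf{s}}(x,y) + \tilde{E}(x,y), \qquad \tilde{E} := x^{d_x} y^{d_y} E(x,y) + x^{d_x} y^{d_y} E\!\left(\tfrac{1}{x},\tfrac{1}{y}\right).
\]
Because $E$ has at most $2t$ terms and both multiplication by a monomial and reflection preserve the number of terms, $\tilde{E}$ has at most $2t+2t = 4t$ terms, which is the assertion; and $G$ is built solely from $\tilde{S}_{\textbf{s}}$, $c_w$, and the known length $n$.

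The main obstacle is the exact recovery of $\weight(\textbf{s})$: the reflection step is valid only for the true degrees $d_x, d_y$, yet the errors may corrupt precisely the top-degree (length-$n$) composition and several length-one compositions, so the weight cannot simply be read off. The modular side information $c_w$, combined with the observation that the total budget of $t$ errors bounds $|\tilde{w}_1 - w_1|$ by $t$, is exactly what closes this gap. A minor technical point I would flag is that an erroneous composition of a long substring may report more $1$s than $\weight(\textbf{s})$, so the reflected summand $x^{d_x} y^{d_y} E(1/x,1/y)$ can carry negative exponents; this does not affect the term count, but I would state it so that $\tilde{E}$ is understood as a (Laurent) polynomial with at most $4t$ monomials.
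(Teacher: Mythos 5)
Your proposal is correct and follows essentially the same route as the paper's proof: recover $\text{wt}(\textbf{s})$ (hence $d_x$ and $d_y = n - d_x$) by intersecting the window of $2t+1$ consecutive candidates around the observed count of length-one compositions with the residue constraint $c_w \bmod (2t+1)$, then form $x^{d_x} y^{d_y}\bigl(n+1+\tilde{S}_{\textbf{s}}(x,y)+\tilde{S}_{\textbf{s}}(\tfrac{1}{x},\tfrac{1}{y})\bigr)$, which equals $P_{\textbf{s}}(x,y)\,P^{*}_{\textbf{s}}(x,y) + \tilde{E}(x,y)$ with $\tilde{E}(x,y) = x^{d_x} y^{d_y}\bigl(E(x,y)+E(\tfrac{1}{x},\tfrac{1}{y})\bigr)$ having at most $2t+2t=4t$ terms. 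Your closing remark that $\tilde{E}$ must be read as a Laurent polynomial (an erroneous composition can report more $1$s than $\text{wt}(\textbf{s})$, producing negative exponents after reflection) is a technical point the paper leaves implicit; it does not affect the term count or the subsequent finite-field evaluations, but flagging it is a sound refinement.
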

\begin{proof} First, recall that $\tilde{S}_{\textbf{s}}(x,y) = S_{\textbf{s}}(x,y) + E(x,y)$ where $E(x,y)$ has at most $2t$ nonzero coefficients. Given $c_w$, we can easily determine the exact degrees $d_x$ and $d_y$ of the polynomial encoding of $\textbf{s}$: \textcolor{black}{In the error-free case, the sum of all compositions of length $1$ (i.e., the sum of the bits of the string) equals $\text{wt}(\textbf{s}) = d_x$.
When the composition multiset is erroneous, we can only observe $\tilde{d}_x, $ which takes a value in the set $\{d_x-t, d_x-t+1, \dots, d_x, d_x+1, d_x+2, \dots, d_x+t-1, d_x+t \}.$ Equivalently, we know that 
$$d_x \in \{ \tilde{d}_x-t, \tilde{d}_x-t+1, \dots, \tilde{d}_x, \tilde{d}_x+1, \tilde{d}_x+2, \dots, \tilde{d}_x+t-1, \tilde{d}_x+t \}. $$
Since $d_x \equiv c_w \mod (2t +1),$ exactly one value in the set $\{ \tilde{d}_x-t, \tilde{d}_x-t+1, \dots, \tilde{d}_x, \tilde{d}_x+1, \tilde{d}_x+2, \dots, \tilde{d}_x+t-1, \tilde{d}_x+t \}$ will satisfy this condition. Hence, $d_w$ can be inferred exactly, and since $d_y = n - d_x$, the same conclusion holds for $d_y$.}

Next, we form $P_{\textbf{s}}(x,y) \, P^{*}_{\textbf{s}}(x,y)$ as follows:
\begin{align*}
&x^{d_x} y^{d_y} \, \left( n + 1 + \tilde{S}_{\textbf{s}}(x,y) + \tilde{S}_{\textbf{s}} \left(\frac{1}{x}, \frac{1}{y}\right) \right) \\
&= x^{d_x} y^{d_y} (n+1) + x^{d_x} y^{d_y} \, \times \\
   &\left( S_{\textbf{s}}(x,y) + E(x,y) 
  +  S_{\textbf{s}}\left(\frac{1}{x}, \frac{1}{y}\right) + E\left(\frac{1}{x},\frac{1}{y}\right) \right)   \\
&= P_{\textbf{s}}(x,y) \, P^{*}_{\textbf{s}}(x,y) + x^{d_x} y^{d_y} \left( E(x,y) + E \left( \frac{1}{x}, \frac{1}{y} \right) \right) \\
&= P_{\textbf{s}}(x,y) \, P^{*}_{\textbf{s}}(x,y) + \tilde{E}(x,y),
\end{align*}
where $\tilde{E}(x,y) =  x^{d_x} y^{d_y}\left(E(x,y) + E\left( \frac{1}{x}, \frac{1}{y} \right) \right)$ has at most $4t$ nonzero coefficients, which proves the desired result.
\end{proof}

Let $\mathbb{F}_q$ be a finite field of order $q$, where $q$ is an odd prime. Let $\alpha \in \mathbb{F}_q$ be a primitive element of the field. For a polynomial $f(x) \in \mathbb{F}_q[x]$, let $\R(f)$ denote the set of its roots. We find the following result useful for our subsequent derivations.

\begin{theorem}\label{th:RS} (\cite[Ch.~5]{roth2006introduction}) Assume that $E(x) \in \mathbb{F}_q[x]$ has \textcolor{black}{at most $t$} nonzero coefficients. Then, $E(x)$ can be uniquely determined in $\cO(n^2)$ time given $E(\alpha^t), E(\alpha^{t-1}), \ldots, E(\alpha^0), E(\alpha^{-1}), \ldots, E(\alpha^{-t})$.
\end{theorem}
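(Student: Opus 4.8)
The plan is to recognize this as the classical syndrome-decoding problem for a sparse polynomial and to recover $E$ by the Berlekamp--Massey / key-equation method. First I would write $E(x) = \sum_{j=1}^{\tau} e_j x^{k_j}$ with $\tau \le t$, distinct exponents $k_1 < \cdots < k_\tau$ in $\{0,\ldots,n\}$, and nonzero coefficients $e_j \in \mathbb{F}_q^{*}$, and set $S_m := E(\alpha^m)$ for $m \in \{-t,\ldots,t\}$, which are exactly the $2t+1$ given quantities. Writing $X_j := \alpha^{k_j}$ for the \emph{locators}, each syndrome takes the form $S_m = \sum_{j=1}^{\tau} e_j X_j^{\,m}$. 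Here I would invoke the mild hypothesis $q > n$, implicit in the application since $q$ is chosen larger than the string length; this guarantees that the locators $X_j$ are pairwise distinct and that the powers $\alpha^{-t},\ldots,\alpha^{t}$ are distinct, so the $2t+1$ evaluation points are genuinely different.

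Next I would introduce the error-locator polynomial $\Lambda(z) = \prod_{j=1}^{\tau}(1 - X_j z) = \sum_{i=0}^{\tau} \Lambda_i z^i$, normalized by $\Lambda_0 = 1$. Substituting the root relation $\Lambda(X_j^{-1}) = 0$ into the syndrome expansion and summing over $j$ yields, for every admissible index, the linear recurrence $\sum_{i=0}^{\tau} \Lambda_i\, S_{m-i} = 0$. Because $\tau \le t$ and all $2t+1$ consecutive syndromes are known, this recurrence is valid for $m = -t+\tau, \ldots, t$, furnishing $2t-\tau+1 \ge \tau$ equations in the unknowns $\Lambda_1,\ldots,\Lambda_\tau$. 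The crux of the argument is that $\Lambda$ is the \emph{minimal-length} linear recurrence generating the sequence $(S_m)$: the relevant $\tau \times \tau$ Hankel matrix built from the syndromes factors as a Vandermonde matrix in the distinct $X_j$ times a diagonal matrix with the nonzero $e_j$ on its diagonal, hence is nonsingular. This nonsingularity is precisely what lets the Berlekamp--Massey algorithm (equivalently, the extended Euclidean algorithm applied to the syndrome polynomial and $z^{2t+1}$) recover $\Lambda$ uniquely in $\cO(t^2)$ time, even though the true sparsity $\tau$ is not known in advance; this is the main obstacle, as it is where one must rule out shorter spurious recurrences.

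From $\Lambda$ I would recover the locators as its reciprocal roots: scanning the candidate exponents $0,1,\ldots,n$ and testing $\Lambda(\alpha^{-k}) = 0$ in a Chien-type search identifies the exponents $k_j$, and since $\alpha$ is primitive this assignment is unambiguous. With the positions fixed, the magnitudes $e_j$ solve the Vandermonde system $\sum_j e_j X_j^{\,m} = S_m$ over any $\tau$ consecutive indices; distinctness of the $X_j$ makes this system nonsingular, so the $e_j$, and hence $E(x)$, are determined uniquely. Uniqueness overall follows by the same minimality: any two polynomials with at most $t$ nonzero terms that match all $2t+1$ syndromes share the same minimal recurrence $\Lambda$, hence the same locators, and the Vandermonde nonsingularity then forces equal magnitudes. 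For the running time, Berlekamp--Massey costs $\cO(t^2)$, the root search costs $\cO(nt)$, and solving the Vandermonde system costs $\cO(t^2)$; since $t \le n$, the total is $\cO(n^2)$, matching the claim.
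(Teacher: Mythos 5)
Your proposal is correct and follows essentially the same route as the paper, which proves this statement simply by citation to the classical Reed--Solomon/BCH syndrome-decoding theory in \cite[Ch.~5]{roth2006introduction}: treating the $2t+1$ evaluations $E(\alpha^m)$, $-t \le m \le t$, as consecutive syndromes of a weight-$\le t$ error pattern, recovering the locator polynomial via Berlekamp--Massey (or the extended Euclidean algorithm), and then finding positions and magnitudes by a Chien-type search and a Vandermonde solve. Your additional care about $q > n$ guaranteeing distinct locators and about uniqueness via the Hankel-matrix factorization matches the standard treatment, and the $\cO(t^2 + nt) \le \cO(n^2)$ complexity accounting agrees with the claimed bound.
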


\subsection{The Code Construction}
Our approach to constructing a symmetric $t$-error-correcting code of length $n$, denoted by $\mathcal{S}^{(t)}_{CS}(n)$, relies on the fact that $\tilde{E}(x,y)$ may be written as:
\begin{align}\label{eq:exy}
\tilde{E}(x,y) =& ( a_{i_{1},1} y^{j_{i_1,1}} + \cdots + a_{i_1,m_{i_1}} y^{j_{i_1,m_{i_1}}}) x^{i_1} + \nonumber \\
&(a_{i_2,1} y^{j_{i_2,1}} + \cdots + a_{i_2,m_{i_2}} y^{j_{i_2,m_{i_2}}}) x^{i_2} +  \nonumber \\
& \qquad \qquad \quad \quad \, \vdots \\
&(a_{i_h,1} y^{j_{i_h,1}} + \cdots + a_{i_h,m_{i_h}} y^{j_{i_h,m_{i_h}}}) x^{i_h}, \nonumber
\end{align}
where each $a_{i,j} \in \{-1,1\}$, $h \leq 4t$ and the total number of nonzero terms is $\leq 4t$. Since $\tilde{E}(x,y)$ is restricted to have at most $4t$ nonzero terms, each of the polynomials $(a_{i_\ell,1} y^{j_{i_\ell,1}} + \cdots + a_{i_\ell,m_{i_\ell}} y^{j_{i_\ell,m_{i_\ell}}})$ can contain at most $4t$ nonzero terms. Consequently, one has $m_{i_\ell} \leq 4t$ for all $\ell \in \{1,2,\ldots,h\}$. 

Based on the previous observations we are ready to introduce our first code construction. We assume that $P_{\textbf{s}}(x,y)$ is a \textcolor{black}{ bivariate polynomial over the field $\mathbb{F}_q,$ where $q$ is the smallest prime $\geq 2n+1.$} Clearly, for a $P_{\textbf{s}}(x,y) \in \mathbb{I}[x,y]$ over the set integers $\mathbb{I}$, one can obtain $P_{\textbf{s}}(x,y) \in \mathbb{F}_q[x,y]$ by simply reducing $P_{\textbf{s}}(x,y)$ modulo $q$.

\begin{lemma}\label{lem:deca1} Let 
\begin{align*}
\C =\{\textbf{s} \in \{0,1\}^n \text{ s.t. } \emph{wt}(\textbf{s}) \bmod 2t+1 &= 0, \\
\{1,\alpha,\alpha^2, \ldots, \alpha^{4t} \} &\subseteq \R(P_{\textbf{s}}(x,1)),\\
\{1,\alpha,\alpha^2, \ldots, \alpha^{4t} \} &\subseteq \R(P_{\textbf{s}}(x,\alpha)),\\
                                            & \, \vdots \\
\{1,\alpha,\alpha^2, \ldots, \alpha^{4t} \} &\subseteq \R(P_{\textbf{s}}(x,\alpha^{4t}))\}.
\end{align*}
Then, $\C$ is a symmetric $t$-error-correcting code.
\end{lemma}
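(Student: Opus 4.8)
The plan is to exhibit an explicit bounded-distance decoder and to argue that it recovers $\textbf{s}$ (up to reversal) from any $\tilde{C}(\textbf{s})$ produced by at most $t$ symmetric composition errors; since the decoder is deterministic and its success is equivalent to the error balls of distinct codewords being disjoint, this proves that $\C$ is a $t$-error-correcting code. First I would pass from the corrupted multiset to its generating polynomial $\tilde{S}_{\textbf{s}}(x,y)$. Because every codeword obeys $\text{wt}(\textbf{s}) \bmod (2t+1) = 0$, the residue $c_w = 0$ is known a priori, so Claim~\ref{cl:equiv} applies verbatim and yields the computable polynomial $Q(x,y) = P_{\textbf{s}}(x,y)\,P^{*}_{\textbf{s}}(x,y) + \tilde{E}(x,y)$, where $\tilde{E}$ has at most $4t$ nonzero terms and, crucially, is \emph{self-reciprocal}: by its construction $\tilde{E} = x^{d_x}y^{d_y}\bigl(E(x,y)+E(1/x,1/y)\bigr)$ for a polynomial $E$ with at most $2t$ terms, so its support is symmetric about $(d_x,d_y)$ and it is pinned down by only $\le 2t$ free terms.

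The second step turns the algebraic membership conditions defining $\C$ into syndromes. Those conditions say precisely that $P_{\textbf{s}}(\alpha^i,\alpha^j)=0$ for all $i,j\in\{0,1,\dots,4t\}$, so the product $P_{\textbf{s}}P^{*}_{\textbf{s}}$ vanishes on the entire grid $\{(\alpha^i,\alpha^j):0\le i,j\le 4t\}$; evaluating the known polynomial $Q$ on this grid therefore returns exactly the $(4t+1)^2$ values $\tilde{E}(\alpha^i,\alpha^j)$. I would then recover $\tilde{E}$ from these syndromes. Writing $\tilde{E}$ in the $x$-graded form of~\eqref{eq:exy}, $\tilde{E}(x,y)=\sum_{\ell} g_{\ell}(y)\,x^{i_\ell}$, the recovery splits into identifying the occupied $x$-degrees $i_\ell$ and reconstructing the sparse $y$-coefficients $g_\ell$, each carried out by the sparse/Reed--Solomon recovery of Theorem~\ref{th:RS} (Prony / Berlekamp--Massey) applied to the geometric progressions $(\alpha^{i_\ell})$ and the exponents appearing in $g_\ell$. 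The self-reciprocity is what balances the bookkeeping: the $\le 2t$ free terms need $2(2t)+1=4t+1$ consecutive evaluations along each axis, which is exactly what the grid supplies, and the reflected half is then filled in by the identity $\tilde{E}(\alpha^{-i},\alpha^{-j})=\alpha^{-2id_x-2jd_y}\,\tilde{E}(\alpha^{i},\alpha^{j})$.

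Once $\tilde{E}$ is in hand I would subtract it, $P_{\textbf{s}}P^{*}_{\textbf{s}} = Q - \tilde{E}$, and factor this bivariate polynomial over $\mathbb{F}_q$ by the method of Acharya \et to extract $P_{\textbf{s}}$, and hence $\textbf{s}$, the only residual ambiguity being the interchange $P_{\textbf{s}}\leftrightarrow P^{*}_{\textbf{s}}$ that corresponds to the reversal $\textbf{s}\leftrightarrow\textbf{s}^{r}$ tolerated throughout the paper. The correctness guarantee amounts to uniqueness of this recovery, i.e.\ to the bounded-distance statement: if two codewords $\textbf{s}\neq\textbf{s}'$ (not reversals) lay within $t$ errors of a common $\tilde{C}$, then $d_x=d_x'$ (forced by the shared residue mod $2t+1$ together with $|d_x-d_x'|\le 2t$), and $\Delta:=P_{\textbf{s}}P^{*}_{\textbf{s}}-P_{\textbf{s}'}P^{*}_{\textbf{s}'}$ is a self-reciprocal polynomial vanishing on the grid. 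I would drive this to $\Delta=0$ by the elementary Vandermonde fact that a nonzero combination of $h$ distinct geometric sequences cannot exhibit $h$ consecutive zeros, applied first across the $x$-degrees of $\Delta$ and then inside each surviving $y$-coefficient, so that $\Delta=0$ forces $\textbf{s}'\in\{\textbf{s},\textbf{s}^{r}\}$.

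The step I expect to be the main obstacle is precisely this sparse bivariate recovery/uniqueness with only $4t+1$ roots per axis: a generic $4t$-term bivariate polynomial is \emph{not} determined by a $(4t+1)\times(4t+1)$ grid of values, so the argument must genuinely consume the palindromic structure of $\tilde{E}$ (and of $\Delta$) to cut the effective number of unknowns from $4t$ down to $\le 2t$ free terms, and must check that the two consecutive Vandermonde/Prony stages never overrun the available evaluation budget and return a unique answer. Pinning down the constants in this reduction, and verifying that $\alpha$ primitive with $q\ge 2n+1$ keeps all relevant exponents distinct modulo $q-1$ so that the Vandermonde systems stay invertible, is where the real work lies.
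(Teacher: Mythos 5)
Your skeleton is the same as the paper's---use the weight residue to fix $d_x,d_y$, invoke Claim~\ref{cl:equiv} to form $P_{\textbf{s}}(x,y)P^{*}_{\textbf{s}}(x,y)+\tilde{E}(x,y)$, evaluate at powers of $\alpha$ where the code forces the product to vanish, recover the sparse $\tilde{E}$ by two nested applications of Theorem~\ref{th:RS}, and subtract---but the step you defer as ``pinning down the constants'' is the actual proof, and your substitute for it fails quantitatively. The paper's decoder needs, for \emph{every} fixed $\ell_2$, the values $\tilde{E}(\alpha^{\ell_1},\alpha^{\ell_2})$ at all $\ell_1\in\{0,\pm 1,\ldots,\pm 4t\}$, i.e.\ $8t+1$ points per slice, because each slice $\tilde{E}(x,\alpha^{\ell_2})$ and each multiplier $M_{i_\ell}(y)$ is $4t$-sparse, and Theorem~\ref{th:RS} requires $2\cdot 4t+1$ consecutive-power evaluations for $4t$-sparse recovery. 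The paper gets this doubled grid from the observation that $P_{\textbf{s}}(\beta,\beta')P^{*}_{\textbf{s}}(\beta,\beta')=0$ forces vanishing at $(\beta^{-1},\beta'^{-1})$ as well, so its syndrome set runs over exponents in $\{0,\pm 1,\ldots,\pm 4t\}$ in \emph{both} coordinates; no palindromic structure of $\tilde{E}$ is ever invoked. Your reflection identity $\tilde{E}(\alpha^{-i},\alpha^{-j})=\alpha^{-2id_x-2jd_y}\tilde{E}(\alpha^{i},\alpha^{j})$ negates both exponents simultaneously, so it converts the slice $y=\alpha^{\ell_2}$ into the \emph{different} slice $y=\alpha^{-\ell_2}$; every slice with $\ell_2\neq 0$ remains known at only $4t+1$ points. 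Moreover, your counting claim that palindromicity cuts each slice to ``$\le 2t$ free terms'' is wrong: writing $\tilde{E}(x,y)=\sum_k c_k\bigl(x^{d_x+a_k}y^{d_y+b_k}+x^{d_x-a_k}y^{d_y-b_k}\bigr)$, the slice $\tilde{E}(x,\alpha^{\ell_2})$ has exponent pairs symmetric about the known $d_x$, but its paired coefficients $c_k\alpha^{\ell_2(d_y+b_k)}$ and $c_k\alpha^{\ell_2(d_y-b_k)}$ are linked only through the \emph{unknown} $b_k$, so as a univariate object it is a genuine $4t$-term sparse polynomial. Theorem~\ref{th:RS} with sparsity $2t$ applies to nothing you actually possess: you never hold evaluations of the $2t$-sparse $E$ itself, only of $E(x,y)+E(1/x,1/y)$ up to a known monomial. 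The same deficit sinks your uniqueness argument, since $\Delta$ may have up to $8t$ terms while you can exhibit only $4t+1$ consecutive zeros per row, which is too few for the Vandermonde fact you cite.

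Two lesser points. First, the lemma as the paper proves it asks only that $\tilde{S}_{\textbf{s}}(x,y)$ determine $S_{\textbf{s}}(x,y)$: once $\tilde{E}$ (hence $E$, using the known $d_x,d_y$) is recovered, the decoder subtracts and stops. Your closing step---bivariate factorization of $P_{\textbf{s}}P^{*}_{\textbf{s}}$ over $\mathbb{F}_q$ to extract $P_{\textbf{s}}$ up to reversal---is unnecessary for the statement and imports an ambiguity that is not part of it. Second, your argument that confusable codewords share $d_x$ (common residue modulo $2t+1$ together with $|d_x-d_x'|\le 2t$) is correct. But as it stands the proposal is missing the one idea the paper's proof turns on---the inverse-closure of the root set of $P_{\textbf{s}}P^{*}_{\textbf{s}}$, which legitimately doubles the evaluation budget to $8t+1$ points per axis---and offers in its place a structured-recovery claim that it neither proves nor counts correctly; by your own admission this is ``where the real work lies,'' so the proof is incomplete at its central step.
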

\begin{proof} We prove the claim by describing a decoding algorithm that for any given $\tilde{S}_{\textbf{s}}(x,y)$, which is the result of at most $t$ composition errors occurring in $S_{\textbf{s}}(x,y)$, uniquely recovers $S_{\textbf{s}}(x,y)$.

Since there are at most $t$ erroneous compositions in $\tilde{S}_{\textbf{s}}(x,y)$, one can determine $\text{wt}(\textbf{s})$ by summing up the length-one compositions (i.e., the bits) in $\tilde{S}_{\textbf{s}}(x,y)$ along with the fact that $\text{wt}(\textbf{s}) \bmod 2t+1 = 0$. Therefore, from Claim~\ref{cl:equiv}, we can construct the polynomial
\begin{align}\label{eq:l1deceq}
F(x,y) = P_{\textbf{s}}(x,y) \, P^{*}_{\textbf{s}}(x,y)  + \tilde{E}(x,y),
\end{align} 
where $\tilde{E}(x,y)$ has at most $4t$ nonzero coefficients. 

Suppose that $\beta, \beta' \in \mathbb{F}_q$. First, observe that if $P_{\textbf{s}}(\beta,\beta') \, P^{*}_{\textbf{s}}(\beta, \beta') = 0$, then $P_{\textbf{s}}(\frac{1}{\beta},\frac{1}{\beta'}) \, P^{*}_{\textbf{s}}(\frac{1}{\beta},\frac{1}{\beta'}) = 0$ which immediately follows from the definition of $P^{*}_{\textbf{s}}(x,y)$. \textcolor{black}{Thus, if $(\beta, \beta')$ is a root of $P_{\textbf{s}}(\cdot , \cdot) $, then so is $(\beta^{-1}, \beta'^{-1})$.} Since $\{1,\alpha,\alpha^2, \ldots, \alpha^{4t} \} \subseteq \R(P_{\textbf{s}}(\alpha^{\ell_1},y))$ for all $\ell_1 \in \{ 0,1,\ldots,4t \}$, and similarly $\{1,\alpha,\alpha^2, \ldots, \alpha^{4t} \} \subseteq \R(P_{\textbf{s}}(x,\alpha^{\ell_2}))$ for all $\ell_2 \in \{ 0,1,\ldots,4t \},$ it follows that $F(\alpha^{\ell_1},\alpha^{\ell_2})=\tilde{E}(\alpha^{\ell_1},\alpha^{\ell_2})$. Hence, we have:
\begin{align*}
\tilde{E}(\alpha^{\ell_1},&\alpha^{\ell_2}) = \\
& \quad \, \, \big( a_{{i_1,1}} \alpha^{{\ell_2} \times j_{i_1,1}} + \cdots + a_{i_1,m_{i_1}} \alpha^{{\ell_2} \times j_{i_1,m_{i_1}}} \big ) \alpha^{\ell_1 \times i_1}\\
&+ \big ( a_{{i_2,1}} \alpha^{{\ell_2} \times j_{i_2,1}} + \cdots + a_{{i_2,m_{i_2}}} \alpha^{{\ell_2} \times j_{i_2,m_{i_2}}} \big) \alpha^{\ell_1 \times i_2}  \\
&  \qquad \qquad \qquad \qquad \, \quad \vdots  \\ 
&+ \big( a_{{i_h,1}} \alpha^{{\ell_2} \times j_{i_h,1}} + \cdots + a_{{i_h,m_{i_h}}} \alpha^{{\ell_2} \times j_{i_h,m_{i_h}}} \big) \alpha^{\ell_1 \times i_h},
\end{align*}
for ${\ell_1},{\ell_2} \in  \{ 0,1,\ldots,4t,-1,-2,\ldots,-4t \}$. From Theorem~\ref{th:RS}, for any fixed $\ell_2$ we know the evaluations $\tilde{E}(\alpha^{\ell_1}, \alpha^{\ell_2})$ for $\ell_1 \in \{0,1,\ldots,4t,-1,-2, \ldots,-4t\}$, so that we can recover the polynomials
\begin{align}\label{eq:l1efy}
\tilde{E}(x,\alpha^{\ell_2}) &= \big( a_{{i_1,1}} \alpha^{{\ell_2} \times j_{i_1,1}} + \cdots + a_{i_1,m_{i_1}} \alpha^{{\ell_2} \times j_{i_1,m_{i_1}}} \big ) x^{i_1}  \nonumber \\
&+ \big ( a_{{i_2,1}} \alpha^{{\ell_2} \times j_{i_2,1}} + \cdots + a_{{i_2,m_{i_2}}} \alpha^{{\ell_2} \times j_{i_2,m_{i_2}}} \big) x^{i_2}  \nonumber  \\
&\qquad \qquad \qquad \qquad \, \quad \vdots \nonumber \\
&+ \big( a_{{i_h,1}} \alpha^{{\ell_2} \times j_{i_h,1}} + \cdots + a_{j_{i_h,m_{i_h}}} \alpha^{{\ell_2} \times j_{i_h,m_{i_h}}} \big) x^{i_h},
\end{align} 
using a decoder for a cyclic Reed-Solomon code of complexity $\cO(n^2)$. 

Let 
$$M_{i_\ell}(y) =  a_{{i_\ell,1}} y^{j_{i_\ell,1}} + \cdots + a_{i_\ell,m_{i_{\ell}}} y^{j_{i_\ell,m_{i_\ell}}}$$
be the polynomial multiplier of $x^{i_\ell}$ in $\tilde{E}(x,y)$. From the previous discussion, we know that the maximum number of nonzero terms in $M_{i_\ell}(y)$ is $4t$. Using (\ref{eq:l1efy}), we can determine $M_{i_\ell}(\alpha^{\ell_2})$ for $\ell_2 \in \{0,1,2,\ldots,4t, -1, -2, \ldots, -4t\}$. Due to Theorem~\ref{th:RS}, this implies that we can recover $M_{i_\ell}(y)$ for $\ell\in \{1,2,\ldots, h \}$ once again using a decoder for a Reed-Solomon code. Since $$\tilde{E}(x,y) = M_{i_1}(y) x^{i_1} + M_{i_2}(y) x^{i_2} + \cdots + M_{i_h}(y) x^{i_h},$$ \textcolor{black}{ we can determine $E(x,y)$ by noting the following: 1) Given $\text{wt} (\textbf{s}) \mod (2t +1), $ we can recover $\text{wt} (\textbf{s})$ from the erroneous composition multiset, from which $d_x$ and $d_y = n- d_x$ can be determined. 2) Since $d_x , d_y$ are known, and $\tilde{E}(x,y) =  x^{d_x} y^{d_y}\left(E(x,y) + E\left( \frac{1}{x}, \frac{1}{y} \right) \right),$  $E(x,y)$ can be determined. } Subsequently we can reconstruct $S_{\textbf{s}}(x,y)$ given $\tilde{S}_{\textbf{s}}(x,y)$.
\end{proof}

The following corollary is an immediate consequence of Lemma~\ref{lem:deca1}.

\begin{corollary}\label{cor:sideinfo} Let 
\begin{align*}
\C=\{\textbf{s} \in \{0,1\}^n \text{ s.t. } P_{\textbf{s}}(\alpha^{\ell_1}, \alpha^{\ell_2}) &= a_{\ell_1,\ell_2},\\
\emph{wt}(\textbf{s}) &\equiv a \bmod 2t+1\},
\end{align*}
for all $\ell_1,\ell_2 \in \{0,1,\ldots, 4t\}$, $a \in \{0,1,\ldots, 2t\}$, and where $(a_{\ell_1,\ell_2})_{\ell_1=0, \ell_2=0}^{4t}$ is an arbitrary vector from $\mathbb{F}_q^{ (4t+1)^2}$. Then, $\C$ can correct $t$ symmetric composition errors. 
\end{corollary}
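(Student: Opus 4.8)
The plan is to show that $\C$ is obtained from the zero-syndrome code of Lemma~\ref{lem:deca1} by merely shifting the prescribed evaluations, and that the decoder of Lemma~\ref{lem:deca1} applies essentially verbatim once the known syndrome offset is subtracted. First I would observe that the sole role played by the constraints $\text{wt}(\textbf{s}) \bmod (2t+1) = 0$ and $P_{\textbf{s}}(\alpha^{\ell_1},\alpha^{\ell_2}) = 0$ in the proof of Lemma~\ref{lem:deca1} is to guarantee, after forming $F(x,y) = P_{\textbf{s}}(x,y)\,P^{*}_{\textbf{s}}(x,y) + \tilde{E}(x,y)$ from $\tilde{S}_{\textbf{s}}(x,y)$, that $F(\alpha^{\ell_1},\alpha^{\ell_2}) = \tilde{E}(\alpha^{\ell_1},\alpha^{\ell_2})$ on the evaluation grid $\ell_1,\ell_2 \in \{-4t,\ldots,4t\}$. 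The two-stage Reed--Solomon recovery of $\tilde{E}$, and the subsequent passage from $\tilde{E}$ to $E$ and then to $S_{\textbf{s}}$, never use the specific value $0$; they only use that the grid evaluations of $\tilde{E}$ are available and that $\tilde{E}$ carries at most $4t$ terms.

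For the coset code $\C$, the constants $a$ and $(a_{\ell_1,\ell_2})$ are fixed parameters of the codebook and are therefore known to the decoder. I would first recover $\text{wt}(\textbf{s})$ exactly: since at most $t$ composition errors occurred, the corrupted sum of length-one compositions lies within $t$ of $\text{wt}(\textbf{s})$, and the residue $\text{wt}(\textbf{s}) \equiv a \bmod (2t+1)$ pins down the unique candidate, exactly as in Claim~\ref{cl:equiv}. This yields $d_x = \text{wt}(\textbf{s})$ and $d_y = n - d_x$, and hence lets me build $F(x,y) = x^{d_x} y^{d_y}\big(n+1+\tilde{S}_{\textbf{s}}(x,y)+\tilde{S}_{\textbf{s}}(1/x,1/y)\big) = P_{\textbf{s}}(x,y)\,P^{*}_{\textbf{s}}(x,y) + \tilde{E}(x,y)$, with $\tilde{E}$ again having at most $4t$ nonzero terms.

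The new ingredient, relative to Lemma~\ref{lem:deca1}, is that $P_{\textbf{s}}\,P^{*}_{\textbf{s}}$ no longer vanishes on the grid; instead it equals a known offset. Using the reciprocal relation $P^{*}_{\textbf{s}}(\alpha^{\ell_1},\alpha^{\ell_2}) = \alpha^{\ell_1 d_x + \ell_2 d_y}\,P_{\textbf{s}}(\alpha^{-\ell_1},\alpha^{-\ell_2})$, each grid value $P_{\textbf{s}}(\alpha^{\ell_1},\alpha^{\ell_2})\,P^{*}_{\textbf{s}}(\alpha^{\ell_1},\alpha^{\ell_2})$ is expressible through the prescribed constants $a_{\ell_1,\ell_2}$ and the now-known degrees $d_x,d_y$. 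Subtracting this offset from $F(\alpha^{\ell_1},\alpha^{\ell_2})$ recovers $\tilde{E}(\alpha^{\ell_1},\alpha^{\ell_2})$ on the whole grid, after which the two applications of Theorem~\ref{th:RS} (first to reconstruct each $\tilde{E}(x,\alpha^{\ell_2})$, then to reconstruct the multiplier polynomials $M_{i_\ell}(y)$) and the final conversion back to $S_{\textbf{s}}$ are identical to those in Lemma~\ref{lem:deca1}.

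I expect the main obstacle to be precisely the bookkeeping in this last paragraph: verifying that the offset $P_{\textbf{s}}\,P^{*}_{\textbf{s}}$ is fully determined on the evaluation grid by the prescribed values. The reciprocal relation couples the value at $(\alpha^{\ell_1},\alpha^{\ell_2})$ to $P_{\textbf{s}}$ at the reflected point $(\alpha^{-\ell_1},\alpha^{-\ell_2})$, so one must track which evaluations are supplied directly by $(a_{\ell_1,\ell_2})$ and which are obtained through this reflection and the self-reciprocity of $P_{\textbf{s}}\,P^{*}_{\textbf{s}}$ --- mirroring the root-propagation argument ($P_{\textbf{s}}(\beta,\beta')P^{*}_{\textbf{s}}(\beta,\beta') = 0 \Rightarrow P_{\textbf{s}}(1/\beta,1/\beta')P^{*}_{\textbf{s}}(1/\beta,1/\beta') = 0$) already invoked in Lemma~\ref{lem:deca1}. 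Once this is confirmed, the error-correction guarantee for $\C$ is immediate, establishing the corollary.
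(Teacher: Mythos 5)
Your overall plan --- treat $a$ and $(a_{\ell_1,\ell_2})$ as known code parameters, recover $\text{wt}(\textbf{s})$ from the residue, form $F(x,y)=P_{\textbf{s}}(x,y)P^{*}_{\textbf{s}}(x,y)+\tilde{E}(x,y)$, subtract the offset $P_{\textbf{s}}P^{*}_{\textbf{s}}$ on the grid, and rerun the two Reed--Solomon stages --- is exactly what the paper has in mind (the paper gives no argument beyond calling the corollary an immediate consequence of Lemma~\ref{lem:deca1}), and your recovery of $\text{wt}(\textbf{s})$, $d_x$, $d_y$ is correct. The gap is the step you yourself deferred as ``bookkeeping,'' and it does not survive scrutiny: the offset is \emph{not} expressible through the prescribed constants and the degrees. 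At a grid point,
\begin{align*}
P_{\textbf{s}}(\alpha^{\ell_1},\alpha^{\ell_2})\,P^{*}_{\textbf{s}}(\alpha^{\ell_1},\alpha^{\ell_2})
= a_{\ell_1,\ell_2}\,\alpha^{\ell_1 d_x+\ell_2 d_y}\,P_{\textbf{s}}(\alpha^{-\ell_1},\alpha^{-\ell_2}),
\end{align*}
and the factor $P_{\textbf{s}}(\alpha^{-\ell_1},\alpha^{-\ell_2})$ is an evaluation at a negative-exponent point that the code does not prescribe: up to the known scalar $\alpha^{-\ell_1 d_x-\ell_2 d_y}$ it equals $P_{\textbf{s}^r}(\alpha^{\ell_1},\alpha^{\ell_2})$, i.e., the corresponding evaluation for the \emph{reversed} string, and this quantity genuinely varies over the codewords of $\C$ that share the same syndrome $(a,(a_{\ell_1,\ell_2}))$. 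The self-reciprocity you invoke only relates the product at $(\alpha^{-\ell_1},\alpha^{-\ell_2})$ to the product at $(\alpha^{\ell_1},\alpha^{\ell_2})$ (via $G(1/x,1/y)=x^{-2d_x}y^{-2d_y}G(x,y)$ for $G=P_{\textbf{s}}P^{*}_{\textbf{s}}$); it never eliminates the unknown factor, and it says nothing at all about the mixed-sign points $(\alpha^{\ell_1},\alpha^{-\ell_2})$, which the decoder of Lemma~\ref{lem:deca1} also consumes, since Theorem~\ref{th:RS} requires evaluations at exponents $-4t,\ldots,4t$ in \emph{each} variable in both stages. The reason the argument closes in Lemma~\ref{lem:deca1} is precisely that there $a_{\ell_1,\ell_2}=0$, so the unknown factor is multiplied by zero and is irrelevant; for arbitrary $a_{\ell_1,\ell_2}$ that rescue is unavailable.

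Concretely, then, the proposal reduces the corollary to the claim that $P_{\textbf{s}}P^{*}_{\textbf{s}}$ is determined on the grid $\ell_1,\ell_2\in\{-4t,\ldots,4t\}$ by $(a_{\ell_1,\ell_2})$ and $d_x,d_y$, and that claim is false in general, so the corollary is not established. To repair it along your lines one would need the decoder to also know the evaluations of $P^{*}_{\textbf{s}}$ (equivalently, of $P_{\textbf{s}}$ at the reflected and mixed-sign grid points), e.g., by prescribing evaluations on the full symmetric grid as part of the code definition; with that extra side information your subtraction step, and everything after it, does become the routine transcription of Lemma~\ref{lem:deca1} you describe. Alternatively, one must find an argument for the coset code that avoids evaluating the offset altogether; a distance-type argument comparing two codewords $\textbf{s},\textbf{v}\in\C$ runs into the identical obstruction, since $P_{\textbf{s}}P^{*}_{\textbf{s}}-P_{\textbf{v}}P^{*}_{\textbf{v}}$ evaluated on the grid again involves the unprescribed reflected evaluations.
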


\subsection{A Systematic Encoder $\E_{t,n}$}

We construct next a systematic encoder $\E_{t,n}$ for the previously proposed codes. 

Let $r$ be the number of redundant bits in the proposed code construction. We will show in Theorem~\ref{th:mainth} that for all $n$, one requires a redundancy that does not exceed
\begin{align*}
 4 \Big[ &(4t+1)^2 (\log (2n+1)+1) + \log (2t+1) \\
&+ t \log \left(  (4t+1)^2 (\log (2n+1)+1) + \log (2t+1) \right) \Big] \\
&+ \frac{1}{2} \log(n)+ \textcolor{black}{5}.
\end{align*}
One can show that $r$ does not exceed $156 t^2 \log 8n$.
Thus, $r= \cO(t^2 \log n)$. Furthermore, $r$ does not exceed $156 t^2 \log 8k + 156 t^2 \left( \frac{1}{\kappa} \right)$, where $\kappa$ is supremum over all $\kappa > 0$ such that $n \geq 156 (1+\kappa) (t^2 \log 8n +1)$. \textcolor{black}{To express the redundancy in terms of the information length $k$, we upper bound $c t^2 \log n,$ where \textcolor{black}{$c$ is a constant,} as follows. First, we write 
$$c t^2 \log n = c t^2 \left( \log k + \log \left(\frac{n-k+k}{k}\right) \right).$$ 
Then, we upper bound the term $\log \left(\frac{n-k+k}{k}\right)$ using the Taylor series for $\log(1+x)$ and the linear term involved to arrive at $\log \left(\frac{n-k+k}{k}\right) < \frac{n-k}{k}$. For $k_0$ large enough and for all $k \geq k_0$, $\frac{n-k}{k} ct^2$ can be upper bound by a constant independent of $n$ and $k$ under the given parameter assumptions.}

The encoder $\E_{t,n}$ takes as input the string $\textbf{u} \in \{0,1\}^{n-\hat{r}}$, where $\hat{r}>0$ is a redundancy to be precisely specified later, and it produces a string $\textbf{s}$. \textcolor{black}{The evaluations of the polynomial $P_{\textbf{s}}(x,y)$ is stored in 
$$ \left( w_1, w_2, \ldots, w_{\frac{\hat{r}}{2}} \right) \bmod 2,$$
where we recall that $w_i$ stands for the cumulative weight of compositions of length $i$ in $C(\textbf{s})$.}
 
Let $\E_t : \{0,1\}^{m} \to \{0,1\}^{m + t \log m}$ be a systematic encoder for a code with minimum Hamming distance $2t+1$ that inputs a string of length $m$ and outputs a string of length $m+t\log m$. We will use this encoder with \textcolor{black}{$m= (4t+1)^2 (1+\log (2n+1)) + \log (2t+1)$}. Clearly, such a code exists since binary BCH codes of odd minimum distance have the desired set of parameters. 
\vspace{3pt}
\hrule \vspace{0.5pt} \hrule
\vspace{3pt}
\textbf{Encoder} $\E_{t,n} : \{0,1\}^{n- \hat{r}} \to \{0,1\}^n$.
\vspace{3pt}
\hrule \vspace{0.5pt} \hrule
\vspace{3pt}
\textbf{Input} String $\textbf{u} \in \{0,1\}^{n- \hat{r}}$. 

\textbf{Output} Symmetric $t$-error-correcting codestring $\textbf{s} \in \{0,1\}^{n}$.
\vspace{3pt}
\hrule 
\vspace{3pt}
\begin{enumerate}
\item Let $\alpha \in \mathbb{F}_q$ be a primitive element and let $q$ be an odd prime $\geq 2n+1$. For $\ell_1,\ell_2 \in \{0,1,\ldots, 4t\}$, set $a_{\ell_1,\ell_2} = P_{\textbf{u}}(\alpha^{\ell_1}, \alpha^{\ell_2})$, $\ba = (a_{\ell_1,\ell_2})_{\ell_1=0, \ell_2=0}^{4t}$. \\
Let $a=\text{wt}(\textbf{u}) \bmod 2t+1$. 
\item Let $\bf{\bar{s}} = \E_t($$a$, $\ba) \in \{0,1\}^{\frac{\hat{r}}{4}}$.
\item For $j \in \{1,2, \ldots, \frac{\hat{r}}{2} \}$, define $\textbf{z}=(z_1 \ldots z_{\frac{\hat{r}}{2}})$ as 
\begin{align*}
z_{j} = \begin{cases}
 \sum_{i=1}^{j-1} z_{i} \bmod 2, &\text{ if $j$ is odd and } \bar{s}_{\frac{j+1}{2}}=0,\\
 \sum_{i=1}^{j-1} z_{i} + 1 \bmod 2, &\text{ if $j$ is odd and } \bar{s}_{\frac{j+1}{2}} = 1, \\
 0, &\text{ if $j$ is even.}
\end{cases}
\end{align*}
\item Set $\textbf{s} =  \textbf{0} \, \textbf{u} \, \textbf{z}  \in \{0,1\}^n$, where $\textbf{0}$ is an all-zero string of length $\frac{\hat{r}}{2}$.
\end{enumerate}
\vspace{3pt}
\hrule \vspace{0.5pt} \hrule
\vspace{3pt}
The $t$-error-correcting code $\mathcal{S}^{(t)}_{CS}(n)$ is generated by the following two-step procedure:
\begin{itemize}
\item An information string of length $k$ is first encoded using the reconstruction code $\S_{R}$, resulting in the string $\textbf{u} \in \S_{R}(n - \hat{r})$.
\item The string $\textbf{u}$ is passed through the encoder $\E_{t,n}$, resulting in the codestring $\textbf{s}  = \E_{t,n} (\textbf{u}) \in \mathcal{S}^{(t)}_{CS}(n)$. 
\end{itemize}
Based on the above analysis, we set $\hat{r}$ to be the smallest integer $\geq r- \left( \frac{1}{2} \log(n) + \textcolor{black}{5} \right)$ that is divisible by $4$. 

The redundancy of the code may be calculated as follows: 
\begin{enumerate}
\item Since $q \geq 2n+1$, every $\alpha_{\ell_1,\ell_2}$, $\ell_1, \ell_2 \in \{0,1, \dots 4t \}$ requires at most $1 + \log (2n + 1)$ (due to the fact that given any positive integer $x$, there exits a prime number between $x$ and $2x$).
\item Note that $a$ requires $\log 2t +1$ bits of redundancy. Thus, $\frac{\hat{r}}{4}$ is at most
\begin{align}
 & (4t+1)^2 (1+\log (2n+1)) + \log (2t+1)  \notag \\
&+ t \log ((4t+1)^2 (1+\log (2n+1)) + \log (2t+1)). \notag
\end{align}
\item As already observed, the reconstructable string $\textbf{u}$ requires at most $ \frac{1}{2}\log n + \textcolor{black}{5}$ bits of redundancy.
\end{enumerate}
The redundancy of the encoder $\E_{t,n}$ is $\O (t^2 \log n)$ bits. 

We find the following claims useful in our subsequent derivations.
\begin{claim}\label{cl:zr} At Step 3) of the encoding procedure, for odd $j \in [\frac{\hat{r}}{2}]$, one has
\begin{align}
\bar{s}_{\frac{j+1}{2}} = \sum_{i=1}^j z_i \bmod 2.
\end{align}
\end{claim}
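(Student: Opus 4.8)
The plan is to prove Claim~\ref{cl:zr} by a direct unwinding of the recursive definition of $z_j$ given in Step~3 of the encoder, isolating the contribution of the newly defined bit $z_j$ to the running parity $\sum_{i=1}^j z_i \bmod 2$. The definition of $z_j$ for odd $j$ is engineered precisely so that appending $z_j$ to the partial sum either preserves or flips its parity according to whether $\bar{s}_{\frac{j+1}{2}}$ equals $0$ or $1$, so the claim should fall out by substituting the two admissible values of $z_j$ into the identity $\sum_{i=1}^j z_i = \left(\sum_{i=1}^{j-1} z_i\right) + z_j$.

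Concretely, I would fix an odd $j \in [\frac{\hat{r}}{2}]$ and split on the value of $\bar{s}_{\frac{j+1}{2}}$. In the case $\bar{s}_{\frac{j+1}{2}} = 0$, the definition gives $z_j \equiv \sum_{i=1}^{j-1} z_i \bmod 2$, so that $\sum_{i=1}^j z_i \equiv 2\sum_{i=1}^{j-1} z_i \equiv 0 \bmod 2$. In the case $\bar{s}_{\frac{j+1}{2}} = 1$, the definition gives $z_j \equiv \sum_{i=1}^{j-1} z_i + 1 \bmod 2$, so that $\sum_{i=1}^j z_i \equiv 2\sum_{i=1}^{j-1} z_i + 1 \equiv 1 \bmod 2$. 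In both cases $\sum_{i=1}^j z_i \equiv \bar{s}_{\frac{j+1}{2}} \bmod 2$, which is exactly the asserted identity.

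I do not anticipate any genuine obstacle here: the claim is a one-step consequence of the definition, and the even-indexed values $z_j$ (all of which equal $0$) never need to be analyzed separately, since the running sum is only evaluated at odd indices $j$. The only minor point worth noting is that the index $\frac{j+1}{2}$ is an integer exactly when $j$ is odd, so the statement is meaningful precisely on the odd indices, which matches the case split used to define $z_j$. An alternative would be a short induction on odd $j$, but the direct computation above is self-contained and does not require an inductive hypothesis, so I would present it as a two-case verification.
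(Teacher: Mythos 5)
Your proof is correct and follows essentially the same route as the paper, which simply states that the claim ``obviously follows from the definition of the string $\textbf{z}$''; your two-case computation (substituting the recursive definition of $z_j$ for $\bar{s}_{\frac{j+1}{2}} = 0$ and $\bar{s}_{\frac{j+1}{2}} = 1$ into $\sum_{i=1}^{j} z_i = \sum_{i=1}^{j-1} z_i + z_j$) is exactly the verification the paper leaves implicit.
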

This claim obviously follows from the definition of the string $\textbf{z}$.

Recall next that for a string $\textbf{s} \in \{0,1\}^n$, its $\Sigma^{\lceil \frac{n}{2}\rceil}$ sequence $( \sigma_1, \sigma_2, \ldots, \sigma_{\lceil \frac{n}{2} \rceil}) \in \{0,1,2\}^{\lceil \frac{n}{2} \rceil}$ equals $\sigma_i = s_i + s_{n+1-i}$. As a result of Step 4) of encoding with $\E_{t,n}$, we have the next result.

\begin{claim}\label{cl:rs} For $j \in [\frac{\hat{r}}{2}]$,
\begin{align*}
\textcolor{black}{z_j = \sigma_{\frac{\hat{r}}{2} +1 -j}.}
\end{align*}
\end{claim}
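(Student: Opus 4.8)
The plan is to establish the identity by directly reading off the relevant bits from the explicit structure of the codestring produced by the encoder, so the whole argument reduces to index bookkeeping. Recall from Step 4) of $\E_{t,n}$ that $\textbf{s} = \textbf{0}\,\textbf{u}\,\textbf{z}$, where the all-zero block $\textbf{0}$ of length $\frac{\hat{r}}{2}$ occupies positions $1,\ldots,\frac{\hat{r}}{2}$, the information block $\textbf{u}$ occupies positions $\frac{\hat{r}}{2}+1,\ldots,n-\frac{\hat{r}}{2}$, and the block $\textbf{z}=(z_1,\ldots,z_{\frac{\hat{r}}{2}})$ occupies the final $\frac{\hat{r}}{2}$ positions. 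In particular, the coordinate $z_k$ sits at position $n-\frac{\hat{r}}{2}+k$ of $\textbf{s}$.

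First I would fix $j \in [\frac{\hat{r}}{2}]$ and expand, using the definition $\sigma_i = s_i + s_{n+1-i}$ recalled just before the claim, the quantity $\sigma_{\frac{\hat{r}}{2}+1-j} = s_{\frac{\hat{r}}{2}+1-j} + s_{n+1-(\frac{\hat{r}}{2}+1-j)}$. Since $1 \le \frac{\hat{r}}{2}+1-j \le \frac{\hat{r}}{2}$, the index $\frac{\hat{r}}{2}+1-j$ falls inside the leading zero block, so $s_{\frac{\hat{r}}{2}+1-j}=0$ and the first summand vanishes.

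Next I would locate the second summand. Simplifying its index gives $n+1-(\frac{\hat{r}}{2}+1-j) = n-\frac{\hat{r}}{2}+j$, which is precisely the position of $z_j$ in the trailing block. Hence $s_{n-\frac{\hat{r}}{2}+j}=z_j$, and combining the two observations yields $\sigma_{\frac{\hat{r}}{2}+1-j} = 0 + z_j = z_j$, which is the asserted identity.

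There is no substantive obstacle here; the argument is entirely a matter of verifying the two position computations. The only point requiring care is confirming that the index $\frac{\hat{r}}{2}+1-j$ indeed lies in the zero prefix and that its mirror position $n-\frac{\hat{r}}{2}+j$ lies in the $\textbf{z}$ suffix and aligns exactly with the $j$-th coordinate of $\textbf{z}$, both of which follow immediately from the constraint $j \in [\frac{\hat{r}}{2}]$.
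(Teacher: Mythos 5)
Your proof is correct and matches the paper's reasoning: the paper states this claim as an immediate consequence of Step 4) of the encoder (the decomposition $\textbf{s} = \textbf{0}\,\textbf{u}\,\textbf{z}$ with the all-zero prefix of length $\frac{\hat{r}}{2}$), which is exactly the index bookkeeping you carry out explicitly. Spelling out that $s_{\frac{\hat{r}}{2}+1-j}=0$ and that position $n+1-(\frac{\hat{r}}{2}+1-j)=n-\frac{\hat{r}}{2}+j$ holds $z_j$ is precisely the intended argument.
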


The next claim connects the quantities $w_i$ and $\bar{\textbf{s}}$, defined in Step 2 of the encoding procedure.
\begin{claim}\label{cl:rw} For $j \in \frac{\hat{r}}{4}$, it holds
$$ w_{2j} \equiv \bar{s}_{j} \bmod 2. $$
\end{claim}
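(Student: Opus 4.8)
The plan is to write $w_{2j}$ explicitly in terms of the sequence $\Sigma^{\lceil n/2\rceil}$, reduce that identity modulo $2$, and then translate the surviving $\sigma$-terms back into the redundancy bits $\bar s_j$ using Claims~\ref{cl:rs} and~\ref{cl:zr}. The two prior claims already carry all the structural information about the suffix block $\textbf{z}$, so the whole argument is a bookkeeping chain of three congruences.

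First I would invoke the first identity of Proposition~\ref{prop:wsback} (equivalently, the linear system~\eqref{eq:sigmas}) to obtain
\begin{equation*}
w_{2j} = \sum_{p=1}^{2j-1} p\,\sigma_p + 2j \sum_{p=2j}^{\lceil n/2\rceil}\sigma_p .
\end{equation*}
Reducing modulo $2$ annihilates the entire second sum, since it carries the even factor $2j$, and annihilates every term with even $p$ in the first sum, leaving
\begin{equation*}
w_{2j} \equiv \sum_{m=1}^{j}\sigma_{2m-1} \pmod 2 .
\end{equation*}
So the claim reduces to showing that the parity of the odd-indexed partial sum $\sum_{m=1}^{j}\sigma_{2m-1}$ equals $\bar s_j$.

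Next I would pass from the $\sigma$-side to the $z$-side. Claim~\ref{cl:rs} identifies each $\sigma_i$ with $i \le \tfrac{\hat r}{2}$ as a coordinate of $\textbf{z}$, and by the construction in Step~3) the even-indexed coordinates of $\textbf{z}$ all vanish. The goal is therefore to recognize $\sum_{m=1}^{j}\sigma_{2m-1}$ as the prefix sum $\sum_{i=1}^{2j-1} z_i \bmod 2$, in which only the informative (odd-indexed) coordinates of $\textbf{z}$ survive; once this is done, applying Claim~\ref{cl:zr} at the odd index $2j-1$ — where $\tfrac{(2j-1)+1}{2}=j$ — collapses the prefix sum to $\bar s_j \bmod 2$ and finishes the proof.

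The main obstacle is precisely the index-parity bookkeeping in this middle step: after the mod-$2$ reduction one must check that the indices of the surviving $\sigma$-terms are exactly the ones that Claim~\ref{cl:rs} sends to the nonzero (odd-indexed) coordinates of $\textbf{z}$, so that the $\sigma$-partial sum and the $z$-prefix sum of Claim~\ref{cl:zr} coincide term by term. This alignment rests on two facts that should be recorded explicitly: that $\tfrac{\hat r}{2}$ is even because $4 \mid \hat r$ (which fixes the parity correspondence between the $z_i=0$ entries and the vanishing $\sigma$'s), and that the restriction $j \le \tfrac{\hat r}{4}$ forces $2j-1 < \tfrac{\hat r}{2}$, so every index appearing lies inside the range where Claims~\ref{cl:rs} and~\ref{cl:zr} apply. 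With these parities verified, the three displayed congruences chain together immediately.
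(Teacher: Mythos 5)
Your three-congruence chain is structurally identical to the paper's own proof: both start from Proposition~\ref{prop:wsback}/\eqref{eq:sigmas}, reduce modulo $2$ to obtain $w_{2j}\equiv\sigma_1+\sigma_3+\cdots+\sigma_{2j-1}$, identify this partial sum with $\sum_{i=1}^{2j-1}z_i$, and finish with Claim~\ref{cl:zr} at the odd index $2j-1$. So there is no difference in route.

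However, the justification you give for the middle identification --- the step you correctly single out as the crux --- does not hold; in fact it goes the opposite way from what you assert. Claim~\ref{cl:rs} states $z_j=\sigma_{\hat{r}/2+1-j}$, an index-\emph{reversing} correspondence. Because $4\mid\hat{r}$, the quantity $\hat{r}/2$ is even, so the map $i\mapsto \hat{r}/2+1-i$ \emph{flips} parity rather than preserving it: the surviving terms satisfy $\sigma_{2m-1}=z_{\hat{r}/2+2-2m}$, i.e., they are \emph{even}-indexed --- hence zero --- coordinates of $\textbf{z}$, not the informative odd-indexed ones, and taken literally Claim~\ref{cl:rs} would force $\sum_{m=1}^{j}\sigma_{2m-1}=0$. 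Moreover, even setting parity aside, the reversal sends the index set $\{1,3,\dots,2j-1\}$ to the suffix block $\{\hat{r}/2-2j+2,\dots,\hat{r}/2\}$ of $\textbf{z}$, not to the prefix $\{1,\dots,2j-1\}$, so Claim~\ref{cl:zr} (a statement about prefix sums of $\textbf{z}$) cannot be invoked ``term by term'' as you propose. What the argument actually requires is the unreversed identification $\sigma_i=z_i$ for $i\le\hat{r}/2$, which holds only if $\textbf{z}$ is appended to $\textbf{s}$ in reversed order in Step 4). To be fair, the paper's proof commits the same silent substitution --- it writes $\sum_{i=1}^{2j-1}\sigma_i\equiv\sum_{i=1}^{2j-1}z_i$ citing Claim~\ref{cl:rs} without addressing the reversal --- so your proposal is faithful to the published argument; but the explicit parity verification you describe, if actually carried out, fails rather than closes the gap, and what it exposes is an inconsistency between Claim~\ref{cl:rs} and Claim~\ref{cl:rw} that must be repaired by fixing the orientation of $\textbf{z}$ (equivalently, restating Claim~\ref{cl:rs} as $z_j=\sigma_j$), after which your chain and the paper's both go through.
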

\begin{proof} The result is a consequence of the observation that 
\begin{align*}
w_{2j} \equiv  & 2j w_1 - (2j-1) \sigma_1 - (2j-2) \sigma_2- \cdots - \sigma_{2j-1} \bmod 2 \\
\equiv  &\sigma_1 + \sigma_3 + \cdots + \sigma_{2j-1} \bmod 2,
\end{align*}
where the first line follows from Equation (\ref{eq:sigmas}). From Claims~\ref{cl:zr} and \ref{cl:rs}, and the previous observation, and the fact that we set $z_j = 0$ for even values of $j$ in Step 3) of the encoding procedure, we have
\begin{align*}
w_{2j} \equiv \sum_{i=1}^{2j-1} \sigma_j \equiv \sum_{i=1}^{2j-1} z_j \equiv \bar{s}_{j} \bmod 2.
\end{align*}\end{proof}

The next result will be used to prove the main finding regarding symmetric error-correction codes, as stated in Theorem~\ref{th:mainth}.

\begin{lemma}\label{lem:mclemma} The collection of strings  
\begin{align*}
\C = \Big \{ \textbf{s} : \textbf{s} = \E_{t,n}(\textbf{u}), \textbf{u} \in \{0,1\}^{n - \hat{r}} \Big \}
\end{align*}
constitutes a symmetric $t$-error-correcting code.
\end{lemma}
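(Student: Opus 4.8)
The plan is to show that the defining side information of the coset code in Corollary~\ref{cor:sideinfo}---the evaluations $P_{\textbf{s}}(\alpha^{\ell_1},\alpha^{\ell_2})$ for $\ell_1,\ell_2\in\{0,\ldots,4t\}$ together with $\text{wt}(\textbf{s})\bmod(2t+1)$---can be recovered from the corrupted multiset $\tilde{C}(\textbf{s})$ even in the presence of $t$ symmetric composition errors, after which the decoder of Lemma~\ref{lem:deca1}/Corollary~\ref{cor:sideinfo} finishes the job. The observation that drives everything is that the encoder records this side information \emph{robustly} inside $\textbf{s}$: by Claim~\ref{cl:rw} the codeword $\bar{\textbf{s}}=\E_t(a,\ba)$ is stored in the parities of the even-length cumulative weights, $w_{2j}\equiv\bar{s}_j\bmod 2$, and $\E_t$ is a systematic encoder for a code of minimum Hamming distance $2t+1$.

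First I would recover $\bar{\textbf{s}}$. Reading $\tilde{w}_{2j}\bmod 2$ off $\tilde{C}(\textbf{s})$ for $j\in[\frac{\hat r}{4}]$ produces a word that agrees with $\bar{\textbf{s}}$ except at those $j$ for which an error landed in the length-$2j$ class $C_{2j}$. Since a single composition error alters the cumulative weight of exactly one length class, at most $t$ of the parity bits are flipped; as $\bar{\textbf{s}}$ is a codeword of a distance-$(2t+1)$ code, decoding corrects these $\le t$ bit-errors and returns $\bar{\textbf{s}}$, hence (by systematicity) the pair $(a,\ba)$ with $a=\text{wt}(\textbf{u})\bmod(2t+1)$ and $\ba=\big(P_{\textbf{u}}(\alpha^{\ell_1},\alpha^{\ell_2})\big)_{\ell_1,\ell_2=0}^{4t}$.

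Next I would convert this information about $\textbf{u}$ into the side information about $\textbf{s}$. Since $\textbf{z}$ is a deterministic function of $\bar{\textbf{s}}$ (Claim~\ref{cl:zr}) and the prefix of $\textbf{s}$ is the fixed all-zero block, knowing $\bar{\textbf{s}}$ determines $\textbf{z}$ and hence $\text{wt}(\textbf{z})$. Writing $\textbf{s}=\textbf{0}\,\textbf{u}\,\textbf{z}$ gives $\text{wt}(\textbf{s})=\text{wt}(\textbf{u})+\text{wt}(\textbf{z})$, so $\text{wt}(\textbf{s})\bmod(2t+1)=(a+\text{wt}(\textbf{z}))\bmod(2t+1)$ is known, and feeding this into the argument of Claim~\ref{cl:equiv} pins down the \emph{exact} value $\text{wt}(\textbf{s})=d_x$ (hence $d_y=n-d_x$ and $\text{wt}(\textbf{u})$). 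With the exact weights of all three blocks in hand, the multiplicativity of the prefix-composition polynomial under concatenation, $P_{\textbf{s}}=P_{\textbf{0}}+\mu_{\textbf{0}}(P_{\textbf{u}}-1)+\mu_{\textbf{0}}\mu_{\textbf{u}}(P_{\textbf{z}}-1)$ with $\mu_{\textbf{0}},\mu_{\textbf{u}}$ the whole-block leading monomials, lets me evaluate $P_{\textbf{s}}(\alpha^{\ell_1},\alpha^{\ell_2})$ on the grid $\ell_1,\ell_2\in\{0,\ldots,4t\}$ purely from the known $P_{\textbf{0}},P_{\textbf{z}}$, the recovered $\ba$, and the scalars $\mu_{\textbf{0}},\mu_{\textbf{u}}$ (determined by the now-known weights). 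This exhibits $\textbf{s}$ as a member of a coset code of exactly the form of Corollary~\ref{cor:sideinfo} whose side information we have reconstructed.

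Finally, with $P_{\textbf{s}}(\alpha^{\ell_1},\alpha^{\ell_2})$ and $\text{wt}(\textbf{s})\bmod(2t+1)$ known, I would invoke Corollary~\ref{cor:sideinfo} (equivalently, run the decoder of Lemma~\ref{lem:deca1}): form $F(x,y)$ from $\tilde{S}_{\textbf{s}}$ as in Claim~\ref{cl:equiv}, subtract the now-computable product $P_{\textbf{s}}P^{*}_{\textbf{s}}$ (its reciprocal-point values being fixed by the reciprocity relation between $P^{*}_{\textbf{s}}$ and $P_{\textbf{s}}$) to isolate $\tilde{E}$ on the required grid, recover $\tilde{E}(x,y)$ and hence $E(x,y)$ by the two nested Reed-Solomon decoding steps, and reconstruct $S_{\textbf{s}}(x,y)$ from $\tilde{S}_{\textbf{s}}(x,y)$. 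I expect the main obstacle to be the bookkeeping of the third paragraph rather than any new idea: one must verify carefully that recovering $\bar{\textbf{s}}$ truly costs at most $t$ bit-errors (that each symmetric composition error perturbs at most one even-length parity), and that the concatenation formula yields the $P_{\textbf{s}}$-evaluations entirely from already-determined quantities, so that no circular dependence on the still-unknown middle block $\textbf{u}$ survives. Once the side information is in place, the correction of the $t$ symmetric composition errors is immediate from Corollary~\ref{cor:sideinfo}.
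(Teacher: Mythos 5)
Your proposal is correct and follows essentially the same route as the paper's proof: recover $\bar{\textbf{s}}$ from the parities $\tilde{w}_{2j} \bmod 2$ via the distance-$(2t+1)$ code (Claim~\ref{cl:rw}), deduce $\textbf{z}$, pin down the exact weights using the $\bmod\,(2t+1)$ information against the noisy length-one compositions, evaluate $P_{\textbf{s}}(\alpha^{\ell_1},\alpha^{\ell_2})$ through the concatenation identity $P_{\textbf{s}} = P_{\textbf{0}} + y^{\hat{r}/2}(P_{\textbf{u}}-1) + x^{d_{x,\textbf{u}}}y^{\hat{r}/2+d_{y,\textbf{u}}}(P_{\textbf{z}}-1)$, and finish with Corollary~\ref{cor:sideinfo}. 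The two bookkeeping concerns you flag are exactly the points the paper handles: each composition error perturbs the cumulative weight of a single length class, so at most $t$ parity bits flip, and the concatenation formula involves only $\textbf{z}$, the recovered evaluations $\ba$, and the now-known degrees, with no circular dependence on $\textbf{u}$.
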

\begin{proof} In order to prove the result, we will describe how to recover $S_{\textbf{s}}(x,y)$ given $\tilde{S}_{\textbf{s}}(x,y),$ where $\tilde{S}_{\textbf{s}}(x,y)$ is the result of at most $t$ composition errors in $S_{\textbf{s}}(x,y)$ for a codestring generated according to $\E_{t,n}(\textbf{u}) = \textbf{s}$. 

We begin by forming the string 
\begin{align*}
\tilde{\bw} = \Big ( \tilde{w}_{2}, \tilde{w}_{4}, \ldots, \tilde{w}_{\frac{\hat{r}}{2}}  \Big).
\end{align*}
One can obtain $\tilde{\bw}$ from $\tilde{S}_{\textbf{s}}(x,y)$ by summing up the 1s in all compositions of length two to get $\tilde{w}_2$, summing up the 1s in all compositions of length four to get $\tilde{w}_4$, and so on. For simplicity, let $\bw = \Big ( w_{2}, w_{4}, \ldots, w_{\frac{\hat{r}}{2}}  \Big)$ for the string $\textbf{s}$. 

Since there are at most $t$ composition errors in $\tilde{S}_{\textbf{s}}(x,y)$, it follows that 
\begin{align*}
d_H \Big( {\bw} \bmod 2, \tilde{{\bw}} \bmod 2 \Big) \leq t.
\end{align*}
From Claim~\ref{cl:rw}, since $\bw \bmod 2$ belongs to a code with minimum Hamming distance $2t+1$, we can recover $\bw \bmod 2$ from $\tilde{\bw} \bmod 2$. Then, given $\bw \bmod 2,$ we can recover $\bar{\textbf{s}}$ from Step 2) of the encoding procedure, and from $\bar{\textbf{s}}$ \textcolor{black}{we can determine $a = \text{wt}(\textbf{u}) \mod (2t +1).$} Using $\bar{\textbf{s}}$, it is also straightforward to determine $\textbf{z}$ from Step 3) of the encoding procedure. \textcolor{black}{Thus, $\text{wt}(\textbf{z})$ is determined accurately as well. One can then easily determine the exact (yet potentially erroneous) weight of $\textbf{u},$ since $\text{wt}(\textbf{u}) = \text{wt}(\textbf{s}) - \text{wt}(\textbf{z}).$ Given $\tilde{\text{wt}} (\textbf{s})$, as determined from the sum of all compositions of substrings of length one, since we know 1) $| \left[ \tilde{\text{wt}} (\textbf{s}) - \text{wt} (\textbf{z}) \right] - \text{wt} (\textbf{u})| \leq t$, and 2) $a = \text{wt}(\textbf{u}) \mod (2t +1)$ we can infer $\text{wt} (\textbf{u})$ exactly.}
Subsequently, we can recover
\begin{align*}
\text{wt}(\textbf{s}) = \textcolor{black}{ \text{wt}(\textbf{u}) + \text{wt}(\textbf{z})},
\end{align*}
and from $\text{wt}(\textbf{s})$, we can determine $d_x$ and $d_y$, the $x$ and $y$ degrees of the polynomial $P_{\textbf{s}}(x,y)$. 

Next, we turn our attention to recovering the evaluations of the polynomial $P_{\textbf{s}}(\alpha^{\ell_1},\alpha^{\ell_2})$ for $\ell_1,\ell_2 \in \{0,1,\ldots, 4t\}$. These, along with $\text{wt}(\textbf{s})$, suffice according to Lemma~\ref{lem:deca1} to recover $\textbf{s}$.
From $
\bar{\textbf{s}} $, we can determine $P_{\textbf{u}}(\alpha^{\ell_1}, \alpha^{\ell_2})$ according to Steps 1) and 2) of the encoding procedure. 

Let $d_{x,\textbf{u}} = \deg_x(P_{\textbf{u}}(x,y))$ and $d_{y,\textbf{u}} = \deg_y (P_{\textbf{u}}(x,y))$. 

First, note that
\begin{align*}
P_{\textbf{s}}(x,y) &=P_{\textbf{0}}(x,y)+ y^{\frac{\hat{r}}{2}} (P_{\textbf{u}}(x,y)-1) \\
&+ x^{d_{x,\textbf{u}}} y^{\frac{\hat{r}}{2}+d_{y,\textbf{u}}} \, (P_{\textbf{z}}(x,y)-1).
\end{align*}
Therefore, since $\textbf{z}$ is already known, we have
\begin{align*}
P_{\textbf{s}}(\alpha^{\ell_1},\alpha^{\ell_2}) &= P_{\textbf{0}}(\alpha^{\ell_1},\alpha^{\ell_2}) + \alpha^{\ell_2 \times \frac{\hat{r}}{2}} (P_{\textbf{u}}(\alpha^{\ell_1},\alpha^{\ell_2})-1)\\
&+ \alpha^{\ell_1 \times d_{x,\textbf{u}}} \alpha^{\ell_2 \times (\frac{\hat{r}}{2}+d_{y,\textbf{u}} )} \, (P_{\textbf{z}}(\alpha^{\ell_1},\alpha^{\ell_2})-1),
\end{align*}
The proof of the claim now follows from Corollary~\ref{cor:sideinfo}.
\end{proof}

We are left with the task of reconstructing the string $\textbf{s}$ from its correct composition multiset $C(\textbf{s})$. Recall that if all pairs of prefixes and suffixes of the same length are such that their weights differ, the string can be reconstructed efficiently by the Backtracking algorithm. Also, recall that the string $\textbf{s}$ is obtained by concatenating three strings, \textit{i.e.}, $\textbf{s} = \textbf{0} \, \textbf{u} \, \textbf{z}$. The prefix of length $\frac{\hat{r}}{2}$ is fixed to be all zeros and can therefore be reconstructed immediately. Lemma~\ref{lem:mclemma} allows one to recover the suffix $\textbf{z}$. Since $\textbf{u} \in \S_{R}(n - \hat{r})$, every prefix of length \textcolor{black}{$\leq \lfloor \frac{n}{2} \rfloor$} has strictly more $0$s than its corresponding suffix of the same length. Thus, the Backtracking algorithm can efficiently reconstruct the correct string $\textbf{s}$. This establishes the result of Theorem~\ref{th:mainth}.

We conclude our exposition by describing another family of uniquely reconstructable codes that can correct up to $t$ composition errors in $C(\textbf{s})$. These codes rely on the use of \emph{Catalan paths}. Recall that Catalan paths of length $2h$ may be represented by binary strings that have the property that every prefix has at least as many $0$s as $1$s and the weight of the strings is $h$. 

Let $\P(2h) \subset \{0,1\}^{2h}$ denote the set of Catalan strings of even length $2h$. It is well-known that the codebook $\P(2h)$ has approximately $\frac{3}{2}\, \log h$ bits of redundancy, which follows directly from the expression for the Catalan number $C_h=\frac{1}{h+1}\binom{2h}{h}$.

The main differences between the polynomial construction and the Catalan-based designs are that the former has a \emph{larger order of redundancy} ($\cO(t^2\,\log\,n)$ compared to $\cO(\log n+t)$) but also has \emph{an efficient decoding algorithm}. At this point, no algorithm scaling efficiently with both $n$ and $t$ is known for the Catalan-based construction.   
 
The basic idea behind the construction is simple and it imposes two constraints on the underlying codestrings:
\begin{enumerate}
	\item \textbf{The Catalan string constraint}: This constraint requires that the codestrings be Catalan.
	\item \textbf{Parity symbols}: The codestrings need to include $4t+1$ 0s in the prefix and $4t+1$ 1s in the suffix.
\end{enumerate}
Intuitively, the fixed prefixes of $0$s and suffixes of $1$s, as well as the balancing property of Catalan strings ensure that for at least $4t+1$ choices of $\ell$, the compositions multisets $C_{\ell}(\textbf{s})$ and $C_{\ell}(\textbf{v})$ of two distinct codestrings $\textbf{s}$ and $\textbf{v}$ differ in at least one composition.

Throughout our subsequent exposition, due to the heavy use of subscripts and superscripts, we write $-i$ instead of $n-i+1$ for all indices used. 

Let
\begin{align}
\C(n,t) = & \Big\{ \textbf{s} \in \{0,1\}^n \, : \,   s_1\, \ldots \, s_{4t+1}= 0 \, 0 \, \ldots 0, \label{constr4} \\
& s_{-4t-1}\, \textcolor{black}{s_{-4t}}\, \ldots \,s_{-1} = 1\,1\ldots \,1, \nonumber \\
& s_{4t+2}\, s_{4t+3}\, \ldots\, s_{-4t-2} \in \P(n-2(4t+1)) \Big \} \nonumber,
\end{align}
\textcolor{black}{where $n$ is even.}

We show next that $\C(n,t)$ is a $t$ symmetric composition error-correcting code with $\cO(\log n + t)$ bits of redundancy. This redundancy is significantly improved compared to that of the previously described polynomial evaluation construction. 

Henceforth, $S_1 \bigtriangleup S_2 = (S_1 \setminus S_2) \cup (S_2 \setminus S_1)$ is used to denote the symmetric difference of two sets $S_1$ and $S_2$.

\begin{theorem}\label{th:main} The code $\C(n,t)$ can correct $t$ composition errors.
\end{theorem}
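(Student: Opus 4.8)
The plan is to recast $t$-composition-error correction as a purely combinatorial distance condition on composition multisets, and then to lower bound that distance for $\C(n,t)$ using the $0^{4t+1}/1^{4t+1}$ caps together with the balancing of the Catalan middle. A single composition error replaces one composition of $C(\textbf{s})$ by another composition \emph{of the same length}, so at most $t$ errors alter $\tilde{C}_\ell$ from $C_\ell(\textbf{s})$ for at most $t$ distinct lengths $\ell$. If some $\tilde{C}$ were reachable from both $\textbf{s}$ and $\textbf{v}$ by at most $t$ errors each, then for every $\ell$ untouched on either side we would have $C_\ell(\textbf{s})=\tilde{C}_\ell=C_\ell(\textbf{v})$, so $C_\ell(\textbf{s})=C_\ell(\textbf{v})$ off a set of at most $2t$ lengths. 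Hence it suffices to prove the key claim: for distinct $\textbf{s},\textbf{v}\in\C(n,t)$ one has $C_\ell(\textbf{s})\neq C_\ell(\textbf{v})$ for at least $4t+1$ values of $\ell$ (more than the $2t+1$ strictly needed, matching the stated intuition).

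To attack the key claim I would first record the invariants forced by the construction. Writing $a_m=\text{wt}(\textbf{s}_1^m)$ and $a'_m=\text{wt}(\textbf{v}_1^m)$ with $a_0=a'_0=0$, every codeword has weight exactly $\frac{n}{2}$, and the caps force the prefix-weight difference $\delta_m:=a_m-a'_m$ to vanish for $m\le 4t+1$ and for $m\ge n-4t-1$ (in particular $\delta_0=\delta_n=0$). Using the telescoping identity $w_\ell=\sum_{m=n-\ell+1}^{n}a_m-\sum_{m=1}^{\ell-1}a_m$, valid for $\ell\le\frac{n}{2}$, together with the symmetry $w_\ell=w_{n+1-\ell}$, any length $\ell\le\frac{n}{2}$ with $w_\ell(\textbf{s})\neq w_\ell(\textbf{v})$ immediately yields two differing lengths, $\ell$ and $n+1-\ell$ (unequal cumulative weights force unequal multisets, since $w_\ell$ is a function of $C_\ell$, and $n$ even rules out $\ell=n+1-\ell$).

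The core step is to produce enough genuinely differing lengths. I would let $i\in[4t+2,\,n-4t-2]$ be the first index at which $\textbf{s}$ and $\textbf{v}$ disagree, so $\delta_{i-1}=0$ and $\delta_i=\pm1$, and then track, as $\ell$ sweeps through a band of $4t+1$ consecutive values determined by $i$, how the length-$\ell$ compositions split into boundary windows meeting the caps (whose cap-portions are fixed and identical for $\textbf{s}$ and $\textbf{v}$) and interior windows. The balancing of the Catalan middle constrains the interior window weights tightly enough that the discrepancy seeded at position $i$ must surface in $C_\ell$—rather than cancel against another window—throughout this band, giving the required $4t+1$ lengths. Finally, the degenerate case $C(\textbf{s})=C(\textbf{v})$ with $\textbf{s}\neq\textbf{v}$ does not arise, because the code contains no reversal pairs: the cap $0^{4t+1}$ of $\textbf{s}$ becomes the leading $1^{4t+1}$ of $\textbf{s}^r$, so $\textbf{s}^r\notin\C(n,t)$.

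The main obstacle is exactly this non-cancellation argument. Both the shortest and the longest lengths are \emph{weight}-determined entirely by the caps, so $w_\ell$ agrees across the whole code there; the cumulative-weight test is therefore too blunt at the extremes and one is forced to compare the multisets $C_\ell$ directly. The delicate point is to show that the single disagreement at position $i$ propagates to at least $4t+1$ lengths without being masked at isolated lengths by coincidental weight agreements—this is where the Dyck/balancing property of the Catalan middle and the width $4t+1$ of the padding must be used in tandem, presumably through an explicit case analysis on $\sigma_{i}$ and on the relative positions of the first and last disagreement indices, analogous to the case analysis carried out in the proof of Lemma~\ref{lem:awesome}.
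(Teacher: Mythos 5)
Your reduction of $t$-error correction to a distance condition is correct (a single composition error perturbs exactly one $C_\ell$, so $2t+1$ differing lengths suffice; the paper works with the equivalent condition $|C(\textbf{s}) \bigtriangleup C(\textbf{v})| \geq 4t+1$), and your structural observations---fixed weight $\frac{n}{2}$, vanishing prefix-weight differences over both caps, exclusion of reversal pairs---are sound. But your argument stops exactly at the step that constitutes the theorem: saying that the Catalan balancing ``constrains the interior window weights tightly enough that the discrepancy must surface in $C_\ell$ rather than cancel'' is a restatement of what has to be proved, and you explicitly defer it to an unspecified case analysis. The paper's proof consists precisely of the two ideas you are missing. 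First, at each length $n-i-l+1$ with $1 \leq l \leq 4t+1$, the windows of that length split into three classes: (i) windows containing the entire segment between the disagreement closest to the left end and the disagreement closest to the right end; these have equal weight in $\textbf{s}$ and $\textbf{v}$ (identical frame around middles of equal weight) and hence cancel pairwise; (ii) windows starting in the left cap and ending inside the Catalan middle, which have strictly more $0$s than $1$s (at least one cap $0$ plus the Dyck prefix property); (iii) windows starting inside the middle and reaching into the right cap, which have strictly more $1$s than $0$s (Dyck suffix property plus at least one cap $1$). Compositions from classes (ii) and (iii) can never coincide, so any would-be cancellation must happen inside a single class. Second, inside class (ii) the composition of the window starting at $j \leq l$ is determined by the prefix weight at its right endpoint (its cap part contributes nothing), so the class-(ii) multisets for $\textbf{s}$ and $\textbf{v}$ are the multisets of prefix weights over the $l$ consecutive positions $n-i-l+1,\ldots,n-i$; these are nondecreasing sequences that disagree at position $n-i$, so the multisets cannot be equal, and therefore $C_{n-i-l+1}(\textbf{s}) \neq C_{n-i-l+1}(\textbf{v})$ for every $l \leq 4t+1$. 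Nothing in your sketch performs this window decomposition or the majority-type comparison, and without it the non-cancellation claim has no support.

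There is also a flaw in your setup that would surface even if you attempted the case analysis: you anchor $i$ as the first disagreement scanned from the left only. The pairwise cancellation in class (i) needs windows that contain \emph{every} disagreeing position, so the correct anchor is the first disagreement measured from whichever end is closer; the paper takes $i$ minimal with $s_i \neq v_i$ or $s_{-i} \neq v_{-i}$ and reduces to the right-end case by reversal symmetry. With a left-only anchor, when the disagreement nearest an end lies near the right end, your ``band of $4t+1$ consecutive lengths determined by $i$'' is anchored in the wrong place and the bulk windows no longer cancel. Finally, as you yourself note, the cumulative-weight identity cannot rescue the argument, since $w_\ell$ agrees across the entire code for small and large $\ell$; the direct multiset comparison sketched above is therefore not a refinement of your plan---it is the whole proof, and it is absent.
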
 
\begin{proof} We prove the result by showing that any pair of distinct codestrings $\textbf{s}, \textbf{v} \in \C$ satisfies 
	$$ |C(\textbf{v}) \bigtriangleup C(\textbf{s})| \geq 4t+1,$$
	which implies the desired result. 
	
	Suppose that $i$ is the smallest integer such that either $s_i \neq v_i$ or $s_{-i} \neq v_{-i}$. Since the first and last $4t+1$ bits of each codestring are identical \textcolor{black}{and since every Catalan string begins with a $0$ and ends with a $1$}, we have $i \geq 4t+\textcolor{black}{3}.$
	
	Next, assume that \textcolor{black}{$s_{-i} \neq v_{-i} , s_{i} = v_{i}$. The cases $s_i \neq v_i, s_{-i} = v_{-i}$ and $s_{i} \neq v_{i}, s_{-i} \neq v_{-i}$} can be proven similarly by considering the reversals of the strings $\textbf{s}$ and $\textbf{v}$. 
	
	Consider the compositions of the following two substrings:
	\begin{align*}
	\textbf{s}_1^{-i-1} &= s_1 \, s_2 \,  \ldots s_{-i-1}, \\
	\textbf{v}_1^{-i-1} &= v_1 \, v_2 \, \ldots v_{-i-1}.
	\end{align*}
	We claim that $\text{wt}(\textbf{s}_1^{-i-1}) \neq \text{wt}(\textbf{v}_1^{-i-1})$, which implies $\text{c}(\textbf{s}_1^{-i-1}) \neq \text{c}(\textbf{v}_1^{-i-1})$. This follows due to the Catalan constraint, which ensures that $\text{wt}(\textbf{s}) = \text{wt}(\textbf{v})$, the assumptions that $s_{-i} \neq v_{-i}$, $\textbf{s}_{-i+1}^{-1} = \textbf{v}_{-i+1}^{-1}$, $\textbf{s}_1^{i-1} = \textbf{v}_1^{i-1}$, and from the choice of $i$. 
	
As a result, we have 
	\begin{align*}
	\text{wt}(\textbf{s}_i^{-i}) = \text{wt}(\textbf{v}_i^{-i}).
	\end{align*}
	Next, we establish that $c(\textbf{s}_1^{-i-1}) \in C(\textbf{v}) \bigtriangleup C(\textbf{s})$. For any $1 < j \leq i+1$, we have the following equality that holds for substrings of $\textbf{s}$ of length $n-i$ :
	\begin{align*}
	\text{wt}(\textbf{s}_j^{-(i-j+2)}) = \text{wt}(\textbf{s}_j^{i-1}) + \text{wt}(\textbf{s}_i^{-i}) + \text{wt}(\textbf{s}_{-i+1}^{-(i-j+2)}).
	\end{align*}
	\textcolor{black}{To prove this result, we consider the strings of length $n-i$ that are in the symmetric difference $C(\textbf{v}) \bigtriangleup C(\textbf{s})$. In particular, we consider the following three cases:}
	\begin{enumerate}
		\item $j \leq i-1$,
		\item $j = i$,
		\item $j=i+1$.
	\end{enumerate}
	Clearly, for the first case it holds that
	\begin{align*}
	\text{wt}(\textbf{s}_j^{-(i-j+2)})=
	\text{wt}(\textbf{v}_j^{-(i-j+2)}).
	\end{align*}
	For the second case, due to the constraints that $s_{4t+2}\, s_{4t+3}\, \ldots\, s_{-4t-2} \in \P(n-2(4t+1))$, $s_1\, \ldots\, s_{4t+1}= 00\,\ldots\,0$ and $s_{-4t-1}\, s_{-4t}\, \ldots\, s_{-1} = 11\,\ldots\,1$, it follows that $\textbf{s}_1^{-i-1}$ contains more $0$s than $1$s, but $\textbf{v}_i^{-2}$ contains more $1$s than $0$s. A similar argument may be used for the third case, \textcolor{black}{and it can be shown in this case that $\bv_{i+1}^{-1}$ also contains more $1$s than $0$s, which implies that $c(\textbf{s}_1^{-i-1}) \in C(\textbf{v}) \bigtriangleup C(\textbf{s}),$ as desired. In other words, we consider substrings of length $n-i$ (because $\textbf{s}_1^{-i-1}$ has length $n-i$), of the form $\textbf{s}_j^{-(i-j+2)}$. For the case where $1<j \leq i-1$, the substrings of length $n-i$ in $\textbf{v}$ and $\textbf{s}$ have the same compositions, since $\text{wt}(\textbf{s}_j^{-(i-j+2)}) = \text{wt}(\textbf{v}_j^{-(i-j+2)})$. Thus, these substrings do not affect the compositions in $C(\textbf{v}) \Delta C(\textbf{s})$. This covers the first case, Case 1). Hence it remains to show that $c(\textbf{s}_1^{-i-1}) \neq c(\textbf{v}_j^{-i-j+2})$ for Cases 2) and 3) (when $i=j$ and $i=j+1$). For the case $i=j$, we have $c(\textbf{s}_1^{-(i-1)}) \neq c(\textbf{v}_i^{-2})$, since $\textbf{s}_1^{-(i-1)}$ has more $0$s than $1$s, whereas $\textbf{v}_i^{-2}$ has more $1$s than $0$s. For the case $j=i+1$, $c(\textbf{v}_{i+1}^{-1})$} also has more $1$s than $0$s. This completes the claim that for \textcolor{black}{$l=1,$} $c(\textbf{s}_l^{-i-1})  = c(\textbf{s}_1^{-i-1}) \in C(\textbf{v}) \Delta C(\textbf{s})$. The case $l \geq 2$ can be analyzed similarly.
	
Based on the discussion above, it is straightforward to identify additional substrings whose compositions lie in the symmetric difference of $C(\textbf{s})$ and $C(\textbf{v})$. In particular, if we can show that for every $l \in \{2,3,4,\ldots,4t+1\}$ one of the following two claims is true:
	\begin{enumerate}
		\item $c(\textbf{s}_l^{-i-1}) \in C(\textbf{v}) \bigtriangleup C(\textbf{s})$, or
		\item $c(\textbf{v}_l^{-i-1}) \in C(\textbf{v}) \bigtriangleup C(\textbf{s}).$
	\end{enumerate}
	then $ |C(\textbf{s}) \bigtriangleup C(\textbf{v})| \geq 4t+1$.
	
	For $l \in \{2,3,4,\ldots,4t+1\}$, it is straightforward to see that 
	\begin{align*}
	\text{wt}( \textbf{s}_l^{-i-1}) \neq \text{wt}( \textbf{v}_l^{-i-1}).
	\end{align*}
	Without loss of generality, we may assume that $\text{wt}( \textbf{s}_l^{-i-1}) < \text{wt}( \textbf{v}_l^{-i-1})$. Then $c(\textbf{s}_l^{-i-1}) \in C(\textbf{v}) \bigtriangleup C(\textbf{s}) $. 
	Similarly as before, for any $l < j \leq i+l$, the following holds for substrings of $\textbf{s}$ of length $n-i-l+1$: 
	\begin{align*}
	\text{wt}(\textbf{s}_j^{-(i-j+l+1)}) = \text{wt}(\textbf{s}_j^{i-1}) + \text{wt}(\textbf{s}_i^{-i}) + \text{wt}(\textbf{s}_{-i+1}^{-(i-j+l+1)}).
	\end{align*}
	If $j \leq i-1$, we have 
	\begin{align*}
	\text{wt}(\textbf{s}_j^{-(i-j+l+1)}) =& \text{wt}(\textbf{s}_j^{i-1}) + \text{wt}(\textbf{s}_i^{-i}) + \text{wt}(\textbf{s}_{-i+1}^{-(i-j+l+1)})\\
	=& \text{wt}(\textbf{v}_j^{i-1}) + \text{wt}(\textbf{v}_i^{-i}) + \text{wt}(\textbf{v}_{-i+1}^{-(i-j+l+1)}) \\
	=& \text{wt}(\textbf{v}_j^{-(i-j+l+1)}).
	\end{align*}
	For the case $j \geq i \geq 4t+\textcolor{black}{3}$, note that $\textbf{s}_l^{-i-1}$ contains more zeros than ones but for $j > i-1$, the substring $\textbf{v}_j^{-(i-j+l+1)}$ contains at least as many $1$s as $0$s. Therefore, for any $j > i-1$,
	\begin{align*}
	c(\textbf{s}_l^{-i-1}) \neq c(\textbf{v}_j^{-(i-j+l+1)}).
	\end{align*}
	
	We are left with analyzing the compositions of substrings of length $n-i-l+1$ in $\textbf{v}$ to the left of $\textbf{v}_l^{-i-1}$. Since every codestring in $\C(n,t)$ starts with $4t+1$ \textcolor{black}{$0$s}, it follows that for any $j < l$
	\begin{align*}
	\text{wt}(\textbf{v}_j^{-(i-j+l+1)}) \leq \text{wt}(\textbf{v}_{j-1}^{-(i-(j-1)+l+1)}).
	\end{align*}
	Furthermore, since $\text{wt}( \textbf{s}_l^{-i-1}) < \text{wt}( \textbf{v}_l^{-i-1})$, it follows that for any $j<l$,
	\begin{align*}
	\text{wt}( \textbf{s}_l^{-i-1}) < \text{wt}(\textbf{v}_j^{-(i-j+l+1)}).
	\end{align*}
	Thus, $c(\textbf{s}_l^{-i-1}) \in C(\textbf{v}) \bigtriangleup C(\textbf{s})$. This completes the proof.
\end{proof}

The result of Theorem~\ref{th:mainth} may be used to prove Theorem~\ref{th:new} since the number of redundant bits, $\cO(\log k + t),$ is a direct consequence of the code construction described in~\eqref{constr4}.

The reconstruction time for the described codes for a constant number of errors $t$ is polynomial in $n$. To see this, consider the ${{n+1 \choose 2} \choose t}$ possible choices for errors in distinct compositions. Each composition can be corrupted in at most $n$ different ways (for the composition corresponding to the whole string this number equals $n$). 
Thus, given an erroneous composition multiset $\tilde{C}(\textbf{s})$, there are at most ${{n+1 \choose 2} \choose t} n^t$ 
candidate true composition multisets $\{ \tilde{C}^1(\textbf{s}), \tilde{C}^2(\textbf{s}), \dots \tilde{C}^m(\textbf{s}) \}$, 
where $m = \cO(n^{3t})$.  Thus, by reconstructing the strings as given by the compositions $\{ \tilde{C}^1(\textbf{s}), \tilde{C}^2(\textbf{s}), \dots \tilde{C}^m(\textbf{s}) \}$ using the Backtracking algorithm, we can recover the string $\textbf{s}$ in $\cO(n^{3+3t}) $ time.

\section{Open Problems} \label{sec:open}

Many combinatorial and coding-theoretic problems related to mass error-correcting codes remain open and are listed below.
\begin{itemize}
\item In Sections~\ref{sec:recons},~\ref{sec:asymmetric} and~\ref{sec:symmetric} we showed that  the number of redundant bits sufficient for unique and efficient reconstruction without errors and in the presence of a constant number of $t$ errors equals $\cO(\log k)$ and $\cO(t^2 \log k)$, respectively. Lower bounds on the number of redundant bits are still unknown. 
\item The decoding algorithm used in the proof of Theorem~\ref{thm:asym} is efficient only if the number of errors $t$ is a constant. We do not know of any string reconstruction algorithms that are efficient both in $t$ and $n$. 
\item In our analysis, we made two simplifying assumptions described in the Introduction and previously used in~\cite{acharya2014string}. However, in reality one does not have access to the masses of all substrings but rather to corrupted masses of prefixes and suffixes of mixtures of strings. Mixing polymer strings also allows for faster readouts of information via MS/MS spectrometers. Therefore, a natural question is how to perform reconstruction of \emph{multiple} strings based on the union of their composition multisets or prefix-suffix sets. 
\item We addressed the string reconstruction problem when the errors are either asymmetric or symmetric. However, MS/MS errors are often bursty and context-dependent. Thus, studying more general error models is another problem of interest.   
\item Several problems outlined in~\cite{acharya2014string} at this time also remain open. We restate two of those problems for completeness: 1) Improve the upper and lower bounds on the number of \emph{confusable} strings; 2) Determine explicit polynomial-time algorithm for string reconstruction problems, the existence of which was established in~\cite{lenstra1985factoring, grigoryev1984factoring, kaltofen1985polynomial, kaltofen1990computing}. 
\end{itemize}

\section*{Acknowledgment} The authors gratefully acknowledge funding from the DARPA Molecular Informatics program, the NSF+SRC SemiSynBio program under award number 1807526 and the NSF grant number 1618366. 

\bibliography{biblio} 

\begin{thebibliography}{10}

\bibitem{zhirnov2016nucleic}
V.~Zhirnov, R.~M. Zadegan, G.~S. Sandhu, G.~M. Church, and W.~L. Hughes,
  ``Nucleic acid memory,'' {\em Nature materials}, vol.~15, no.~4, p.~366,
  2016.

\bibitem{al2017mass}
A.~Al~Ouahabi, J.-A. Amalian, L.~Charles, and J.-F. Lutz, ``Mass spectrometry
  sequencing of long digital polymers facilitated by programmed inter-byte
  fragmentation,'' {\em Nature communications}, vol.~8, no.~1, p.~967, 2017.

\bibitem{goldman2013towards}
N.~Goldman, P.~Bertone, S.~Chen, C.~Dessimoz, E.~M. LeProust, B.~Sipos, and
  E.~Birney, ``Towards practical, high-capacity, low-maintenance information
  storage in synthesized dna,'' {\em Nature}, vol.~494, no.~7435, p.~77, 2013.

\bibitem{grass2015robust}
R.~N. Grass, R.~Heckel, M.~Puddu, D.~Paunescu, and W.~J. Stark, ``Robust
  chemical preservation of digital information on {DNA} in silica with
  error-correcting codes,'' {\em Angewandte Chemie International Edition},
  vol.~54, no.~8, pp.~2552--2555, 2015.

\bibitem{yazdi2015rewritable}
S.~H.~T. Yazdi, Y.~Yuan, J.~Ma, H.~Zhao, and O.~Milenkovic, ``A rewritable,
  random-access {DNA}-based storage system,'' {\em Scientific reports}, vol.~5,
  p.~14138, 2015.

\bibitem{yazdi2017portable}
S.~H.~T. Yazdi, R.~Gabrys, and O.~Milenkovic, ``Portable and error-free
  {{DNA}}-based data storage,'' {\em Scientific reports}, vol.~7, no.~1,
  p.~5011, 2017.

\bibitem{acharya2014string}
J.~Acharya, H.~Das, O.~Milenkovic, A.~Orlitsky, and S.~Pan, ``String
  reconstruction from substring compositions,'' {\em SIAM Journal on Discrete
  Mathematics}, vol.~29, no.~3, pp.~1340--1371, 2015.

\bibitem{hardy1929introduction}
G.~H. Hardy, ``An introduction to the theory of numbers,'' {\em Bull. Amer.
  Math. Soc.}, vol.~35, pp.~778--818, 11 1929.

\bibitem{levenshtein2001efficient}
V.~I. Levenshtein, ``Efficient reconstruction of sequences from their
  subsequences or supersequences,'' {\em Journal of Combinatorial Theory,
  Series A}, vol.~93, no.~2, pp.~310--332, 2001.

\bibitem{dudik2003reconstruction}
M.~Dud{\i}k and L.~J. Schulman, ``Reconstruction from subsequences,'' {\em
  Journal of Combinatorial Theory, Series A}, vol.~103, no.~2, pp.~337--348,
  2003.

\bibitem{batu2004reconstructing}
T.~Batu, S.~Kannan, S.~Khanna, and A.~McGregor, ``Reconstructing strings from
  random traces,'' in {\em Proceedings of the fifteenth annual ACM-SIAM
  symposium on Discrete algorithms}, pp.~910--918, Society for Industrial and
  Applied Mathematics, 2004.

\bibitem{viswanathan2008improved}
K.~Viswanathan and R.~Swaminathan, ``Improved string reconstruction over
  insertion-deletion channels,'' in {\em Proceedings of the nineteenth annual
  ACM-SIAM symposium on Discrete algorithms}, pp.~399--408, Society for
  Industrial and Applied Mathematics, 2008.

\bibitem{kiah2016codes}
H.~M. Kiah, G.~J. Puleo, and O.~Milenkovic, ``Codes for {DNA} sequence
  profiles,'' {\em IEEE Transactions on Information Theory}, vol.~62, no.~6,
  pp.~3125--3146, 2016.

\bibitem{gabrys2018unique}
R.~Gabrys and O.~Milenkovic, ``Unique reconstruction of coded sequences from
  multiset substring spectra,'' in {\em 2018 IEEE International Symposium on
  Information Theory (ISIT)}, pp.~2540--2544, IEEE, 2018.

\bibitem{cheraghchi2019coded}
M.~{Cheraghchi}, R.~{Gabrys}, O.~{Milenkovic}, and J.~{Ribeiro}, ``Coded trace
  reconstruction,'' {\em IEEE Transactions on Information Theory}, vol.~66,
  no.~10, pp.~6084--6103, 2020.

\bibitem{speyer2016upper}
D.~E.~S. (https://mathoverflow.net/users/297/david-e speyer), ``Upper limit on
  the central binomial coefficient.'' MathOverflow.
\newblock URL:https://mathoverflow.net/q/246875 (version: 2016-08-05).

\bibitem{durocher2012cool}
S.~Durocher, P.~C. Li, D.~Mondal, F.~Ruskey, and A.~Williams, ``Cool-lex order
  and k-ary {C}atalan structures,'' {\em Journal of Discrete Algorithms},
  vol.~16, pp.~287--307, 2012.

\bibitem{roth2006introduction}
R.~Roth, {\em Introduction to coding theory}.
\newblock Cambridge University Press, 2006.

\bibitem{lenstra1985factoring}
A.~Lenstra, ``Factoring multivariate polynomials over finite fields,'' {\em
  Journal of Computer and System Sciences}, vol.~30, no.~2, pp.~235 -- 248,
  1985.

\bibitem{grigoryev1984factoring}
D.~Y. Grigoryev, ``Factoring polynomials over a finite field and solution of
  systems of algebraic equations,'' {\em Theory of the complexity of
  computations, II., Zap. Nauchn. Sem. Leningrad. Otdel. Mat. Inst.
  Steklov.(LOMI)}, vol.~137, pp.~124--188, 1984.

\bibitem{kaltofen1985polynomial}
E.~Kaltofen, ``Polynomial-time reductions from multivariate to bi- and
  univariate integral polynomial factorization,'' {\em SIAM Journal on
  Computing}, vol.~14, no.~2, pp.~469--489, 1985.

\bibitem{kaltofen1990computing}
E.~Kaltofen and B.~M. Trager, ``Computing with polynomials given by black boxes
  for their evaluations: Greatest common divisors, factorization, separation of
  numerators and denominators,'' {\em Journal of Symbolic Computation}, vol.~9,
  no.~3, pp.~301 -- 320, 1990.
\newblock Computational algebraic complexity editorial.

\end{thebibliography}
\bibliographystyle{ieeetr}


\appendix
\section*{Appendices}

\subsection{Proof of the second part of Theorem~\ref{thm:BW}} \label{app:BW}
\begin{theorem*} 
\textcolor{black}{The central binomial coefficient ${2m \choose m }$ counts the following types of binary strings of length $2m$. \\
(A) Those whose every prefix has at least as many $0$s as $1$s. \\
(B) Those whose every prefix has strictly more $0$s than $1$s, or vice-versa.}
\end{theorem*}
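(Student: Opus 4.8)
The plan is to reduce both statements to the classical (non-strict) Whitworth--Bertrand count from the first part of Theorem~\ref{thm:BW} and then sum over all admissible weights, relying on a telescoping cancellation. For part (A), the key observation is that if every prefix of a length-$2m$ string has at least as many $0$s as $1$s, then so does the whole string, which forces the number $b$ of $1$s to satisfy $b\le m$. First I would fix $b\in\{0,1,\dots,m\}$ and apply Theorem~\ref{thm:BW} with $a=2m-b$ zeros and $b$ ones (so that $a\ge b$); this gives exactly $\binom{2m}{b}-\binom{2m}{b-1}$ such strings of weight $b$. Summing over the admissible range yields
\begin{equation*}
\sum_{b=0}^{m}\left(\binom{2m}{b}-\binom{2m}{b-1}\right)=\binom{2m}{m}-\binom{2m}{-1}=\binom{2m}{m},
\end{equation*}
where the intermediate terms cancel and $\binom{2m}{-1}=0$. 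This establishes (A).

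For part (B), I would first use the symmetry exchanging the roles of $0$ and $1$: the strings whose every prefix has strictly more $0$s than $1$s are in bijection with those whose every prefix has strictly more $1$s than $0$s, and these two families are disjoint, since a length-one prefix cannot simultaneously have strictly more $0$s and strictly more $1$s. Hence the count in (B) equals $2N$, where $N$ is the number of length-$2m$ strings in which every prefix has strictly more $0$s than $1$s. To compute $N$, I would strip the forced leading symbol: such a string must begin with $0$, say $\textbf{s}=0\textbf{u}$ with $\textbf{u}$ of length $2m-1$, and the strict inequality for $\textbf{s}$ is equivalent to the non-strict condition that every prefix of $\textbf{u}$ has at least as many $0$s as $1$s. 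Applying the argument of part (A) to $\textbf{u}$ (summing the Whitworth--Bertrand count $\binom{2m-1}{b}-\binom{2m-1}{b-1}$ over $b=0,\dots,m-1$) telescopes to $\binom{2m-1}{m-1}=\tfrac12\binom{2m}{m}$, where the last equality follows from $\binom{2m}{m}=\binom{2m-1}{m-1}+\binom{2m-1}{m}=2\binom{2m-1}{m-1}$. Therefore the count in (B) is $2N=\binom{2m}{m}$.

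None of these steps is deep, so the main obstacle is bookkeeping rather than insight. The proof hinges on recognizing that one must sum the ballot count over \emph{all} admissible weights, not just the balanced weight $a=b=m$ that appears in the Catalan specialization stated in Theorem~\ref{thm:BW}. I expect the delicate points to be verifying that the admissible weight ranges ($b\le m$ in (A), and $b\le m-1$ for the odd-length string $\textbf{u}$ in (B)) are exactly those for which the hypothesis $a\ge b$ of Theorem~\ref{thm:BW} holds — so that no term is omitted or counted twice — together with a careful check of the telescoping boundary terms and of the leading-$0$ stripping equivalence used in (B).
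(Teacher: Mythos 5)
Your proof is correct and follows essentially the same route as the paper's: part (A) is the telescoping sum of the Whitworth--Bertrand ballot counts over all admissible weights (the paper sums over the number of $0$s, you over the number of $1$s), and part (B) strips the forced leading $0$, applies the non-strict count to the remaining odd-length string, and doubles via the $0$--$1$ exchange symmetry. The only cosmetic difference is that the paper records the odd-length count as $\binom{2m-1}{m}$ where you write $\binom{2m-1}{m-1}$; these are equal.
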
 
\begin{proof}
\textcolor{black}{ The number of binary strings of Type (A) of length $\ell = a+ b,$ with $\ell$ possibly odd, such that the number of $0$s is greater or equal to the number of $1$s, i.e., $a \geq b$ is given by ${ \ell \choose a} - {\ell \choose a+1}.$ The number of strings for which every prefix has at least as many $0$s as $1$s is given by $\sum_{a \geq \lceil \frac{\ell}{2} \rceil} { \ell \choose a} - {\ell \choose a+1},$ which is a telescoping sum that equals $ { \ell \choose \lceil \frac{\ell}{2} \rceil}.$ \\
To prove (B), let us consider strings of length $2m$ whose every prefix has strictly more $0$s than $1$s. 
In this case, the first bit of any string $\textbf{s}$ is always $0.$ Thus, the remaining length-$(2m-1)$ binary string $\textbf{s}_2^{2m}$ is such that for every prefix, the number of $0$s is at least as large as the number of $1$s in that same prefix. 
Thus, the number of strings of length $2m$ whose every prefix has strictly more $0$s than $1$s is ${ 2m -1 \choose m}.$ As a result, the total number of binary strings of length $2m$ whose every prefix has strictly more $0$s than $1$s or vice-versa is equal to $2 { 2m -1 \choose m} = {2m \choose m}.$  }
\end{proof}

\subsection{Derivation of the lower bound of $|\mathcal{S}_R(n)|$} \label{app:derivation}

\textcolor{black}{Let $n$ be even. Note that all strings $ \textbf{s} \in \mathcal{S}_R(n)$ satisfy $s_1 =0$ and $s_n =1.$ Let $|\left[ \frac{n}{2} \right] \cap I| = i+1.$ Thus, the indices corresponding to the Catalan-Bertrand string can be chosen in ${ \frac{n}{2}-1 \choose i }$ ways. Since $s_1 =0$, it must be that every prefix of $ \textbf{s}_{\left[ \frac{n}{2} \right] \cap I \setminus \{ 1\}}$ contains at least as many $0$s as $1$s. There are ${ i \choose \lfloor \frac{i}{2} \rfloor }$ such binary strings of length $i.$ Therefore, 
$$ |\mathcal{S}_R(n)| = \sum_{i=0}^{\frac{n}{2} -1} {\frac{n}{2} - 1 \choose i} 2^{\frac{n}{2} -1 -i} {i \choose  \lfloor \frac{i}{2} \rfloor}.$$ As a result, }
\textcolor{black}{
\begin{align}
    &\sum_{i=0}^{\frac{n}{2} -1} {\frac{n}{2} - 1 \choose i} 2^{\frac{n}{2} -1 -i} {i \choose \lfloor \frac{i}{2} \rfloor} \label{eqr1} \\
    &\geq   \sum_{i=2}^{\frac{n}{2} -1} {\frac{n}{2} - 1 \choose i} 2^{\frac{n}{2} -1 -i}  \frac{2^{i -1}}{\sqrt{\pi (i+1)}}   +   {\frac{n}{2} - 1 \choose 1} 2^{\frac{n}{2} -1 -1} + {\frac{n}{2} - 1 \choose 0} 2^{\frac{n}{2} -1 }      \label{eqr2} \\
    &\geq \frac{2^{\frac{n}{2} - 2}}{\sqrt{\pi n} } \sum_{i=0}^{\frac{n}{2}-1} { \frac{n}{2} -1 \choose i } \label{eqr3} \\
      & =  \frac{2^{\frac{n}{2} - 2}}{\sqrt{\pi n} } 2^{\frac{n}{2} - 1} = \frac{1}{\sqrt{\pi n}}2^{n-3}.  \label{eqr5}
\end{align}
Expression \eqref{eqr1} follows from the description of the codebook. Also, ${ 2\ell + 1 \choose \ell} \geq { 2\ell \choose \ell}$ clearly holds. As a result, inequality~\eqref{eqr2} follows from Proposition 1, for all $i \geq 2$. Inequality~\eqref{eqr3} holds since for all $0 \leq i \leq \frac{n}{2}$, $(i+1) \leq n$. The next two equalities in~\eqref{eqr5} follow from the fact that $\sum_{i=0}^{\ell} { \ell \choose i} = 2^\ell$, and some rearrangements of terms. }

\textcolor{black}{   For odd $n,$ $ |\mathcal{S}_R(n)| = 2  |\mathcal{S}_R(n-1)| \geq 2 \frac{ 2^{n-1-3} }{\sqrt{ \pi (n-1)}} \geq \frac{ 2^{n-3} }{\sqrt{ \pi n}}.$       }

\subsection{A bijective map between information strings and reconstructable strings} \label{app:construction}

\textcolor{black}{An optimal approach for performing encoding of information strings into Catalan string was first described in} \cite{durocher2012cool} \textcolor{black}{and it relies on using a ranking/unranking scheme of complexity $\mathcal{O}(n)$. However, we provide a much simpler method to order and retrieve the reconstructable strings in additive $\mathcal{O}(n^2)$ time, which is still absorbed in the leading complexity term of $\mathcal{O}(n^3)$ incurred by the Backtracking algorithm. \\
Recall that the reconstruction code is obtained by interleaving arbitrary, unconstrained strings with a Catalan-Bertrand strings and then mirroring the interleaved string around what will be the midpoint of the resulting codestring.}

\textcolor{black}{
1) The construction starts by partitioning the first $\lfloor \frac{n}{2} \rfloor$ indices into two sets, say $\mathcal{I}_0$ and $\mathcal{I}_1,$ the cardinalities of which are in $\{{0,\ldots,\lfloor \frac{n}{2} \rfloor\}}$.  Let $\mathcal{I}_0$ denote the set of indices that describe the locations of the string to be interleaved, and let $\mathcal{I}_1$ denote the indices that describe the locations of the Catalan-Bertrand string. \\
Next, order all possible partitions according to the cardinality of their corresponding $\mathcal{I}_0$ sets, in increasing order. For example, if $000111010$ and $001111010$ are the labels of two partitions of a string of length $9$, than $001111010$ appears in the rank-ordered list before $000111010$ (the first partition has $|\mathcal{I}_0|=4$, while the second partition has $|\mathcal{I}_0|=5>4$).\\
In the next step, order the partitions with the same value of $|\mathcal{I}_0|$. Given a partition described using the binary alphabet as above, one can convert the binary strings into integers and arrange them in increasing order which naturally induces a ranking of the partitions themselves. Finding the index of a partition in this ranking takes $\mathcal{O}(n)$ time. To see this, consider a partition $\Pi$ of $m=\lfloor \frac{n}{2} \rfloor$ indices, and let $|\mathcal{I}_0|=i.$} 
\textcolor{black}{ Thus, the rank of this partition is an integer in the interval $\left[\sum_{j=0}^{i-1}{ m \choose j }+1, \sum_{j=0}^{i}{m \choose j}\right].$ Assume that the set $\mathcal{I}_0$ contains the indices $( \ell_1, \ell_2, \dots , \ell_i)$ arranged in increasing order. The rank of the partition $\Pi$ is given by \\ 
$$ \sum_{j=0}^{i-1}{ m \choose j }+ \left[ { \ell_i -1 \choose i} + { \ell_{i-1} -1 \choose i-1} + { \ell_{i-2} -1 \choose i-2} + \dots + { \ell_{1} -1 \choose 1} + 1 \right].$$}

\textcolor{black}{Therefore, given the index of a partition, one can determine the actual partition in time $\mathcal{O}(n^2).$}

\textcolor{black}{
2) Next, given the indices in $\mathcal{I}_0,$ place unrestricted binary strings in the corresponding locations according to the lexicographical order.}

\textcolor{black}{
3) At indices in $\mathcal{I}_1,$ place bits of a Catalan-Bertrand string. Let us now assume that there exists a bijective map $F_{m}(\cdot)$ that for all natural numbers $m$ orders all Catalan-Bertrand strings of length $m$ efficiently. In particular, we assume that given an index \texttt{ind}, $F_{m}(\texttt{ind})$ returns the corresponding Catalan-Bertrand string in time $\mathcal{O}(n^2)$. Further, given a Catalan-Bertrand string $\textbf{s}$, $F^{-1}_{m}(\textbf{s})$ returns its index \texttt{ind} in $\mathcal{O}(n)$ time. We defer the description of the map to the end of this exposition. \\
Let $f_m(i)$ denote the number of Catalan-Bertrand strings with $m-i$ $0$s and $i$ $1$s. Then, $f_m = \sum_{i=0}^{\lfloor \frac{m}{2} \rfloor} f_m(i)$ is the number of all Catalan-Bertrand strings of length $m.$ Note that $f_m(i)$ has a closed form expression as given in Theorem 6, and $f_m$ equals $\frac{1}{2} {m \choose \lfloor \frac{m}{2} \rfloor}.$ \\
The ordering for the codestrings of the reconstruction code is obtained as follows:\\
a) Given two reconstructable codestrings $\textbf{s}_1$, and $\textbf{s}_2$, and their corresponding partitions $\Pi_{1}$ and $\Pi_{2}$ from 1), if $\Pi_1$  is ranked lower than $\Pi_2$, then $\textbf{s}_1$ is ranked lower than $\textbf{s}_2$.\\ 
b) Given two reconstructable codestrings $\textbf{s}_1$, and $\textbf{s}_2$ such that $\Pi_1=\Pi_2$, if the string of $\textbf{s}_1$ indexed by $\mathcal{I}_0$ is ranked lower than that of $\textbf{s}_2$ (as per 2)), then $\textbf{s}_1$ is ranked lower than $\textbf{s}_2.$\\
c) Given two reconstructable codestrings $\textbf{s}_1$ and $\textbf{s}_2$ such that $\Pi_1=\Pi_2$, and the strings of $\textbf{s}_1$ and $\textbf{s}_2$ indexed by $\mathcal{I}_0$ are the same, if the string indexed by $\mathcal{I}_2$ in $\textbf{s}_1$ is ranked lower than the string in $\textbf{s}_2$, then the string $\textbf{s}_1$ is ranked lower than $\textbf{s}_2.$ \\
In summary, the reconstructable codestrings are encoded and decoded as described below.\\
\textbf{Encoding:} \\
A $k-$bit binary string is converted into an index $\texttt{ind}.$ The time taken to find the corresponding partition and Catalan-Bertrand string is $\mathcal{O}(n).$ Combining this result with the result pertaining to the ranking map proves that the information string can be encoded in $\mathcal{O}(n^2)$ time.\\
\textbf{Decoding:}\\
Given a reconstructable codestring, its index can be computed in $\mathcal{O}(n)$ time. The $k-$bit binary expansion of the index uniquely determines the information string. Since the Backtracking algorithm takes $\mathcal{O}(n^3)$ time, the overall decoding time equals $\mathcal{O}(n^3).$}

\textcolor{black}{It remains to show that encoding and decoding of the Catalan-Bertrand strings can be performed in time $\mathcal{O}(n^2)$. Since the decoding process is easier to describe and leads to a straightforward approach for encoding, we start with the description of the decoding algorithm.\\
\textbf{Decoding Catalan-Bertrand strings:}
Let $\textbf{s} = s_1 s_2 \dots s_{m-1} s_{m}$ denote a Catalan-Bertrand string of length $m$ that contains $m-i$ $0$s and $i$ $1$s and recall that $f_m(i)$ denotes the number of such Catalan-Bertrand strings. We start by ranking the string $\textbf{s}$ against the set of all Catalan-Bertrand strings of length $m$ that contain $m-i$ $0$s and $i$ $1$s. The following simple algorithm determines the temporary index for $\textbf{s}$ in $\mathcal{O}(n)$ time. \\
\begin{align*}
    &\texttt{ind}_{temp} \gets f_m(i)   \\
    & l \gets i \\
    &\texttt{for }  j \texttt{ from } 0 \texttt{ to } m-1: \\
    & \qquad \qquad \texttt{ind}_{temp} = \texttt{ind}_{temp} - \mathbf{1}_{\{{s_{m-j} == 0\}}}\, f_{m-1-j}(l) \\
    & \qquad \qquad \texttt{if } s_{m-j} == 1: \\
    & \qquad \qquad\qquad \qquad l \gets l-1,
\end{align*}
where $\mathbf{1}$ denotes the indicator function. Note that $F_m(\cdot)$ then assigns the final index value $\sum_{\ell <i } f_m(i) + \texttt{ind}_{temp} $ to the given Catalan-Bertrand string $\textbf{s}$ of length $m$ in time $\mathcal{O}(n)$. \\
\textbf{Encoding Catalan-Bertrand strings:}
From the decoding procedure it is easy to deduce how to perform the encoding: Given $m$ and \texttt{ind}, we first find an $i$ such that $\textbf{s}$ has $m-i$ $0$s and $i$ $1$s. Then, iteratively, the bits $s_m$ through $s_1$ are computed using the correspondence between the bit value and the index range as described in the decoding process. Hence, encoding takes $\mathcal{O}(n^2)$ time.   
}

\subsection{Proof of Lemma~\ref{lem:awesome} } \label{app:lem4}

\textcolor{black}{Recall that we consider asymmetric errors, in which case a single error may occur either in $C_j$ or $C_{n+1 -j}$ but not both multisets. Furthermore, up to $t$ such errors are allowed. The presented code corrects such errors with at most $c_1 t \log k + c_2$ bits of redundancy, where $k$ is the length of the information string, and $c_1$, and $c_2$ are two positive constants. \\
The code construction involves two parts: 1) String $\textbf{s} \in \mathcal{S}_R(n_1)$ is padded with a prefix of $t$ 0s and a suffix of $t$ 1s to form an intermediate string $\textbf{s}'$ of length $n' + 2t$. \\
2) The $\Sigma^{\frac{n_1}{2}}$ is then encoded using a systematic $t-$ error correcting code and the redundant bits are placed in the middle of the string in manner such that the resultant string $\textbf{s}'' \in \mathcal{S}_R(n).$ \\
We show through a case-by-case analysis that the code is indeed a $t-$asymmetric error correcting code.}

Our analysis proceeds through multiple steps addressing different possible choices for the values of $\sigma_i, i=1,\ldots, \frac{n}{2},$ and the currently reconstructed bits (i.e., prefixes and suffixes of the codestring). The initial setting is depicted in Figure~\ref{fig:lema_figure1}. Each subsequent figure (Figures~\ref{fig:lema_figure2a},~\ref{fig:lema_figure2},~\ref{fig:lema_figure3},~\ref{fig:lema_figure3b},~\ref{fig:lema_figure3a},~\ref{fig:lema_figure4},~\ref{fig:lema_figure5} and~\ref{fig:lema_figure6}) explains how to extend two partially reconstructed strings from their prefix and suffix pairs so as to minimize the number of compositions they disagree in. For simplicity, such pairs are termed ``confusable'' and finding confusable pairs allows us to determine the minimum composition set differences between codestrings based on the Catalan-Bertrand construction. The final result establishes that the previous construction ensures a minimum composition set difference $ \geq 2(t +1)$.

First, we observe from Construction~\eqref{set1} that any pair of distinct strings $\textbf{s}, \textbf{v} \in \mathcal{S}_{R}^{(t)}(m)$ shares a prefix-suffix pair of length at least $t$ as all strings are padded by 0s and 1s on the left and right, respectively. 

Next, we characterize the conditions that allow one to identify strings that are ``closest'' to a codestring $\textbf{s}$. More precisely, we construct a set $ \mathcal{V}_{\textbf{s}}$ of strings such that for all $\textbf{v} \in \mathcal{V}_{\textbf{s}}$ one has: 1) $\textbf{v}$ and $\textbf{s}$ share the same $\Sigma^{\frac{m}{2}}$ sequence; 2) If the length of the longest shared prefix-suffix pair of 
$\textbf{v}$ and $\textbf{s}$ equals $i$, then for all $j \in \{ m-i-1, m-i-2, \dots, m-i-t-1 \}$ the inequality $|C_j(\textbf{s}) \setminus C_j(\textbf{v}) | \leq 2$ holds. These conditions summarize when a string may be confused with $\textbf{s}$ during the backtracking reconstruction procedure.

Recall that $c(\cdot)$ refers to the composition of its argument string. The substrings $\{ \textbf{s}_i^{i+j-1} \}, i=1,\ldots,m-j+1$ of $\textbf{s}$ of length $j$ share a common substring $\textbf{s}_{m+1-j}^{j},$ provided that $j > \frac{m}{2}$. For simplicity of notation, denote the composition of the common substring $\textbf{s}_{m+1-j}^{j}$ by $c_j$, \textit{i.e.}, let $c_j = c(\textbf{s}_{m+1-j}^{j})$.

We start with the following observation. If $\sigma_{i+1} \neq 1$, the two strings $\textbf{s}$ and $\textbf{v}$ necessarily share a prefix-suffix pair of length $i+1$, which contradicts the assumption that the longest prefix-suffix pair shared by the two strings is of length $i$. Thus, we have $\sigma_{i+1} =1$ and $|C_{m-i-1}(\textbf{s}) \setminus C_{m-i-1}(\textbf{v})|=2$, where the latter claim follows from the discussion pertaining to the single error-correction case: The compositions of length $m-i-1$ that are not shared by the two strings include $ \{ c(\textbf{s}_1^i), 0, c_{m-i-1} \}$, $\{c(\textbf{s}_{m+1-i}^m),1,c_{m-i-1} \}$ , $\{ c(\textbf{v}_1^i), 1, c_{m-i-1} \}$, $\{ c(\textbf{v}_{m+1-i}^m), 0, c_{m-i-1} \}$, and these differ by construction. 

Next, we describe how to simultaneously reconstruct a pair of prefix-suffix bits and update the set $\mathcal{V}_{\textbf{s}}$ when taking a step in the Backtracking algorithm. We show that under the conditions of the lemma, $|C_{m-i-1 -j}(\textbf{s}) \setminus C_{m-i-1-j}(\textbf{v})|= 2$ for all $\textbf{v} \in \mathcal{V}_{\textbf{s}}, 1 \leq j \leq t$. For notational simplicity, at every step of the reconstruction algorithm we use the index $``+"$ to denote the next bit in the prefix and $``-"$ to denote the next bit in the suffix to be reconstructed. As an example, for a reconstructed prefix-suffix pair of length $i+1$, $+$ corresponds to $i+2$ and $-$ corresponds to $m-i-1,$ \textit{i.e.}, $s_+ = s_{i+2}$ and $s_- = s_{m-i-1}$.

Let $\sigma_+ = \text{wt}(s_{+} s_{-}) = \text{wt}(v_{+} v_{-})$. We analyze the two cases $\sigma_+ =1$ and $\sigma_+ \in \{ 0,2\}$ separately, as depicted in Figure~\ref{fig:lema_figure1}. 

Consider the case that $\sigma_+ =1$. Note that for any substring $\textbf{s}_{\ell_1}^{\ell_2}$ \textcolor{black}{such that $\ell_1 \leq i+1, m-i \leq \ell_2$, the corresponding substring} $\textbf{v}_{\ell_1}^{\ell_2}$ of $\textbf{v}$ has the same composition. The compositions in $C_{m-i-2}(\textbf{s})$ and $C_{m-i-2}(\textbf{v})$ that may be confused are listed below on the left and right hand side of the equality, respectively:

\begin{figure*}[h!]
\centering
  \includegraphics[scale= 0.45]{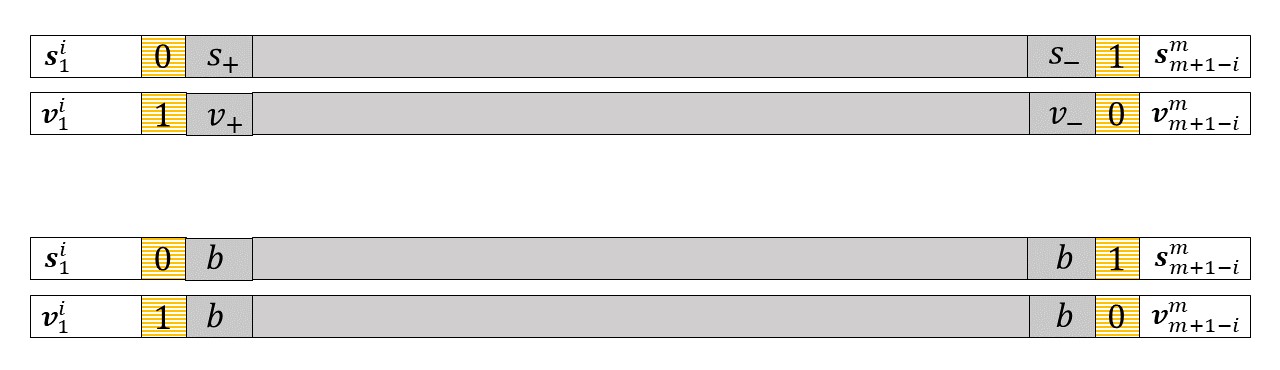}
  \caption{Illustration of two strings $\textbf{s}$ and $\textbf{v}$ that share the same $\Sigma^{\frac{m}{2}}$ sequence. Furthermore, the two strings also satisfy $\textbf{s}_1^i = \textbf{v}_1^i$, $\textbf{s}_{m+1-i}^m = \textbf{v}_{m+1-i}^m$ and $s_{i+1} \not = v_{i+1}$, \textit{i.e.}, the longest prefix-suffix pair that the strings share is of length $i$. The top pair of strings corresponds to the case $\sigma_{i+2} =1,$ while the bottom pair of strings corresponds to the case $\sigma_{i+2} \in \{ 0,2\}$.}
\label{fig:lema_figure1}
\vspace{-0.08in}
\end{figure*}

\begin{align*}
& \begin{rcases}
 \begin{dcases}
\{{c(\textbf{s}_1^i), 0, s_+, c_{m-i-2}\}}, \\
\{{c(\textbf{s}_2^i), 0, s_+, c_{m-i-2} ,1-s_+\}}, \\
\{{\textcolor{black}{c(\textbf{s}_{m-i+1}^m)},1, 1-s_+, c_{m-i-2}\}},  \\
\{{\textcolor{black}{c(\textbf{s}_{m-i+1}^{m-1})}, 1, 1-s_+, c_{m-i-2}, s_+\}}
\end{dcases} 
\end{rcases} =
\begin{rcases}
    \begin{dcases}
\{{c(\textbf{v}_1^i),1, v_+, c_{m-i-2}\}}, \\
\{{c(\textbf{v}_2^i), 1, v_+, c_{m-i-2}, 1-v_+\}}, \\
\{{c(\textbf{v}_{m-i+1}^m), 0, 1-v_+, c_{m-i-2}\}},  \\
\{{c(\textbf{v}_{m-i+1}^{m-1}), 0, 1-v_+, c_{m-i-2}, v_+\}}
	\end{dcases} 
\end{rcases}. 
\end{align*}

We want to determine under which conditions the terms on the two sides of the equality can be perfectly matched; in the process, we will show that $|c_{m-i-2}(\textbf{s}) \setminus c_{m-i-2}(\textbf{v})| \leq 2$. 

The above sets may be more succinctly written as:
\begin{align*}
& \begin{rcases}
    \begin{dcases}
\{{c(\textbf{s}_1^i), 0, s_+, c_{m-i-2}\}}, \\
\{{c(\textbf{s}_2^i), 0^2 1, c_{m-i-2} \}}, \\
\{{c(\textbf{s}_{m-i+1}^m),1, 1-s_+, c_{m-i-2}\}},  \\
\{{c(\textbf{s}_{m-i+1}^{m-1}), 01^2, c_{m-i-2}\}}
\end{dcases} 
\end{rcases}  =
\begin{rcases}
    \begin{dcases}
\{{c(\textbf{v}_1^i),1, v_+, c_{m-i-2}\}}, \\
\{{c(\textbf{v}_2^i), 01^2, c_{m-i-2} \}}, \\
\{{c(\textbf{v}_{m-i+1}^m), 0, 1-v_+, c_{m-i-2}\}},  \\
\{{c(\textbf{v}_{m-i+1}^{m-1}), 0^21, c_{m-i-2}\}}
	\end{dcases} 
\end{rcases}.  
\end{align*}

Regrouping the a priori known extension bits with the prefixes and suffixes simplifies the sets to be matched as
\begin{align*}
&\begin{rcases}
    \begin{dcases}
\{{c(\textbf{s}_1^i), 0, s_+, c_{m-i-2}\}}, \\
\{{c(\textbf{s}_1^i), 0 1, c_{m-i-2} \}}, \\
\{{c(\textbf{s}_{m-i+1}^m),1, 1-s_+, c_{m-i-2} \}},  \\
\{{c(\textbf{s}_{m-i+1}^{m}), 01, c_{m-i-2} \}}
\end{dcases} 
\end{rcases} =
\begin{rcases}
    \begin{dcases}
\{{c(\textbf{v}_1^i),1, v_+, c_{m-i-2} \}}, \\
\{{c(\textbf{v}_1^i), 1^2, c_{m-i-2} \}}, \\
\{{c(\textbf{v}_{m-i+1}^m), 0, 1-v_+, c_{m-i-2}\}},  \\
\{{c(\textbf{v}_{m-i+1}^{m}), 0^2, c_{m-i-2} \}}
	\end{dcases} 
\end{rcases}.  
\end{align*}

For example, $\{{c(\textbf{s}_2^i), 0^2 1, c_{m-i-2} \}}$ is rewritten as $\{{c(\textbf{s}_1^i), 0^1 1, c_{m-i-2} \}}$ by moving one $0$ to the prefix composition.

Next, we remove the compositions $c_{m-i-2}$ shared by the two sets. Then we identify which compositions cannot be matched as follows. First, \textcolor{black}{it follows from the construction} that the composition of a prefix of length $i>t$ includes at least $t+1$ $0$s.  \textcolor{black}{As a result, $c(\textbf{s}_1^i)$ is composed of at least $t+1$ more $0$s than $c(\textbf{s}_{m-i+1}^m)$. Similarly, $c(\textbf{s}_{m-i+1}^m)$ is composed of at least $t+1$ more $1$s than $c(\textbf{s}_1^i)$. Hence, a composition involving less than $i+t+1$ bits that contains a composition of a prefix of length $i > t$ is composed of more $0$s than a composition of the same length that contains a composition of a suffix of length $i $. Thus, compositions $\{ c(\textbf{s}_1^i), 0, s_+) \}, \{c(\textbf{s}_1^i), 0 1) \} $ are not the same as either of the compositions $ \{{c(\textbf{v}_{m-i+1}^m), 0, 1-v_+}\}, \{{c(\textbf{v}_{m-i+1}^{m}), 0^2}\}$, since $c(\textbf{s}_1^i)$ contains at least $t+1$ more $0$s than $c(\textbf{v}_{m-i+1}^{m})$.}
Therefore, we only need to consider the two reduced set equalities:  
\begin{equation*}
\begin{rcases}
    \begin{dcases}
\{{c(\textbf{s}_1^i), 0, s_+\}}, \\
\{{c(\textbf{s}_1^i), 0 1 \}}
\end{dcases} 
\end{rcases}
=
\begin{rcases}
    \begin{dcases}
\{{c(\textbf{v}_1^i),1, v_+\}}, \\
\{{c(\textbf{v}_1^i), 1^2 \}}
	\end{dcases} 
\end{rcases},  
\end{equation*}
and 
\begin{equation*}
\begin{rcases}
    \begin{dcases}
\{{c(\textbf{s}_{m-i+1}^m),1, 1-s_+\}},  \\
\{{c(\textbf{s}_{m-i+1}^{m}), 01\}}
\end{dcases} 
\end{rcases}
=
\begin{rcases}
    \begin{dcases}
\{{c(\textbf{v}_{m-i+1}^m), 0, 1-v_+\}},  \\
\{{c(\textbf{v}_{m-i+1}^{m}), 0^2\}}
	\end{dcases} 
\end{rcases}.  
\end{equation*}

Clearly, $\{{c(\textbf{v}_1^i), 1^2 \}}$ and $\{{c(\textbf{v}_{m-i+1}^{m}), 0^2\}}$ cannot be equal to any other composition in the two sets. The possible values for the set difference $|C_{m-i-2}(\textbf{s}) \setminus C_{m-i-2}(\textbf{v})|$ for four different assignments of values for $(s_+,v_+)$ are summarized in Table~\ref{tab:yellow1}. Based on the table, if $\sigma_{i+2}(\textbf{s}) =1,$ then all strings $\textbf{v} \in \mathcal{V}_{\textbf{s}}$ satisfy $(s_{+}, v_{+})=(s_{i+2}, v_{i+2}) \in \{ (0,0), (1,0)\}$.

Next, we consider the case $\sigma_+ \in \{ 0,2 \}$. As before, we focus on $C_{m-i-2}(\textbf{s})$ and $C_{m-i-2}(\textbf{v})$ in order to establish conditions under which $|C_{m-i-2}(\textbf{s}) \setminus C_{m-i-2}(\textbf{v})|$ is minimized. 

To this end, let $b = s_+ = v_+ = s_- = v_-$. \textcolor{black}{Following the previously outlined line of reasoning,} it suffices to find when the following set equalities hold: 
\begin{equation*}
\begin{rcases}
    \begin{dcases}
\{{c(\textbf{s}_1^i), 0, b \}}, \\
\{{c(\textbf{s}_1^i), 0 1 \}}
\end{dcases} 
\end{rcases}
=
\begin{rcases}
    \begin{dcases}
\{{c(\textbf{v}_1^i),1, b\}}, \\
\{{c(\textbf{v}_1^i), 1^2 \}}
	\end{dcases} 
\end{rcases}  
\end{equation*}
and 
\begin{equation*}
\begin{rcases}
    \begin{dcases}
\{{c(\textbf{s}_{m-i+1}^m),1, b\}},  \\
\{{c(\textbf{s}_{m-i+1}^{m}), 01\}}
\end{dcases} 
\end{rcases}
=
\begin{rcases}
    \begin{dcases}
\{{c(\textbf{v}_{m-i+1}^m), 0, b\}},  \\
\{{c(\textbf{v}_{m-i+1}^{m}), 0^2\}}
	\end{dcases} 
\end{rcases}.  
\end{equation*}
It can be easily seen that the compositions cannot be matched. The possible cardinalities of the set difference $|C_{m-i-2}(\textbf{s}) \setminus C_{m-i-2}(\textbf{v})|$ are summarized in Table~\ref{tab:yellow2}.

As a result of the above discussion, for any $\textbf{v} \in \mathcal{V}_{\textbf{s}}$ we necessarily have $(s_{i+2}, v_{i+2}) \in \{ (0,0), (1,0)\}$ and $\sigma_{i+2} = 1$.
This consequently determines the pair of bits $s_{m-i-1}$ and $v_{m-i-1}$. 

To determine $s_{i+3}$,$s_{m-i-2}$, $v_{i+3}$ and $v_{m-i-2}$ we need to once again analyze two cases, one for which we assume that $\sigma_{i+3} = 1$ and another, for which we assume that $\sigma_{i+3} \in \{ 0,2\}$. This analysis has to be performed in the context depicted in Figure~\ref{fig:lema_figure1}, and under the constraints imposed by Tables~\ref{tab:yellow1} and~\ref{tab:yellow2}.

We focus on the bits $s_{i+2+i'}$ and $v_{i+2+i'} $ for some $i'$ such that $t-1 \geq i' \geq 0,$ in the following inductive setting:
\begin{itemize}
\item Assume that starting from the index $i+2$, the values of $\sigma$ corresponding $i'$ consecutive positions all equal to $1$. More precisely, $\sigma_{i+2}^{i+1+i'} = (1, 1, \dots 1)$.
\item The bits $s_{i+1}$ and $v_{i+1}$ are followed by a run of $i'$ $0$s, \textit{i.e.}, $\textbf{s}_{i+2}^{i+1+i'} = \textbf{v}_{i+2}^{i+1+i'} = \textbf{0}$.
\end{itemize} 
This setting is depicted in Figure~\ref{fig:lema_figure2a}. We proceed to characterize the conditions under which $|C_{m-i-i'-2}(\textbf{s}) \setminus C_{m-i-i'-2}(\textbf{v})|$ is minimized. As done before, we consider the cases $\sigma_{i+2+i'} = 1$ and $\sigma_{i+3+i'} \in \{ 0,2\}$ separately. 

When $\sigma_+ \in \{0,2 \}$, we assume that $s_+ = s_{-} = v_+ =  v_{-} = b$. The set equality of interest reads as:
\begin{align*}
& \begin{rcases}
    \begin{dcases}
\{{c(\textbf{s}_1^i), 0, 0^{i'},s_+\}}, \\
\{{c(\textbf{s}_2^i), 0, 0^{i'}, 01 \}}, \\
\{{c(\textbf{s}_3^i), 0, 0^{i'}, 01^2 \}} \\
\{{c(\textbf{s}_4^i), 0, 0^{i'}, 01^3 \}} \\
\qquad \qquad \vdots \\
\{{c(\textbf{s}_{i'+2}^i), 0, 0^{i'}, 01^{i'+1} \}} \\
\{{c(\textbf{s}_{m-i+1}^m),1, 1^{i'}, 1-s_+\}},  \\
\{{c(\textbf{s}_{m-i+1}^{m-1}),1, 1^{i'}, 01\}},  \\
\{{c(\textbf{s}_{m-i+1}^{m-2}),1, 1^{i'}, 0^21\}},  \\
\{{c(\textbf{s}_{m-i+1}^{m-3}),1, 1^{i'}, 0^31\}},  \\
\qquad \qquad \vdots \\
\{{c(\textbf{s}_{m-i+1}^{m-i'-1}), 1, 1^{i'}, 0^{i'+1}1 \}}
\end{dcases} 
\end{rcases}  =
\begin{rcases}
    \begin{dcases}
\{{c(\textbf{v}_1^i), 1, 0^{i'},v_+\}}, \\
\{{c(\textbf{v}_2^i), 1, 0^{i'}, 01 \}}, \\
\{{c(\textbf{v}_3^i), 1, 0^{i'}, 01^2 \}} \\
\{{c(\textbf{v}_4^i), 1, 0^{i'}, 01^3 \}} \\
\qquad \qquad \vdots \\
\{{c(\textbf{v}_{i'+2}^i), 1, 0^{i'}, 01^{i'+1} \}} \\
\{{c(\textbf{v}_{m-i+1}^m),0, 1^{i'}, 1-v_+\}},  \\
\{{c(\textbf{v}_{m-i+1}^{m-1}),0, 1^{i'}, 01\}},  \\
\{{c(\textbf{v}_{m-i+1}^{m-2}),0, 1^{i'}, 0^21\}},  \\
\{{c(\textbf{v}_{m-i+1}^{m-3}),0, 1^{i'}, 0^31\}},  \\
\qquad \qquad \vdots \\
\{{c(\textbf{v}_{m-i+1}^{m-i'-1}), 0, 1^{i'}, 0^{i'+1}1 \}}
	\end{dcases} 
\end{rcases}.  
\end{align*}

\begin{figure*}[h!]
\centering
  \includegraphics[scale= 0.45]{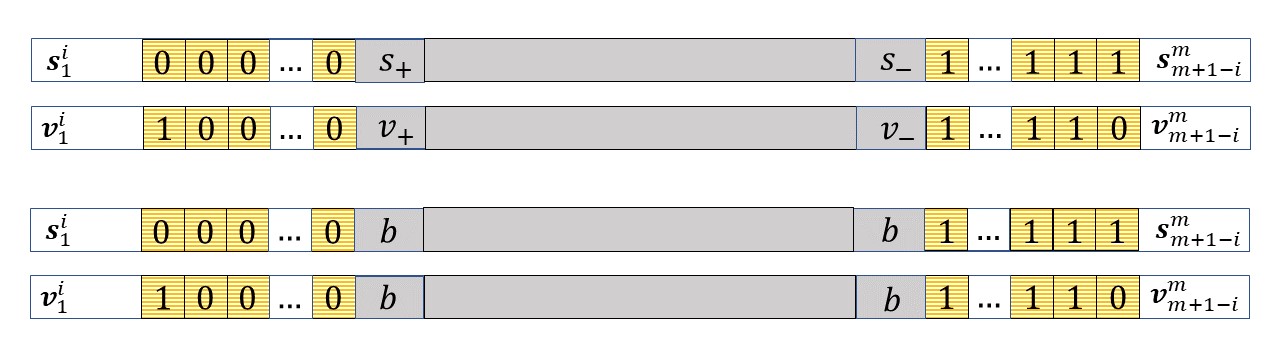}
  \caption{Illustration of the setup for determining the bits $s_+, s_-, v_+ \text{ and }v_- $ under the conditions that the bits $s_{i+1}$ and $v_{i+1}$ are followed by a run of $i'$ $0$s, and $\sigma_{i+2}^{i+1+i'} = (1, 1, \dots , 1)$. The second pair of strings illustrates the setting for which $\sigma_{i+2+i'} \in \{ 0,2\} $, and $s_+ = s_- = v_+ = v_- = b$.}
\label{fig:lema_figure2a}
\vspace{-0.08in}
\end{figure*}

Using the same line of reasoning as presented earlier, one can show that it suffices to focus on two reduced set equalities, namely
\begin{equation*}
\begin{rcases}
    \begin{dcases}
\{{c(\textbf{s}_1^i), 0^{i'+1},s_+\}}, \\
\{{c(\textbf{s}_1^i), 0^{i'+1}1 \}}
\end{dcases} 
\end{rcases}
=
\begin{rcases}
    \begin{dcases}
\{{c(\textbf{v}_1^i),  0^{i'}1,v_+\}}, \\
\{{c(\textbf{v}_{1}^i), 1^{i'+2} \}} 
\end{dcases} 
\end{rcases},
\end{equation*}
and
\begin{align*}
& \begin{rcases} 
    \begin{dcases}
    \{{c(\textbf{s}_{m-i+1}^{m}), 01^{i'+1}\}},\\
\{{c(\textbf{s}_{m-i+1}^m), 1^{i'+1}, 1-s_+\}} 
\end{dcases}
\end{rcases}  =
\begin{rcases}
    \begin{dcases}
    \{{c(\textbf{v}_{m-i+1}^{m}), 0^{i'+2} \}}, \\
\{{c(\textbf{v}_{m-i+1}^m), 01^{i'}, 1-v_+\}}
	\end{dcases} 
\end{rcases}.
\end{align*}

The possible values of $|C_{m-i-i'-2}(\textbf{s}) \setminus C_{m-i-i'-2}(\textbf{v})|$ are summarized in Table~\ref{tab:yellow3}.

We now turn our attention to the case $\sigma_{i+i'+2} \in \{ 0,2\}$. Again, let $b= s_+ = s_- = v_+ = v_-$. It suffices to consider the following sets:
\begin{equation*}
\begin{rcases}
    \begin{dcases}
\{{c(\textbf{s}_1^i), 0^{i'+1}, b \}}, \\
\{{c(\textbf{s}_1^i), 0^{i'}, b^2  \}}
\end{dcases} 
\end{rcases}
=
\begin{rcases}
    \begin{dcases}
\{{c(\textbf{v}_1^i), 0^{i'}1, b \}}, \\
\{{c(\textbf{v}_{2}^i), b^2, 1^{i'+1} \}}
\end{dcases} 
\end{rcases}
\end{equation*}
and
\begin{equation*}
\begin{rcases}
    \begin{dcases}
\{{c(\textbf{s}_{m-i+1}^m), 1^{i'+1}, b \}},  \\
\{{c(\textbf{s}_{m-i+1}^{m}), 1^{i'}, b^2 \}}
\end{dcases} 
\end{rcases}
=
\begin{rcases}
    \begin{dcases}
\{{c(\textbf{v}_{m-i+1}^m), 01^{i'}, b \}},  \\
\{{c(\textbf{v}_{m-i+1}^{m-1}), 0^{i'+1}, b^2 \}}
\end{dcases} 
\end{rcases}.
\end{equation*}
The possible values of $|C_{m-i-i'-2}(\textbf{s}) \setminus C_{m-i-i'-2}(\textbf{v})|$ are summarized in Table~\ref{tab:yellow4}.

From the above analysis we can conclude that exactly one of the following two conditions holds: 
\begin{enumerate}
\item The strings $\textbf{s}$ and  $\textbf{v}$ satisfy $\textbf{s}_{i+2}^{i+t+1} = \textbf{v}_{i+2}^{i+t+1} = \textbf{0}$ and $\sigma_{i+1}^{i+t+1} = (1,1, \dots,1)$. Their corresponding composition multisets $C_{m-i-1}, C_{m-i-2}, \dots, C_{m-i-t}, C_{m-i-t-1}$ each differ in exactly $2$ compositions.
\item The strings $\textbf{s}$ and $\textbf{v}$ satisfy $\textbf{s}_{i+2}^{i+1+i'} = \textbf{v}_{i+2}^{i+1+i'} = \textbf{0},$ $\sigma_{i+2}^{i+2+i'} = (1,1, \dots,1)$, and $(s_{i+i'+2},v_{i+i'+2}) = (1,0),$ where $t > i' \geq 0$. Their corresponding composition multisets $C_{m-i-1}, C_{m-i-2}, \dots, C_{m-i-i'-1}, C_{m-i-i'-2}$ each differ in exactly $2$ compositions.  
\end{enumerate} 
Figure~\ref{fig:lema_figure2} illustrates the observations. The longest substring such that $(s_{i+1}, v_{i+1}) =(0,1)$, $(s_{i+2} , v_{i+2})=(0,0)$ , $\dots$, $(s_{i+i'+1}, v_{i+i'+1})=(0,0)$ and $\sigma_{i+2}^{i+i'+2} = (1,\dots,1)$ is depicted by a horizontal block in Figure~\ref{fig:lema_figure2}. 
The bits $s_{i+i'+2}$, $s_{m-i-i'-1}$, $v_{i+i'+2}$, $v_{m-i-i'-1}$ that terminate the $00\dots 0$ (in \textbf{s}) and $10\dots 0$ (in \textbf{v}) substrings in the prefix and the $1\dots 1 1$ (in \textbf{s}) and $1\dots 10$ (in \textbf{v}) substrings in the suffix are represented by vertical shades in Figure~\ref{fig:lema_figure2}.  

\begin{figure*}[h!]
\centering
  \includegraphics[scale= 0.45]{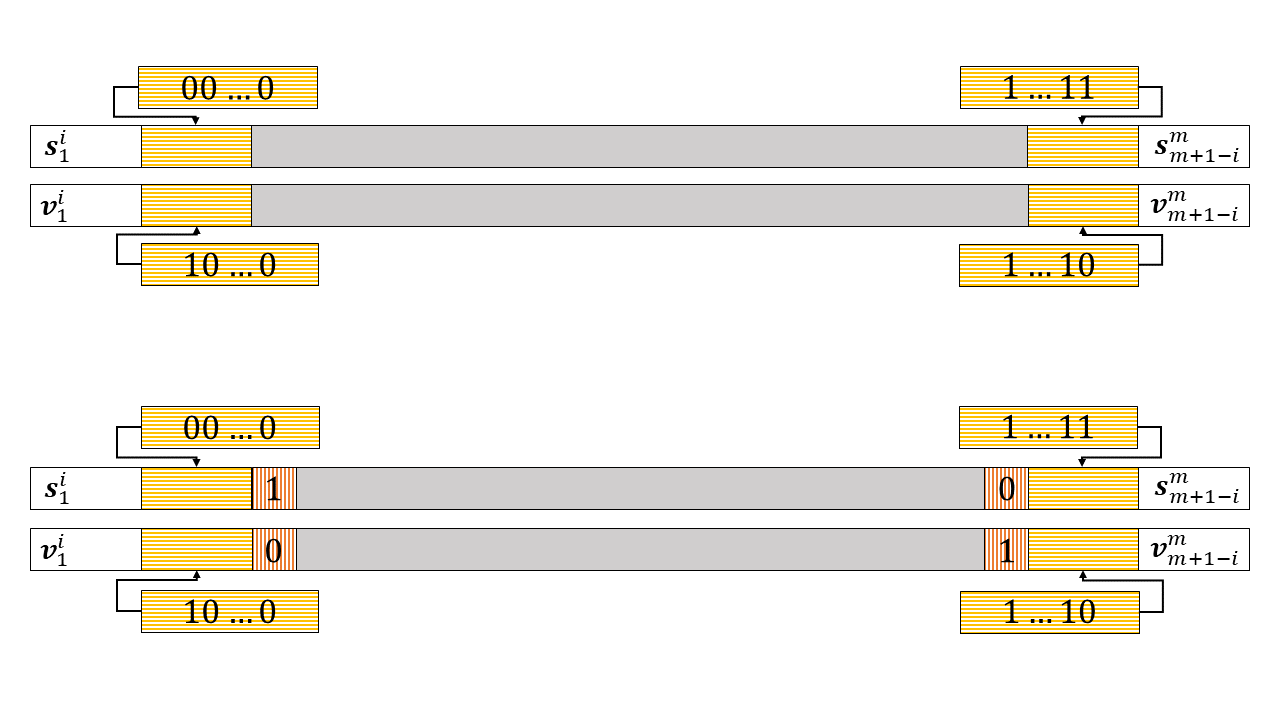}
  \caption{Illustration of the procedure for determining the set $\mathcal{V}_{\textbf{s}}$ based on several special cases. For the first case, we have $\textbf{s}_{i+2}^{i+t+1} = \textbf{v}_{i+2}^{i+t+1} = \textbf{0}$ and $\textbf{s}_{m-i-t}^{m-i-1} = \textbf{v}_{m-i-t}^{m-i-1} = \textbf{1}$. For the second case, there exist two identical substrings $\textbf{s}_{i+2}^{i+1+i'} = \textbf{v}_{i+2}^{i+1+i'} = \textbf{0}$ of length $t > i' \geq 0$ each and it holds that $(s_{i+i'+2}, v_{i+i'+2}) = (1,0)$.} 
\label{fig:lema_figure2}
\vspace{-0.08in}
\end{figure*}

Assume that the running reconstructions of the distinct strings $\textbf{s}$ and $\textbf{v}$ are as depicted in the second pair of blocks in Figure~\ref{fig:lema_figure2}. In the next step, illustrated in Figure~\ref{fig:lema_figure3}, we extend the prefixes and suffixes and identify the conditions under which $|C_{m-i-i'-3}(\textbf{s}) \setminus C_{m-i-i'-3}(\textbf{v})|$ is minimized. The results are summarized in 
Tables~\ref{tab:red1} and~\ref{tab:red2}.
\begin{figure*}[h!]
\centering
  \includegraphics[scale= 0.45]{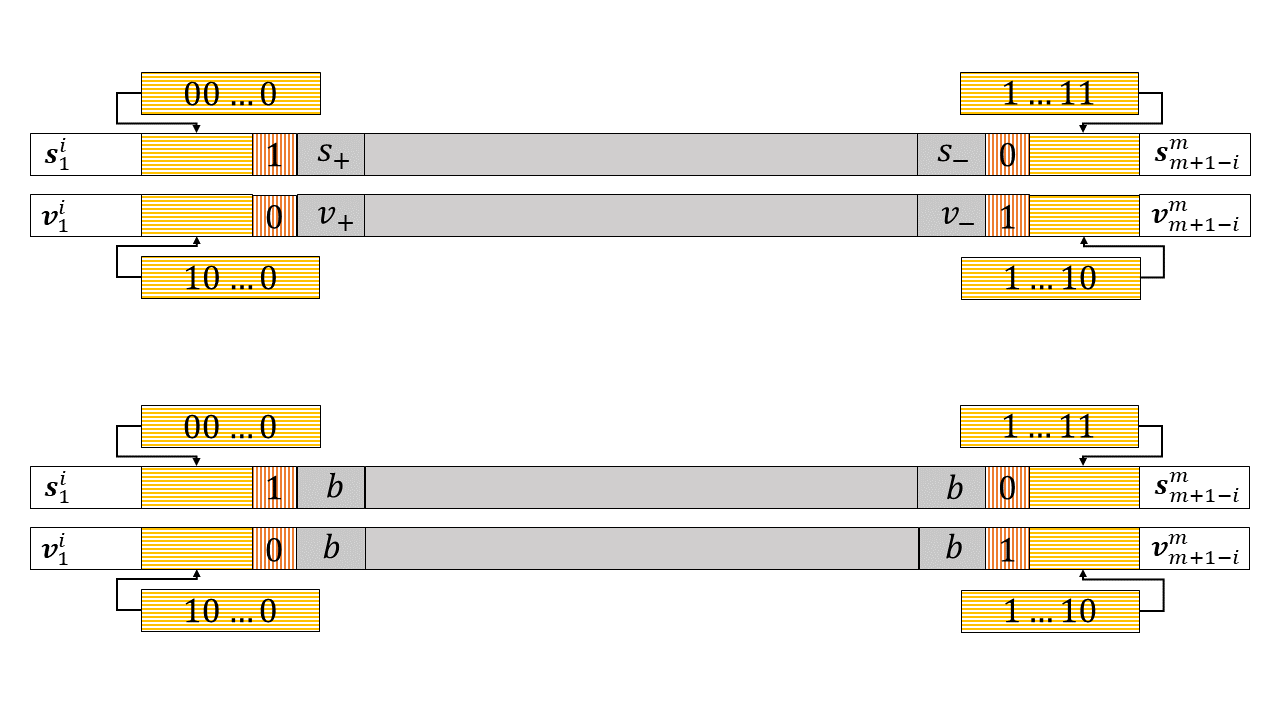}
  \caption{Illustration of the reconstruction step following the one depicted in Figure~\ref{fig:lema_figure2}. The first pair of strings corresponds to the case $\sigma_+ =1$, while the second pair of strings corresponds to the case $\sigma_+ \in \{ 0,2\} $ and $s_+ = s_- = v_+ = v_- = b$.}
  \label{fig:lema_figure3}
\vspace{-0.08in}
\end{figure*} 
In this step, we examine the bits $s_{i+i'+r+3}, v_{i+i'+r+3},$ $s_{m-i-i'-r-2}$ and $v_{m-i-i'-r-2}$.\\ 

Assume that $$\textbf{s}_{i+i'+3}^{i+i'+r+2} = \textbf{v}_{i+i'+3}^{i+i'+r+2} = b_1 \dots b_r,$$ where $r > 0$ and $r=0$ corresponds to a string of length $0$. We have $$\textbf{s}_{m-i-i'-r-1}^{m-i-i'-2} = \textbf{v}_{m-i-i'-r-1}^{m-i-i'-2} = \bar{b}_r \dots \bar{b}_1,$$ where
$$\bar{b}_i = 
\begin{cases}
b_i, \text{ if } \sigma_i \neq 1, \\
1- b_i, \text{ if } \sigma_i = 1, 
\end{cases} \text{ for all } 1 \leq i \leq r. 
$$ 
Such a structure is illustrated in Figure~\ref{fig:lema_figure3b}.

\begin{figure*}[h!]
\centering
  \includegraphics[scale= 0.45]{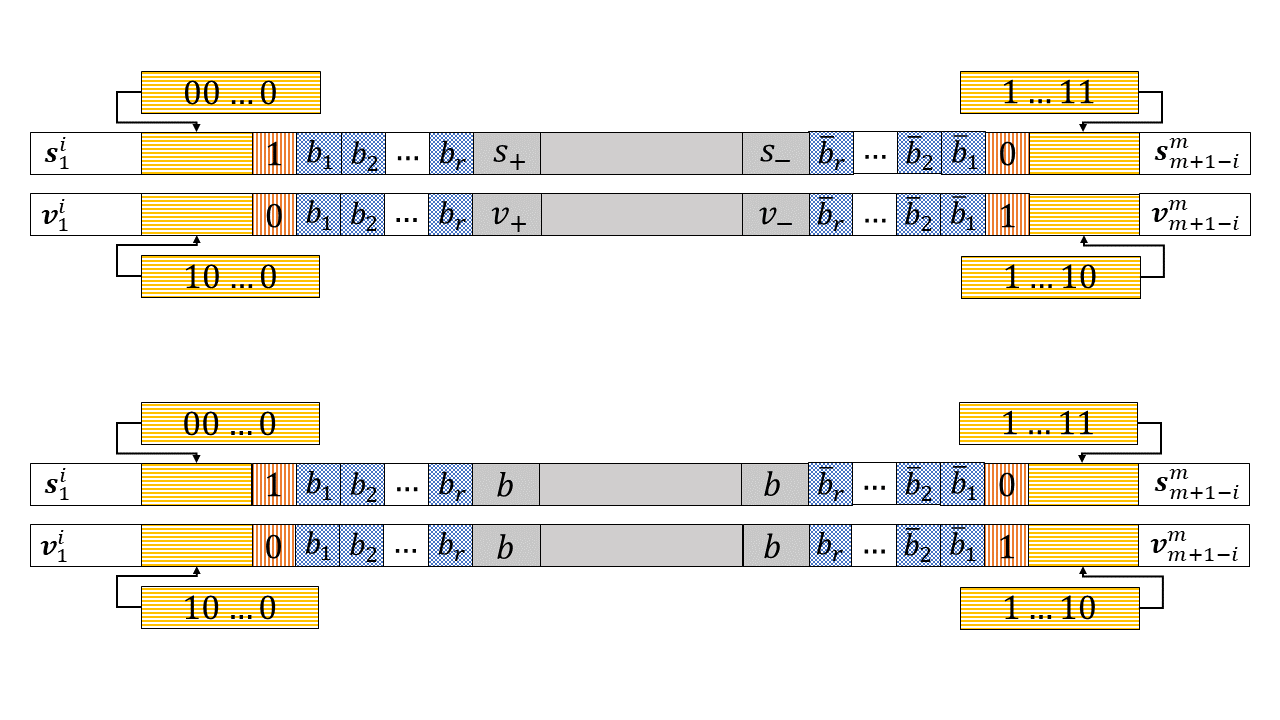}
  \caption{Two pairs of strings explaining how to extend the partially reconstructed strings illustrated in Figure~\ref{fig:lema_figure3}. The $r$ bits that follow the substring $00\dots 0 \, 1$ in $\textbf{s}$ are equal to the corresponding $r$ bits in $\textbf{v}$. For all $ r \geq i \geq 1$, $\bar{b}_i = b_i$ or $\bar{b}_i = 1- b_i$. The first pair corresponds to $\sigma_+ =1$, while the second pair corresponds to $\sigma_+ \in \{ 0,2\} $.}
\label{fig:lema_figure3b}
\vspace{-0.08in}
\end{figure*}

For the case $(s_+ , v_+) \not = (0,1)$, it is straightforward to see using arguments similar to the ones previously described that the possible set differences are as listed in Tables~\ref{tab:blue1} and~\ref{tab:blue2}.

For the case $(s_+ , v_+) = (0,1)$ depicted in Figure~\ref{fig:lema_figure3a}, the conditions that ensure that the composition multisets of $\textbf{s}$ and $\textbf{v}$ differ by at most $2$ introduce the restrictions $b_1, \dots , b_r = 1 \dots 1$ and $\bar{b}_1, \dots , \bar{b}_r = 0 \dots 0$. 

\begin{figure*}[h!]
\centering
  \includegraphics[scale= 0.45]{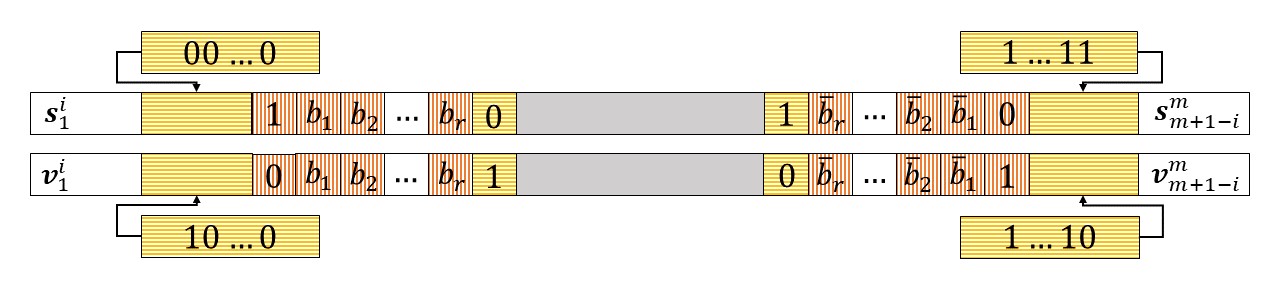}
  \caption{Conditions on the values of $b_i$ and $\bar{b}_i$ for all $i$ such that \textcolor{black}{$ r \geq i \geq 1$} that ensure that the partially reconstructed strings from the previous step can be compatibly extended when $\sigma_+ =1$ and $(s_+ = 0, v_+=1)$. }
\label{fig:lema_figure3a}
\vspace{-0.08in}
\end{figure*}

We now extend the description of the set $\mathcal{V}_{\textbf{s}}$ illustrated in Figure~\ref{fig:lema_figure2} as shown in Figure~\ref{fig:lema_figure4}.

Given a pair of distinct strings depicted in the second row of Figure~\ref{fig:lema_figure2}, one of the conditions must hold:
\begin{itemize}
\item The reconstructed prefix of \textbf{s} is followed by a substring $b_1 b_2 \dots b_r$ that is shared by the two strings and is such that the length of the substrings $00\dots 0 \, 1 \, b_1 b_2 \dots b_r$ (in \textbf{s}) and $10\dots 0 \, 0 \, b_1 b_2 \dots b_r$ (in \textbf{v}) in the prefixes equals $t+1$. In this case, each pair of composition multisets in $C_{m-i-1}, C_{m-i-2}, \dots, C_{m-i-t}, C_{m-i-t-1}$ differs in exactly $2$ compositions.
\item The reconstructed prefix in \textbf{s} is followed by the substring $1\dots 1 \, 0$ and the reconstructed prefix in \textbf{v} is followed by the substring $1\dots 1 \, 1$. The length of the substrings $00\dots 0 \, 11\dots 1 \, 0$ and $10\dots 0 \, 01\dots 1 \, 1$ is equal to some $0<j'<t$. In this case, each pair of composition multisets in $C_{m-i-1}, C_{m-i-2}, \dots, C_{m-i-j'+1}, C_{m-i-j'}$ also differs in exactly $2$ compositions.
\end{itemize} 

\begin{figure*}[h!]
\centering
  \includegraphics[scale= 0.45]{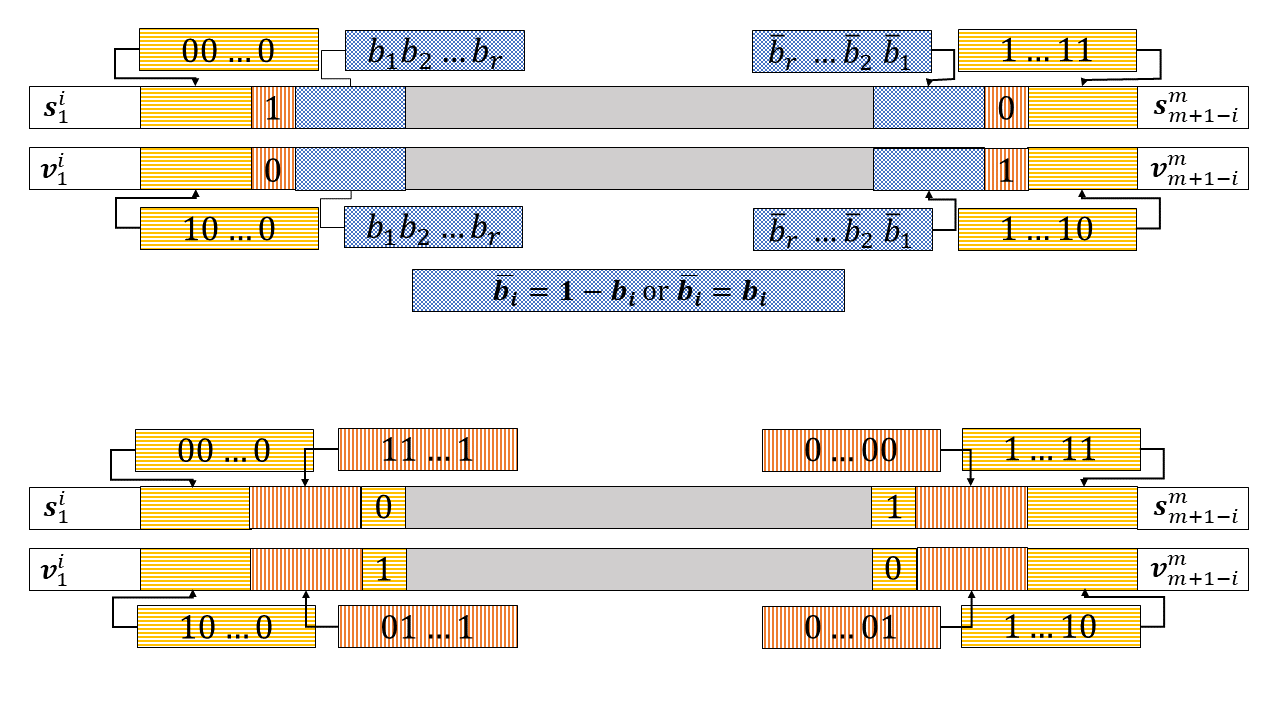}
  \caption{The procedure for constructing the set $\mathcal{V}_{\textbf{s}}$ for several special cases of $\sigma$ values. The first pair depicts the setting in which $\textbf{s}$ and $\textbf{v}$ share a substring $b_1 b_2 \dots b_r$ that follows the substring $00 \dots 0 \, 1$ in \textbf{s} and $10 \dots 0 \, 0$ in \textbf{v}. The length of the substrings $00\dots 0 \, 1 \, b_1 b_2 \dots b_r$ and $10\dots 0 \, 0 \, b_1 b_2 \dots b_r$ in the prefix of \textbf{s} and \textbf{v}, respectively, equals $t+1$. The second pair depicts a setting in which the substring $11\dots 1 \, 0$ follows the $00 \dots 0$ substring in \textbf{s} and the substring $01\dots 1 \, 1$ follows the $10 \dots 0$ substring in \textbf{s}. Here, the lengths of the substrings $00 \dots 0 \, 11\dots 1 \, 0$ in \textbf{s} and $10 \dots 0 \, 01\dots 1 \, 0$ in \textbf{v} may be less than $t+1$.}
  \label{fig:lema_figure4}
\vspace{-0.08in}
\end{figure*}

The bits that were most recently reconstructed in Figure~\ref{fig:lema_figure4} reestablish the initial problem we started with and the analysis henceforth parallels our previous discussion. The pertinent explanations are summarized in Figures~\ref{fig:lema_figure5} and~\ref{fig:lema_figure6}.

\begin{figure*}[h!]
\centering
  \includegraphics[scale= 0.45]{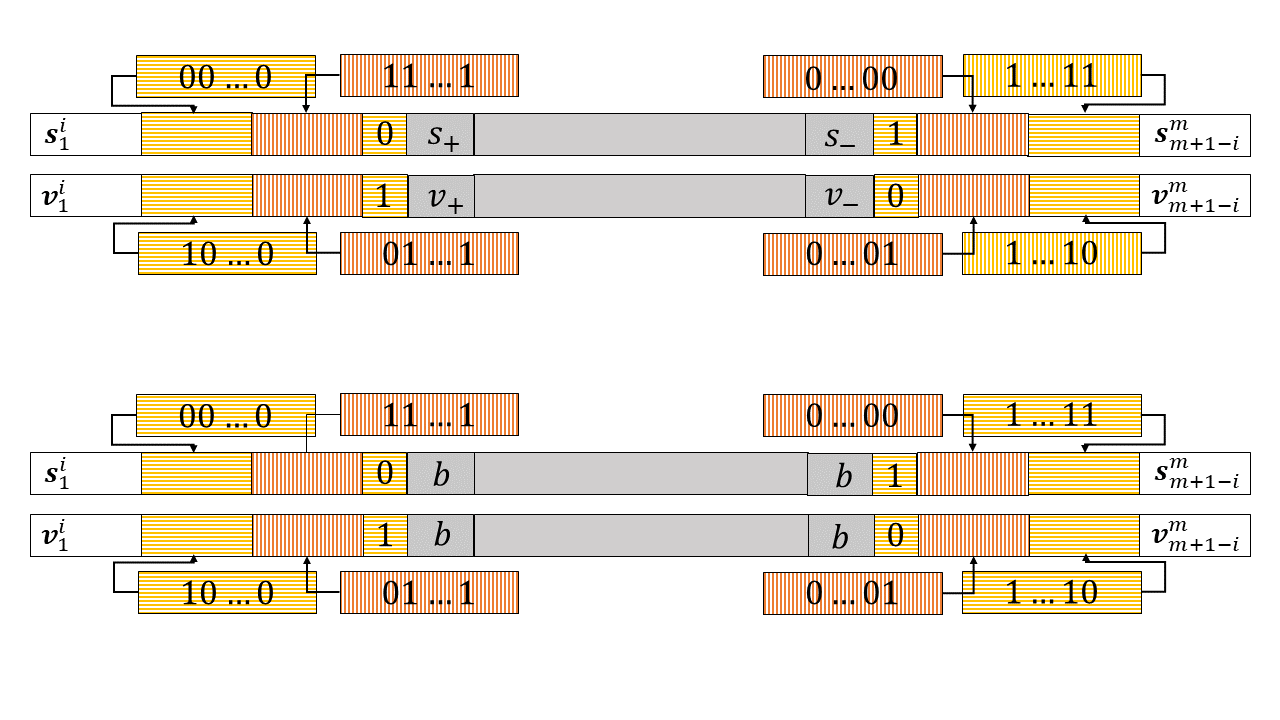}
  \caption{Two pairs of strings explaining how to extend the partially reconstructed strings illustrated in Figure~\ref{fig:lema_figure4}. The first pair corresponds to the case $\sigma_+ =1$, while the second pair corresponds to the case $\sigma_+ \in \{ 0,2\} $ and $s_+ = s_- = v_+ = v_- = b$.}
   \label{fig:lema_figure5}
\vspace{-0.08in}
\end{figure*}
\begin{figure*}[h!]
\centering
  \includegraphics[scale= 0.45]{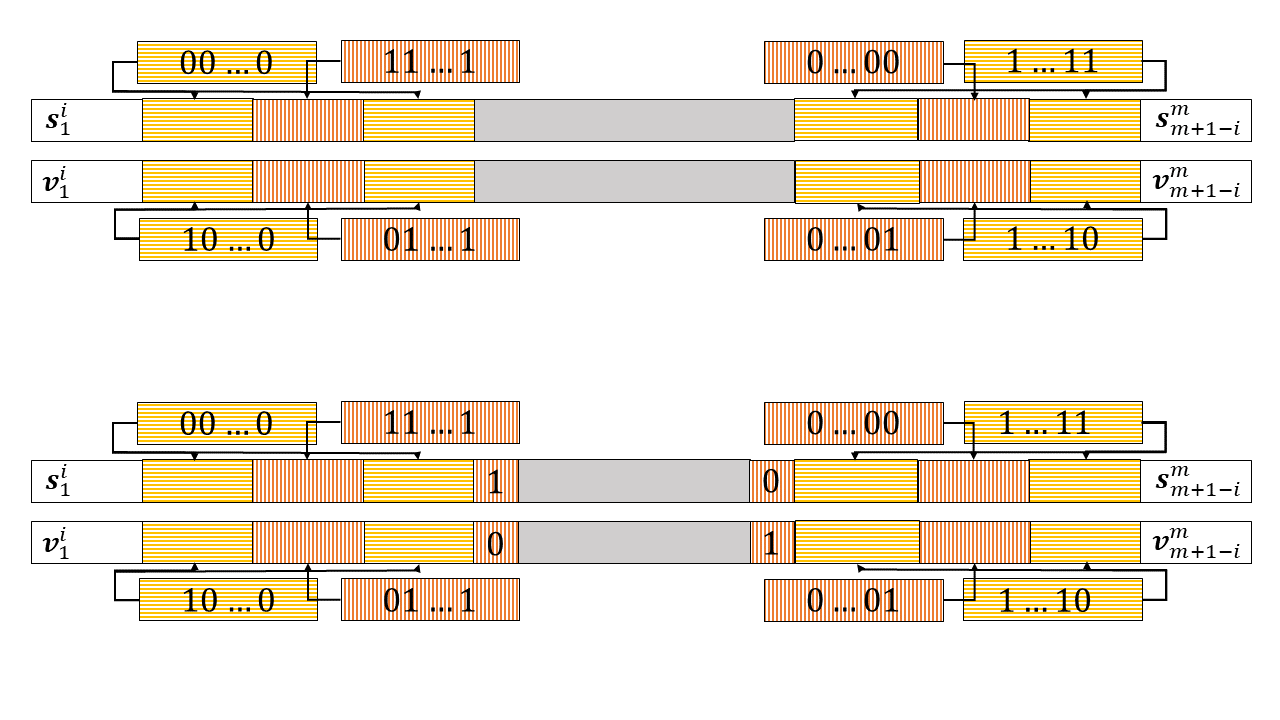}
  \caption{The partial structure of strings in $\mathcal{V}_{\textbf{s}}$ as inferred by the previous analysis and the setup shown in Figure~\ref{fig:lema_figure5}.}
   \label{fig:lema_figure6}
\vspace{-0.08in}
\end{figure*}

Combining the results of all the intermediary steps allows us to describe the set $\mathcal{V}_{\textbf{s}}$ as satisfying one of the two conditions:
\begin{itemize}
\item The string \textbf{s} and a string $\textbf{v} \in \mathcal{V}_{\textbf{s}}$ share a prefix-suffix pair that is followed by a certain number of alternating substrings $00\ldots 0$ and $11\ldots 1$ (in \textbf{s}) and alternating substrings $10\ldots 0$ and $01\ldots 1$ (in \textbf{v}) in the prefixes. The length of the alternating substrings may vary as described in the analysis, and the substrings are induced by $\sigma$ values equal to $1$. The last of the alternating substrings in the prefixes (equal to either  $11\ldots 1$ of  $01\ldots 1$) is followed by a shared substring. The number of bits in the previously described substrings equals $t+1$. The corresponding composition multisets $C_{m-i-1}, C_{m-i-2}, \dots, C_{m-i-t}, C_{m-i-t-1}$ of the string \textbf{s} and $\textbf{v} \in \mathcal{V}_{\textbf{s}}$ differ in exactly $2$ compositions.

\item The string \textbf{s} and a string $\textbf{v} \in \mathcal{V}_{\textbf{s}}$ share a prefix-suffix pair that is followed by a certain number of alternating substrings $00\ldots 0$ and $11\ldots 1$ (in \textbf{s}) and alternating substrings $10\ldots 0$ and $01\ldots 1$ (in \textbf{v}) in the prefixes. The length of the alternating substrings may vary as described in the analysis. The last of the alternating substrings in the prefixes (equal to either  $11\ldots 1$ or $01\ldots 1$) is followed by either the substring $00\ldots 0$ (in \textbf{s}) or $10\ldots 0$ (in \textbf{v}). The number of bits covered by these cases totals $t+1$ and all underlying values of $\sigma$ are equal to $1.$ The corresponding composition multisets $C_{m-i-1}, C_{m-i-2}, \dots, C_{m-i-t}, C_{m-i-t-1}$ of the string \textbf{s} and $\textbf{v} \in \mathcal{V}_{\textbf{s}}$ differ in exactly $2$ compositions.
\end{itemize}
Figure~\ref{fig:lema_figure6} summarizes the structure of the set $\mathcal{V}_{\textbf{s}}$ and concludes our proof.

\begin{figure*}[h!]
\centering
  \includegraphics[scale= 0.45]{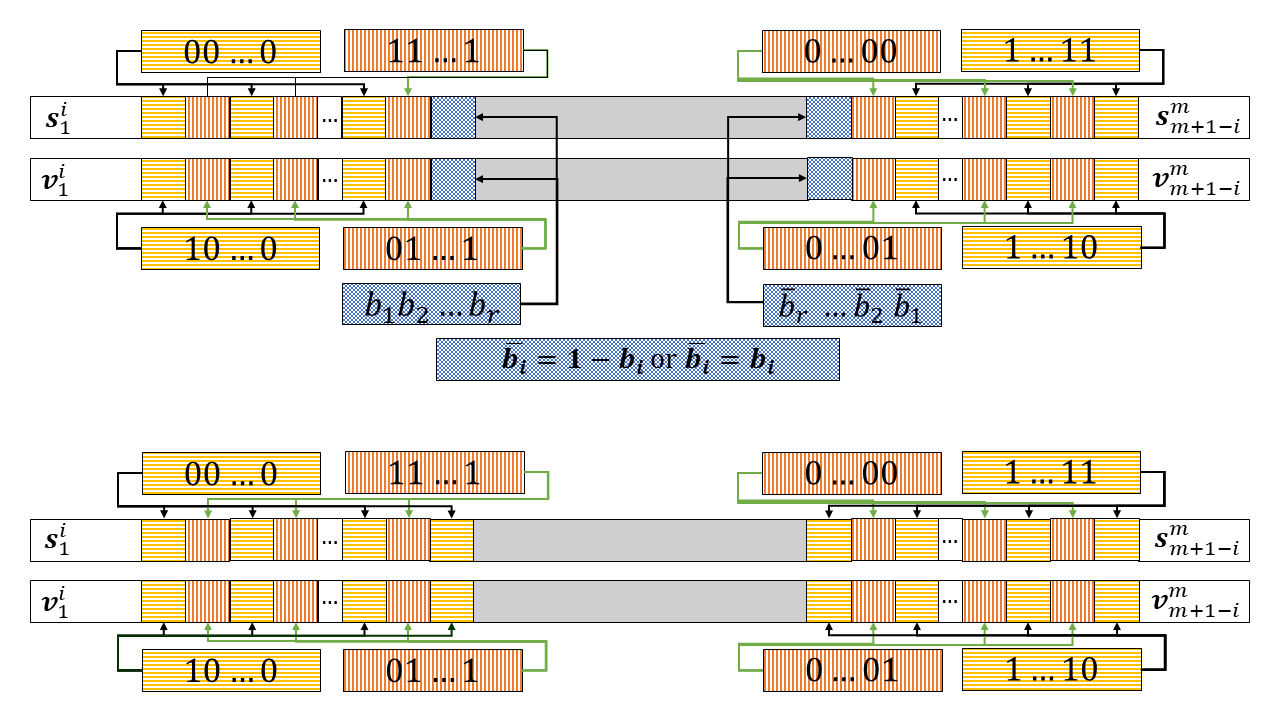}
  \caption{The structure of the strings $\textbf{v} \in \mathcal{V}_{\textbf{s}}$ that are closest to a given string $\textbf{s}$. 
 The first pair of strings illustrates the case where the substrings $00\dots 0$ and $11\dots 1$ in the prefix of \textbf{s}, and $10\dots 0$ and $01\dots 1$ in the prefix of \textbf{v} occur in pairs, ending with a shared substring $b_1 b_2 \dots b_r$. The second pair illustrates the case where the substrings $00\dots 0$ and $11\dots 1$ in the prefix of \textbf{s}, and $10\dots 0$ and $01\dots 1$ in the prefix of \textbf{v} occur in pairs ending with the substring $00\dots 0$ in the prefix of \textbf{s} and $10\dots 0$ in the prefix of \textbf{v}. Note that the substrings may not be of equal lengths. With the exception of the final shared substring (i.e., shared substring $b_1 b_2 \dots b_r$ for the first pair, and the substrings $00\dots 0$ in \textbf{s} and $10\dots 0$ in \textbf{v} for the second pair) all strings are of length at least one. The number of bits in the prefix of each string obtained by alternating the above substrings equals $t+1$. }
   \label{fig:lema_figure7}
\vspace{-0.08in}
\end{figure*}

\begin{table}[H]
\centering
\begin{tabular}{|l|l|l|l|l|}
\hline
$s_+$           & 0 & 0 & 1 & 1 \\ \hline
$v_+$           & 0 & 1 & 0 & 1 \\ \hline
Set Difference & 2 & 4 & 2 & 4 \\ \hline
\end{tabular} 
\caption{Four different assignments of values for $(s_+,v_+)$ and the resulting set cardinalities $|C_{m-i-2}(\textbf{s}) \setminus C_{m-i-2}(\textbf{v})|$.} \label{tab:yellow1}
\end{table}

\begin{table}[H]
\centering
\begin{tabular}{|l|l|l|}
\hline 
$b$ & 0 & 1 \\ \hline
Set Difference        & 3 & 3 \\ \hline 
\end{tabular} 
\caption{Cardinalities of the set difference $|C_{m-i-2}(\textbf{s}) \setminus C_{m-i-2}(\textbf{v})|$ for different choices of $b$.} \label{tab:yellow2}
\end{table}

\begin{table}[H]
\centering
\begin{tabular}{|l|l|l|l|l|}
\hline
$s_+$           & 0 & 0 & 1 & 1 \\ \hline
$v_+$           & 0 & 1 & 0 & 1 \\ \hline
Set Difference & 2 & 4 & 2 & 4 \\ \hline
\end{tabular}
\caption{Values of $|C_{m-i-i'-2}(\textbf{s}) \setminus C_{m-i-i'-2}(\textbf{v})|$ for the setting where the bits $s_{i+1}$ and $v_{i+1}$ are followed by a run of $i'$ $0$s, and where $\sigma_{i+2}^{i+1+i'} = (1, 1, \dots 1)$ and $\sigma_+ = 1$.}  \label{tab:yellow3}
\end{table} 

\begin{table}[H]
\centering
\begin{tabular}{|l|l|l|}
\hline 
$b$ & 0 & 1 \\ \hline
Set Difference        & 3 & 3 \\ \hline 
\end{tabular} 
\caption{The possible values of $|C_{m-i-i'-2}(\textbf{s}) \setminus C_{m-i-i'-2}(\textbf{v})|$ for the case that the bits $s_{i+1}$ and $v_{i+1}$ are followed by a run of $i'$ $0$s, and such that $\sigma_{i+2}^{i+1+i'} = (1, 1, \dots 1)$ and $\sigma_+ \in \{ 0,2\}$.} \label{tab:yellow4}
\end{table}

\begin{table}[H]
\centering
\begin{tabular}{|l|l|l|l|l|}
\hline
$s_+$           & 0 & 0 & 1 & 1 \\ \hline
$v_+$           & 0 & 1 & 0 & 1 \\ \hline
Set Difference & 2 & 2 & 4 & 2 \\ \hline
\end{tabular} 
\caption{The possible values of $|C_{m-i-i'-3}(\textbf{s}) \setminus C_{m-i-i'-3}(\textbf{v})|$ for $\sigma_+ =1$ corresponding to the four binary assignments for $(s_+, v_+)$ under the setting illustrated in Figure~\ref{fig:lema_figure3}.} \label{tab:red1}
\end{table} 

\begin{table}[H]
\centering
\begin{tabular}{|l|l|l|}
\hline 
$b$ & 0 & 1 \\ \hline
Set Difference        & 2 & 2 \\ \hline 
\end{tabular} 
\caption{The possible values of $|C_{m-i-i'-3}(\textbf{s}) \setminus C_{m-i-i'-3}(\textbf{v})|$ for different choices of $b$ under the setting illustrated in Figure~\ref{fig:lema_figure3}.} \label{tab:red2}
\end{table}

\begin{table}[H]
\centering
\begin{tabular}{|l|l|l|l|}
\hline
$s_+$           & 0 &  1 & 1 \\ \hline
$v_+$           & 0 &  0 & 1 \\ \hline
Set Difference & 2 &  4 & 2 \\ \hline
\end{tabular} 
\caption{The possible values of  $|C_{m-i-i'-r-2}(\textbf{s}) \setminus C_{m-i-i'-r-2}(\textbf{v})|$ for $\sigma_+ =1$ corresponding to three binary assignments $(s_+, v_+)$ under the setting illustrated in Figure~\ref{fig:lema_figure3b}.} \label{tab:blue1}
\end{table}

\begin{table}[H]
\centering
\begin{tabular}{|l|l|l|}
\hline 
$b$ & 0 & 1 \\ \hline
Set Difference        & 2 & 2 \\ \hline 
\end{tabular} 
\caption{Cardinalities of the set difference $|C_{m-i-i'-r-2}(\textbf{s}) \setminus C_{m-i-i'-r-2}(\textbf{v})|$ for different choices of $b$ under the setting illustrated in Figure~\ref{fig:lema_figure3b}.} \label{tab:blue2}
\end{table}


\ifCLASSOPTIONcaptionsoff
  \newpage
\fi


\end{document}